\newtheorem{remark}{Remark}
\newtheorem{example}{Example}
\newtheorem{cexample}{Counterexample}
\newtheorem{prop}{Proposition}
\newtheorem{thm}{Theorem}
\newtheorem{assump}{Assumption}
\newtheorem{corollary}{Corollary}
\newtheorem{definition}{Definition}
\newtheorem{lemma}{Lemma}
\newtheorem{fact}{Fact}
\newcommand{\uu}{{\mathbf u}}
\newcommand{\vv}{{\mathbf v}}
\newcommand{\uup}{\mathbf{u}^{\prime}}
\newcommand{\RR}{\mathbb{R}}
\newcommand{\calS}{\mathcal{S}}
\newcommand{\So}{\mathcal{S}_{o}}
\newcommand{\Su}{\mathcal{S}_{u}}
\newcommand{\Io}{I_t}
\newcommand{\Ho}{J_t}
\newcommand{\Hts}{J_{t_{0}+s}}
\newcommand{\rtd}{\tilde{r}_{t}}
\newcommand{\Kt}{K_t}
\newcommand{\rd}{\tilde{r}}
\newcommand{\ud}{\tilde{u}}
\newcommand{\bpi}{\bar{\pi}}
\newcommand{\leqc}{\preccurlyeq}
\newcommand{\prob}{{\mathbb P}}
\newcommand{\KK}{\mathbb{K}}
\newcommand{\KKt}{\mathbb{K}_t}
\newcommand{\KKts}{\mathbb{K}_{t_{0}+s}}
\newcommand{\KKK}{\tilde{\mathbb{K}}}
\newcommand{\KKtts}{\tilde{\mathbb{K}}_{t_{0}+s}}
\newcommand{\ww}{\mathbf{w}}
\newcommand{\fil}{{\mathcal F}}
\newcommand{\HH}{\mathcal{H}}
\newcommand{\Acal}{\mathcal{A}}
\newcommand{\ind}{\mathds{1}}
\let\hat\widehat
\let\tilde\widetilde
\newcommand{\EE}{\mathbb{E}}
\newcommand{\dd}{\stackrel{d}{=}}
\newcommand{\pD}{\delta_{\text{P}}}
\newcommand{\sD}{\delta_{\text{S}}}
\newcommand{\done}{\delta_{1}}
\newcommand{\dtwo}{\delta_{2}}
\newcommand{\PiInf}{\pi_{\infty}}
\DeclareMathOperator*{\argmax}{arg\,max}
\let\hat\widehat
\let\tilde\widetilde
\newcolumntype{L}[1]{>{\raggedright\let\newline\\\arraybackslash\hspace{0pt}}p{#1}}
\newcolumntype{C}[1]{>{\centering\let\newline\\\arraybackslash\hspace{0pt}}p{#1}}
\newcolumntype{R}[1]{>{\raggedleft\let\newline\\\arraybackslash\hspace{0pt}}p{#1}}
\title{Optimal Parallel Sequential Change Detection under Generalized Performance Measures}
\author{
  Zexian Lu\\
  University of Minnesota\vspace{0.5cm}\\  
Yunxiao Chen\\
London School of Economics and Political Science\vspace{0.5cm}\\
  Xiaoou Li\\
  University of Minnesota}
\date{}
\begin{document}
\maketitle

\doublespacing

\begin{abstract}
This paper considers the detection of change points in parallel data streams, a problem widely encountered when analyzing large-scale real-time streaming data. Each stream may have its own change point, at which its data has a distributional change. With sequentially observed data, a decision maker needs to declare whether changes have already occurred to the streams at each time point.
Once a stream is declared to have changed, it is deactivated permanently so that its future data will no longer be collected.
This is a compound decision problem in the sense that the decision maker may
want to optimize certain compound performance metrics
that concern all the streams as a whole. Thus, the decisions are not independent for different streams. Our
contribution is three-fold.  First, we propose a general framework for compound performance metrics that includes the ones considered in the existing works as special cases and introduces new ones that connect closely with the performance metrics for single-stream sequential change detection and large-scale hypothesis testing. 
Second, data-driven decision procedures are developed under this framework. Finally, optimality results are established for the proposed decision procedures. 
The proposed methods and theory are evaluated by simulation studies and a case study. \end{abstract}

\noindent
Keywords: 
Large-scale inference, multiple change detection, sequential analysis, multiple hypothesis testing

\section{Introduction}

Sequential change detection aims to detect distributional changes in sequentially observed data. Classical methods focusing on change detection in a single data stream   have received wide applications in 
various fields, including engineering, education, medical diagnostics and finance \cite{shewhart1931economic,page1954continuous, shiryaev1963optimum, roberts1966comparison}. Several metrics have been proposed for evaluating their performance, under which optimality theory has been established  \cite{lorden1971procedures, pollak1985optimal, pollak1987average, moustakides1986optimal}; see \cite{lai2001sequential, poor2008quickest, basseville1993detection} for a review. 

The emergence of large-scale real-time
streaming data has motivated multi-stream sequential change detection problems. One problem concerns detecting a common change shared by a subset of the streams \cite{chan2017optimal,chen2015graph,chen2019sequential,mei2010efficient,xie2013sequential,fellouris2016second}. This problem is commonly seen in surveillance applications, where each data stream corresponds to a sensor, and the change point is caused by a failure in a subset of the sensors. A related problem, which has received much attention recently  and will be the focus of the current work, considers a setting that each stream has its own change point \cite{chen2020false,chen2019compound,chen2020item, chen2016non}.  More specifically, a decision maker needs to declare whether a change has already occurred for each stream at each time point. 
Once a stream is declared to have changed, it is deactivated permanently so that its data is no longer collected. This problem will be referred to as a parallel sequential change detection problem.

The parallel sequential change detection problem is widely encountered in the real world. For example, \cite{chen2020false,lai2008quickest} consider an application to a multichannel dynamic spectrum access problem for cognitive radios. Each cognitive radio channel corresponds to a data stream, and the change corresponds to the time at which the primary user of the channel starts to transmit
signals. A false discovery rate (FDR)   is proposed to measure the proportion of false discoveries (i.e., unused channels) among the ones detected as occupied by primary users. \cite{chen2019compound,chen2020item} consider monitoring an item pool for standardized educational testing. In this application, each stream corresponds to a test item that is reused in multiple test administrations, and the change point corresponds to the time at which the item is leaked to the public. A certain false non-discovery rate (FNR) is proposed to measure the proportion of leaked items among the non-detections (i.e., items that are not detected as having leaked). There are many other potential applications, such as the detection of credit card fraud \cite{dal2017credit}, for which each stream corresponds to a credit card account, and the change point corresponds to a fraud event. 
 
We note that it is often not a good idea to run a single-stream change detection procedure independently on individual streams. This is because the decision maker may want to control a certain compound risk that concerns all the streams as a whole, such as the FDR and FNR measures. Consequently, each decision at one time point requires all the up-to-date information from all the streams, making the parallel sequential change detection a challenge. 
 
Several methods have been proposed in \cite{chen2020false,chen2019compound,chen2020item} to control the above compound risk measures in parallel sequential change detection problems. However, these methods, along with their theoretical properties, are established under relatively restrictive model assumptions and for specific risk measures.
Specifically, 
\cite{chen2020false}  proposes a method based on the Benjamini-Hochberg method \cite{benjamini1995controlling} for FDR control and establishes its asymptotic results. However, no results are given on the method's optimality. Under a Bayesian setting, \cite{chen2019compound} and \cite{chen2020item} propose methods for controlling a certain FNR measure at all time points. As shown in \cite{chen2019compound}, under a geometric change point model and assuming the same pre- and post- change distribution for all the streams, this method maximizes the expected number of remaining streams at all time points while controlling the FNR to be no greater than a pre-specified tolerance level. However, it is unclear whether this optimality theory can be extended to more general models and other sensible risk measures.

The parallel sequential change detection problem is also closely related to the sequential multiple testing problem. The latter can be viewed as a special case when a stream can only change
at the beginning of the process or never change.  Several methods have been proposed for the sequential multiple testing problem, controlling  compound risks. Specifically, \cite{bartroff2014sequential}, 
\cite{bartroff2020sequential}, and 
\cite{song2019sequential} consider controlling a familywise error rate, an FDR/FNR, and a generalized familywise error rate, respectively. While the risk measures may be relevant, their methods and theoretical results can hardly be extended to the current change detection problem.  

This work provides a unified decision theory framework for parallel sequential change detection problems under general classes of change point models and performance measures. A computationally efficient sequential method is developed under the proposed framework. Two optimality criteria are introduced, for which the proposed method is shown to be optimal under suitable conditions.

Our contributions are summarized below:
\begin{itemize}
    \item 
We propose a general class of performance metrics to evaluate the sequence procedures. This class of metrics not only includes existing metrics 
as special cases (e.g.,   FDR  \cite{chen2020false} and the local FNR metric  \cite{chen2019compound}) but also introduces new metrics that are closely related to the  metrics for single-stream change detection and multiple hypothesis testing. See Section~\ref{sec:example} and Section~\ref{sec:aggregated-risk-theory} for more examples.

Thanks to the generality of these performance metrics, the proposed method can also be used to solve problems considered in  \cite{bartroff2020sequential,song2019sequential,song2017asymptotically} for sequential multiple testing. See Section~\ref{sec:aggregated-risk-theory} for a discussion on the connections with several recent works \cite{song2017asymptotically,song2019sequential,bartroff2018multiple,bartroff2014sequential,bartroff2020sequential}.
\item  We propose a sequential procedure (Algorithms~\ref{alg:gen}--\ref{alg:dec2}) that is easy-to-implement and is data-driven. 
It automatically adapts to various model settings when controlling the risk measures to a pre-specified tolerance level, without requiring additional Monte Carlo simulation or bisection search commonly used in sequential problems to determine decision boundaries (see, e.g., \cite{bartroff2008modern}).

\item We provide two optimality criteria for the parallel sequential change detection problem, including the local and uniform optimalities. The local optimality concerns the maximization of a utility measure in the next step, and uniform optimality refers to the maximization of the utility measure at all time.
We show that the proposed method is locally optimal under very mild conditions and uniformly optimal under stronger conditions (Theorems~\ref{thm:one-step optimality for general risk}--\ref{thm:uniform optimality}).

We note that the precise characterization of the conditions for uniform optimality requires the analysis of stochastic processes on a special non-Euclidean space. To this end, we develop new analytical tools for comparing vectors and stochastic processes with different dimensions, possibly due to early stopping. This analytical tool may be useful in the theoretical analysis of other sequential decision problems.
\end{itemize}

The remainder of the paper is organized as follows. In Section~\ref{sec:problem-setup}, we describe the change point models, the class of parallel sequential change detection methods, a general class of performance metrics, and the optimality criteria. We also provide examples of generalized performance metrics.
In Section~\ref{sec:method}, we propose a parallel change detection method (Algorithms~\ref{alg:gen} and \ref{alg:dec}) and provide a simplified version of this method under mild conditions on the performance measures (Algorithms~\ref{alg:simp} and \ref{alg:dec2}).  Section~\ref{sec:theory} provides theoretical results for the proposed methods including their optimality properties and the connection with recent works.  In Sections~\ref{sec:sim}  and~\ref{sec:case-study},
 we evaluate the performance of the proposed method through simulation studies and a case study. Concluding remarks and future directions are given in
 Section~\ref{sec:conc}. For space reasons, all the proofs of the theoretical results and part of the simulation results are postponed to the Appendix in the supplementary material.

\section{Problem Setup}\label{sec:problem-setup}

\subsection{Model Assumptions}
Consider the case where there are $K \geq 2$ data streams, and let $\langle K \rangle$ denote the set $\{1,\cdots, K\}$. At each time epoch $t\in \mathbb{Z}_+=\{1,2,\cdots\}$, an observation $X_{k,t}$ is obtained from the $k$th data stream, for $k\in \langle K \rangle$. Each data stream $k$ is associated with a change point $\tau_k\in \{0\}\cup\{\infty\}\cup\mathbb{Z}_+$ for $k\in\langle K \rangle$. %
Under a Bayesian parallel change point model, the change points $\tau_1,\cdots,\tau_K$ are assumed to be independent and identically distributed (i.i.d.) with
\begin{equation}\label{eq:prior}
\prob(\tau_{k}=s)=\pi_s
\end{equation}
for $s\in \{0\}\cup\{\infty\}\cup\mathbb{Z}_+$ and $k\in\langle K \rangle$.
Given $(\tau_1,\cdots,\tau_K)$, $\{X_{k,t}\}_{t\in\mathbb{Z}_+}$ are independent for $k\in\langle K \rangle$, and have conditional density 
\begin{equation}\label{eq:model}
   X_{k,t}|\tau_k, \{X_{k,s}\}_{1\leq s\leq t-1} \sim 
   \begin{cases}
       &p_{k,t} \text{ if } t\leq \tau_k \\
       & q_{k,t} \text{ if } t\geq \tau_k + 1
   \end{cases}
\end{equation}
with respect to some baseline measure.  That is, 
$X_{k,t}$ are independent given the change points, and follow  pre- and post- change density functions $p_{k,t}$ and $q_{k,t}$, respectively. In particular, $\tau_k=\infty$ corresponds to the case where the change point never occurs to the $k$th stream. That is,  $X_{k,t}$ follows the pre-change density function $p_{k,t}$ for all $t\in \mathbb{Z}_+$.

\subsection{Parallel Sequential Change Detection Procedures}\label{sec:class-of-decisions}
A decision maker sequentially observes data from the parallel data streams and determines whether change points have already occurred to these data streams at each time. Once a change point is declared, the corresponding data stream is deactivated and its data are no longer collected. This decision process is characterized by an index set process $S_t\subset \langle K \rangle$ for $t\in\mathbb{Z}_+$, where $k\in S_t$ if and only if the decision maker has not declared a change in the $k$th stream at time $t$ yet (i.e., stream $k$ is active at time $t$). 
Specifically, the
available information at time $t$ is contained in the historical data  $H_t=\{\{X_{k,s}\}_{k\in S_s, 1\leq s\leq t}, \{S_s\}_{1\leq s\leq t}\}$ and, equivalently, the induced information $\sigma$-field $\mathcal{F}_t=\sigma(H_t)$. At each time $t$, the decision maker selects the index set $S_{t+1}\subset S_t$ based on the current information $\mathcal{F}_t$. That is, $S_{t+1}$ is measurable with respect to $\mathcal{F}_t$. Denote by $\mathcal{D}$ the set of all such compound sequential decisions. A graphical illustration of the decision process is given in Figure~\ref{fig:flowchart}.
\begin{figure}[]
\begin{center}
\includegraphics[scale = 0.27]
{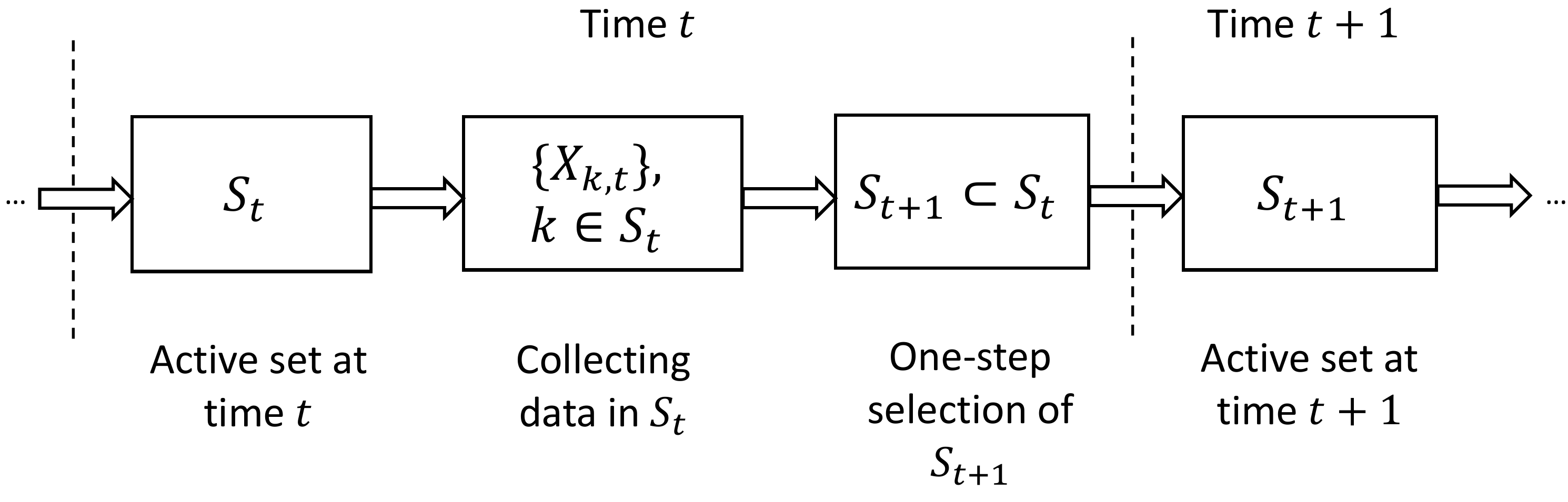}
\end{center}
\caption{A flowchart of a sequential decision in $\mathcal{D}$}
\label{fig:flowchart}
\end{figure}

We make a few remarks on the information filtration and the decision process. First, we require $S_1=\langle K \rangle$, meaning that all the streams are initially active and data from all the streams are collected at time $1$. Second, $\{S_s\}_{1\leq s\leq t}$ is measurable with respect to $\mathcal{F}_t$, meaning that the decision history is tracked in the current information. Third, $\{X_{k,s}\}_{k\in S_s, 1\leq s\leq t}$ is measurable with respect to $\mathcal{F}_t$, indicating that $X_{k,s}$ is observed if and only if stream $k$ is active at time $s$ and $s\leq t$ (i.e., $k\in S_s$). Fourth, $S_{t+1}$ is required to be measurable with respect to $\mathcal{F}_t$, meaning that the decision maker selects the active streams for time $t+1$ based on all the information available at time $t$. Lastly, $S_{t+1}$ is required to be a subset of $S_t$ for all $t\in\mathbb{Z}_+$, meaning that the deactivation of streams is permanent. That is, no future data will be collected at a stream, once a change is declared at that stream.

\begin{remark} \label{remark:connection-with-Poor}
Although described in a different way, the class of sequential decisions defined above is equivalent to that in \cite{chen2020false}. In \cite{chen2020false}, a parallel sequential procedure is defined through a sequence of stopping times $\{T_q\}_{q\geq 1}$  along with a sequence of index sets $\{D_q\}_{q \geq 1}$. At each stopping time $T_q$, a decision maker declares change points for streams in $D_q$ and exclude those streams from the future decision process. Then, the sequences  $\{T_q\}_{q\geq 1}$ and $\{D_q\}_{q \geq 1}$ can be represented using the sequence $\{S_t\}_{t\geq 1}$ as $T_q = \min\{t > T_{q-1} : S_{t} \setminus S_{t+1} \neq \varnothing \}$ and $D_q = S_{T_q} \setminus S_{T_{q}+1}$ where  $T_0 = 0$, $q = 1$. An example where $K=3$ is given in Figure~\ref{fig:plot_ST} for a graphical illustration.

\end{remark}
\begin{figure}[t]
\begin{center}
\includegraphics[scale = 0.29]
{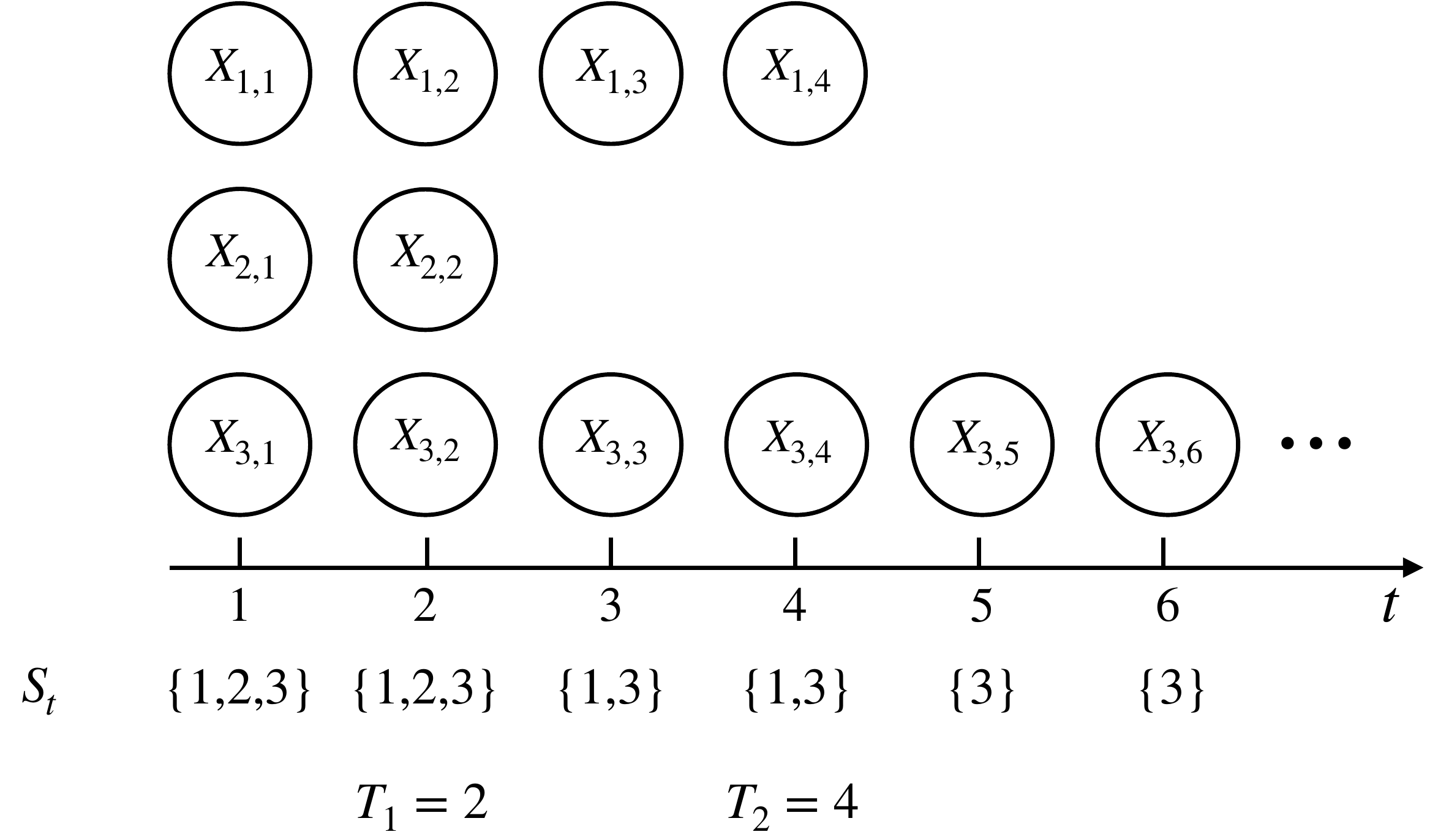}
\end{center}
\caption{An example of a parallel sequential change detection procedure where $K=3$, stream 2 is deactivated at time $t=2$,  stream 3 is deactivated at time $t=3$, and no more stream is deactivated before $t=6$. As a result, $S_1=S_2=\{1,2,3\}$, $S_3=S_4=\{1,2\}$, $S_5=S_6=\{1\}$. Correspondingly, $T_1 = 2$, $T_2 = 4$, and $T_3>6$. }
\label{fig:plot_ST}
\end{figure}

Another way to understand a compound sequential change detection procedure is to view it as a sequence of mappings $\delta=(d_1,d_2,\cdots,d_t,\cdots)$, where each $d_t$ determines $S_{t+1}$ according to the historical information $H_t$. That is, $d_t$ is a measurable function with respect to $\mathcal{F}_t$ and $S_{t+1}=d_t(H_t)$ satisfying that $d_t(H_t)\subset S_t$ for all $t\in\mathbb{Z}_+$.  %

\subsection{Generalized Performance Measures and Optimality Criteria}
Ideally, a perfect sequential change detection procedure collects all the pre-change streams in the set $S_t$ at each time point (i.e., $S_t=\{k:\tau_k\leq t\}$). However, this is not achievable by any sequential decision because $\tau_k$s are unobserved. To this end, we consider a general class of performance measures to  compare the performance of different sequential decisions.
We assume each sequential decision is associated with a risk process, denoted by $\{R_t\}_{t\in\mathbb{Z}_+}$, and a utility process, denoted by $\{U_t\}_{t\in\mathbb{Z}_+}$. The risk process is used to quantify the loss of a sequential decision at time $t$ due to the false detections of pre-change streams and/or the non-detection of post-change streams, while the utility process is used to reward the correct decisions. Our goal is to find a good sequential decision that has a relatively small $R_t$ and a relatively large $U_t$ at every time point. 
Below, we first give formal statements of the optimality criteria, and then introduce several examples of $R_t$ and $U_t$ in Section~\ref{sec:example}, followed by additional discussions.

Let
\begin{equation}
W_{k, t}=\prob(\tau_{k}<t \mid \mathcal{F}_{t})
\end{equation}
be the posterior probability that the change point $\tau_k$ has already occurred at time $t$ for the $k$-th stream given the information up to time $t$. Under the Bayesian setting, $W_{k,t}$ is also the best estimator (under the squared error loss) of $\ind(\tau_k<t)$, where $\ind(\cdot)$ denotes the indicator function.  A simple iterative updating rule is derived to calculate  $W_{k,t}$ at  each time, which will be discussed in Section~\ref{sec:method}.

Throughout the paper, we consider risk and utility processes that are functions of $(\{W_{k,t}\}_{k\in S_t}, S_t, S_{t+1})$. That is, there are pre-specified functions $\{r_t\}_{t\in\mathbb{Z}_+}$ and $\{u_t\}_{t\in\mathbb{Z}_+}$ such that
\begin{equation}\label{eq:risk}
  R_{t} = r_{t}(\{W_{k,t}\}_{k\in S_{t} }, S_{t}, S_{t+1}), 
\end{equation}
and 
\begin{equation}\label{eq:utility}
  U_{t} = u_{t}(\{W_{k,t}\}_{k\in S_{t}}, S_{t}, S_{t+1}).
\end{equation}

Let $\alpha\in\mathbb{R}$ denote a pre-specified tolerance level, and let
$$
\mathcal{D}_{\alpha} =\left\{ \delta \in \mathcal{D}: R_{t}(\delta) \leq \alpha \text { a.s., for all } t=1,2, \cdots\right\},
$$
where $R_t(\delta)$ denotes the risk process associated with the sequential decision $\delta$, and  $\mathcal{D}$ denotes the  entire set of parallel sequential detection procedures described in Section~\ref{sec:class-of-decisions}. The set $\mathcal{D}_{\alpha}$ collects all sequential decisions that control the risk process to be no greater than the tolerance level $\alpha$ at all time points.

We note that risk  process $\{R_t\}_{t\in\mathbb{Z}_+}$ is an adaptive stochastic process with respect to the information filtration $\{\fil_t\}_{t\in\mathbb{Z}_+}$. 
It is easy to verify that $\mathbb{E}[R_{t}(\delta)]\leq\alpha$ for  $\delta\in\mathcal{D}_{\alpha}$. That is, the expected risk is also controlled below or equal to the same tolerance level. In addition, any weighted average of $R_t(\delta)$ across different time points are also controlled. We provide additional discussion and theoretical results regarding this point in Section~\ref{sec:aggregated-risk-theory}.

The following regularity assumptions over the risk and utility functions are imposed throughout the paper. %
\begin{assump}\label{assump:non-empty}
For any $\{W_{k,t}\}_{k\in S_{t} }$  and $S_t$, $\min_{S\in\{\emptyset,S_t\}}r_{t}(\{W_{k,t}\}_{k\in S_{t} }, S_{t}, S)\leq\alpha$.
In addition, the utility function $u_t$ is bounded at each time $t$. 
\end{assump}
The assumption on $r_t$ guarantees that the class of sequential decisions controlling the risk process at a pre-specified level is non-empty, i.e., $\mathcal{D}_{\alpha}\neq \emptyset$. The boundedness assumption on $u_t$ is a mild condition to ensure the integrability of the utility process.

Given a pre-specified tolerance level $\alpha$ and sequences of functions $\{r_t\}_{t\in\mathbb{Z}_+}$ and $\{u_t\}_{t\in\mathbb{Z}_+}$, we define two optimality criteria for  sequential decisions in $\mathcal{D}_{\alpha}$.
\begin{definition}[Uniform Optimality]\label{def:uniform-optimal}
A sequential decision $\delta^{*} \in \mathcal{D}_{\alpha}$ is called uniformly optimal if
$$
\mathbb{E}\left(U_{t}\left(\delta^{*}\right)\right)=\sup_{\delta \in \mathcal{D}_{\alpha}} \mathbb{E}\left(U_{t}(\delta)\right),
$$
for all $t\in\mathbb{Z}_+$,
where $U_t(\delta^*)$ and $U_t(\delta)$ denote the utility process associated with sequential decisions $\delta^*$ and $\delta$, respectively.
\end{definition}

\begin{definition}[Local Optimality]\label{def:local-optimal}
A sequential decision $\delta^{*}=(d^*_1,d^*_2,\cdots,d^*_t,\cdots) \in \mathcal{D}_{\alpha}$ is called locally optimal at  time $t$,  if
$$
\mathbb{E}(U_{t}(\delta^{*}))\geq 
\EE(U_{t}(\delta) )
$$
for any $\delta=(d_1,d_2,\cdots,d_t,\cdots)\in\mathcal{D}_{\alpha}$ satisfying $d_s= d^*_s$, for $s = 1, \ldots, t-1$.
\end{definition}

We make a few remarks on the above optimality criteria. First, in most applications, there is a trade-off between minimizing the risk and maximizing the utility. That is, a sequential decision that has relatively small risk tends to have relatively small utility at the same time. Thus, we define both uniform and local optimality through constrained optimization problems, where the  overall goal is to find a sequential decision so that its corresponding risk process is controlled to be no greater than the tolerance level while the expected utility is
no less than 
any other sequential decisions that control the risk process at the same level. 
Second, a uniformly optimal sequential decision has the largest expected utility among all decisions in $\mathcal{D}_{\alpha}$ at {\em every} time point. In contrast, a locally optimal sequential decision only has the largest expected utility at {\em a given time point} $t$ given the decisions at previous time points. Thus, uniform optimality is a stronger notion than local optimality. A sequential decision that is locally optimal at every time point does not necessarily imply that it is also uniformly optimal. In later sections, we show that locally optimal sequential decisions exist under very weak assumptions on the risk and utility measures, while uniformly optimal sequential decisions only exist under stronger assumptions of the change point model and the performance measures. Third, we assume the same tolerance level $\alpha$ for every time $t$ for ease of presentation. Our methods and theory can be easily extended to the class of sequential decisions whose risk is controlled at different levels at different time points. That is, $\{\delta\in\mathcal{D}: R_t(\delta)\leq \alpha_t\text{ for all }t\}$ for a sequence of constants $\alpha_t$. We can see this by redefining the risk process as 
$R_t-\alpha_t$ and replacing $\alpha_t$ by $0$.

\subsection{Examples of Generalized Performance Measures}\label{sec:example}

We start with several examples of performance measures in the forms of \eqref{eq:risk} and \eqref{eq:utility}, which are motivated by common risk measures in the literature of multiple hypotheses testing \cite{efron2004large,efron2001empirical,efron2012large,benjamini1995controlling}. All of the risk measures discussed in this section satisfy Assumption~\ref{assump:non-empty} for $\alpha\geq 0$.

For the consistency of notation, the sum over an empty set is defined to be $0$ (i.e., $\sum_{i\in \emptyset}a_i=0$), and the product over an empty set is defined to be $1$ (i.e., $\prod_{i\in\emptyset} a_i=1$).

\begin{example}[Local family-wise error rate (LFWER)]\label{eg:LFWER}
Consider the event
\begin{equation}
    E_{1,t} = \{\text{There exists  } k \in \langle K \rangle \text{ such that } \tau_k<t, k\in S_{t+1} \},
\end{equation}
which happens when at least one false non-detection occurs at time $t$. 
Because $E_{1,t}$ is not directly observed,  we consider the its posterior probability given the information up to time $t$,
\begin{equation}\label{eq:LFWER}
   \text{LFWER}_t:= \prob(E_{1,t}|\fil_t) = 1-\prod_{k\in S_{t+1}}(1-W_{k,t}).
\end{equation}

\end{example}
\begin{example}[Generalized local family-wise error rate (GLFWER)]\label{eg:GLFWER}
Given $m\geq 1$, we consider the event
\begin{equation} \label{eq:def-Emt}
    E_{m,t} = \{ |\{k \in \langle K \rangle \text{ such that } \tau_k<t, k\in S_{t+1} \}|\geq m\}.
\end{equation}
This event happens when false non-detections occur in at least $m$ data streams. Its posterior probability given information up to time $t$ is
\begin{align}\label{eq:GLFWER}
   &\text{GLFWER}_{m,t} : = \prob(E_{m,t}|\fil_t) \\
   =& 1-\sum_{j = 0}^{m-1} \sum\limits_{\substack{I \subset S_{t+1} \\ |I| = j}} \left(\prod_{i \in I} W_{i,t} \right)  \prod_{k\in S_{t+1} \setminus I}(1-W_{k,t}).
\end{align}
In addition, $\text{GLFWER}_{m,t} = 0$ if $S_{t+1}=\varnothing$.
\end{example}
Comparing \eqref{eq:LFWER} with \eqref{eq:GLFWER}, we can see that GLFWER extends LFWER by allowing for more false non-detections. 
Under a large-scale setting with many data streams, it may be more sensible to use GLFWER with its $m$ value chosen based on the total number of streams $K$ to achieve a balance between false detections and false non-detections. 
 Similar risk measures have been proposed for sequential multiple testing \cite{song2019sequential}.
\begin{example}[Local false non-discovery rate (LFNR)]\label{eg:LFNR}
    Local false non-discovery rate  (LFNR) is defined in \cite{chen2019compound}, which extends the concept of LFNR in multiple testing to parallel sequential change detection. It is defined as follows. First,  the false non-discovery proportion (FNP) is defined as
    \begin{equation}\label{eq:fnp}
    \text{FNP}_{t}
    :=\frac{\sum_{k \in S_{t+1}} \mathds{1}\left(\tau_{k}<t\right)}
    {|S_{t+1}| \vee 1}.
    \end{equation}
FNP describes the proportion of post-change streams among the active ones. Then, the local false non-discovery rate (LFNR) at time $t$ is defined as the Bayes estimator (i.e., posterior mean) of $\text{FNP}_t$ given information up to time $t$. That is,
\begin{equation} \label{eq:LFNR}
    \text{LFNR}_t := \EE (\text{FNP}_{t} \mid \fil_t) = \frac{\sum_{k \in S_{t+1}} W_{k, t}}
    {|S_{t+1}| \vee 1}.
\end{equation}
\end{example}
Compared with LFWER and GLFWER, LFNR depends on $W_{k,t}$s in a linear rather than multivariate polynomial form. In addition, LFNR is scalable under a large-scale setting  in the sense that the same tolerance level $\alpha\in (0,1)$ can be used as $K$ grows large.

\begin{example}[Local False Discovery Rate (LFDR)]\label{eg:LFDR}
False discovery proportion (FDP) and local false discovery rate (LFDR) are defined by replacing $\tau_k<t$ and $S_{t+1}$ with $\tau_k\geq t$ and $S_{t}\setminus S_{t+1}$ respectively in \eqref{eq:fnp} and \eqref{eq:LFNR}. That is,
\begin{equation} \label{eq:FDP}
    \text{FDP}_t := \frac{\sum_{k \in S_t \setminus S_{t+1}} \mathds{1}\left(\tau_{k} \geq t\right)} 
{|S_t \setminus S_{t+1}| \vee 1},
\end{equation}
and
\begin{equation}
\text{LFDR}_t := \EE (\text{FDP}_{t} \mid \fil_t) = \frac{\sum_{k \in S_t \setminus S_{t+1}}  (1- W_{k, t}) } {|S_t \setminus S_{t+1}| \vee 1}.
\end{equation}

\end{example}

Similar to LFNR, LFDR also has the appealing feature of scalability for large $K$. The difference between LFNR and LFDR lies in whether focusing on false detections or false non-detections. 

In \cite{chen2020false}, an aggregated version of false discovery rate (AFDR)\footnote{In \cite{chen2020false}, this risk measure is referred to as `false discovery rate (FDR)'. Here, we name it as AFDR to distinguish it from LFDR.} is considered, which can be viewed as the expectation of a weighted average of LFDR at different time points. More discussions on the connection between LFDR and AFDR will be provided in Section~\ref{sec:theory}.

Next, we provide two examples of performance measures motivated by single-stream sequential change detection. Denote by $N_k$ the detection time of the $k$th stream,
\begin{equation} \label{eq:Nk}
N_{k}=\sup \left\{t: k \in S_{t}\right\}.
\end{equation}
Note that $N_k$ plays a similar role as the stopping time in the standard single-stream sequential change detection problem. Indeed, $N_k$ is a stopping time with respect to $\{\fil_t\}_{t\in\mathbb{Z}_+}$ for all $k\in\langle K \rangle$.

\begin{example}[Incremental Average Run Length (IARL)]\label{eg:ARL}
We define the incremental run length (IRL) aggregated over different streams as
\begin{equation}
\begin{split}
   \text{IRL}_t: & = \sum_{k=1}^K \{\tau_k \wedge N_k\wedge (t+1)\} -  \sum_{k=1}^K \{\tau_k \wedge N_k\wedge t\} \\
   & = \sum_{k\in S_{t+1}}^K \ind(\tau_k>t)
\end{split}
\end{equation}
IRL indicates the total number of pre-change streams being used at a given time. We refer to its posterior mean as the incremental average run length (IARL), defined as
\begin{equation}
    \text{IARL}_t := \EE (\text{IRL}_t \mid \fil_t) = \sum_{k\in S_{t+1}} \{1-g(W_{k,t})\},
\end{equation}
where 
\begin{equation} \label{eq: tau leq t}
g(W_{k,t}) = \prob(\tau_k\leq t|\fil_t) = \bar{\pi}_{t}^{-1} \pi_{t} + \big(1- \bar{\pi}_{t}^{-1} \pi_{t} \big) W_{k, t},    
\end{equation}
$\bar{\pi}_{s}=\prob \big (\tau_{k} \geq s
\big)=\pi_{\infty}+\sum_{l=s}^{\infty}\pi_l$, and the proof for equation \eqref{eq: tau leq t} is given in Appendix~\ref{sec:proof-eqs}.
\end{example}

IRL and IARL are closely related to the average run length to false alarm (ARL2FA) that is commonly used to measure the propensity for making a false detection in a single-stream sequential change detection problem. Specifically, taking summation of $\text{IRL}_t$ over $t$, we obtain
\begin{equation}
\sum_{s=0}^{t-1} \text{IRL}_s = \sum_{k=1}^K \big( \tau_k\wedge N_k\wedge t \big),
\end{equation}
which is the total run length from different data streams up to the change point by time $t$. Moreover, we have
\begin{equation}
\mathbb{E}( \sum_{s=0}^{t-1} \text{IARL}_s) = \mathbb{E}(\sum_{s=0}^{t-1} \text{IRL}_s )
= \sum_{k=1}^K \mathbb{E} \big( \tau_k\wedge N_k\wedge t \big).
\end{equation}
Thus, the sum of the expected value of IARL across time leads to the total averaged run length up to the change point.

\begin{example}[Incremental Average Detection Delay (IADD)]\label{eg:detection-delay}
We define the incremental detection delay (IDD) aggregated over all the streams as
\begin{equation} \label{eq:IDD}
\begin{split}
    \text{IDD}_t: &=  \sum_{k = 1}^K  \{  (N_{k}\wedge (t+1) - \tau_{k} -1)_+ -  (N_{k}\wedge t - \tau_{k} -1)_+  \}\\
    &= \sum_{k\in S_{t+1}}\ind(\tau_k< t).
\end{split}
\end{equation}
IDD counts the total number of post-change streams that are active at a given time. We refer to its posterior mean as the incremental average detection delay (IADD), defined as
\begin{equation}\label{eq:IADD}
    \text{IADD}_t := \EE (\text{IDD}_{t} \mid \fil_t) = \sum_{k\in S_{t+1}} W_{k,t}.
\end{equation}

\end{example}
IDD and IADD are incremental-and-compound versions of detection delay and average detection delay (ADD), which are commonly used to measure false non-detection in single-stream sequential change detection \cite{tartakovsky2014sequential}. Specifically, by taking summation over $t$, we have
\begin{equation} \label{eq:sum-idd}
    \sum_{s=0}^{t-1} \text{IDD}_s = \sum_{k = 1}^K (N_{k}\wedge t - \tau_{k}-1)_+
\end{equation}
and
\begin{equation}
    \mathbb{E}\big(\sum_{s=0}^{t-1} \text{IADD}_s \big)  = \EE \big( \sum_{k = 1}^K  (N_{k}\wedge t - \tau_{k} -1)_+ \big).
\end{equation}

Among the above examples, LFWER, GLFWER, and LFNR  are error rates for false non-detections, LFDR  is an error rate for false detections, IARL estimates the number of pre-change streams that are active, and IADD estimates the number of post-change and active streams. Because a  small value of LFWER (or GLFWER/LFNR/IADD) and a large value of IARL (or minus LFDR) is desired, we could choose the risk process $R_{t}\in\{\text{LFWER}_t, \text{GLFWER}_t, \text{LFNR}_t,\text{IADD}_t\}$ and the utility process $U_t\in\{\text{IARL}_t, -\text{LFDR}_t\}$, or $R_t\in\{\text{LFDR}_t, -\text{IARL}_t\}$ and $U_t\in \{-\text{LFWER}_t, -\text{GLFWER}_t, -\text{LFNR}_t,-\text{IADD}_t\}$. Note that in the above examples, there is a trade-off between $R_t$ and $U_t$. That is, if one declares detection at more data streams, then the corresponding LFWER, GLFWER, LFNR, and IADD tend to be smaller and IARL and minus LFDR tend to be smaller as well. Thus, the optimality criteria (Definitions~\ref{def:uniform-optimal} and \ref{def:local-optimal}) formulated through constrained optimization are reasonable.

The choices of $R_t$ and $U_t$ should be application-driven. In practice, we suggest to choose the risk process $R_t$ with a known range so that the tolerance level is easy to specify. For example, LFWER, GLFWER, LFNR, and LFDR  represent certain probability/expected proportions that are known to be between $[0,1]$. Thus, they are sensible choices of $R_t$, for which setting the tolerance level $\alpha\in[0,1]$ is relatively straightforward.

\section{Proposed Sequential Decision Procedures}\label{sec:method}
In this section, we first provide a formula for computing the posterior probability $W_{k,t}=\prob(\tau_k<t|\fil_t)$, which is a key quantity in computing the risk and utility measures. Then, we present our proposed sequential decisions for controlling the risk process at a given level, followed by a simplified version of the algorithm to reduce the computational complexity.
\subsection{Recursive Formula for \texorpdfstring{$W_{k,t}$}{Wk,t}}
Recall $\pi_{s}=\prob(\tau_{k}=s)$ and $\bar{\pi}_{s}=\prob \big (\tau_{k} \geq s \big)$. Let
\begin{equation}
    Q_{k,t}
    = \bar{\pi}_{t}^{-1}\sum_{s=0}^{t-1} \pi_s \prod_{r=s+1}^{t} \frac{q_{k,r}\left(X_{k,r} \right)}{p_{k,r}\left(X_{k,r}\right)} \text{ with } Q_{k,0}=0.
\end{equation}
Given $Q_{k,t}$ and $X_{k,t+1}$, $Q_{k,t+1}$ can be computed using the recursive formula 
\begin{equation} \label{eq:rec2}
Q_{k, t+1} = \bar{\pi}_{t+1}^{-1}\big(\bar{\pi}_{t}Q_{k, t}+ \pi_{t}  \big) L_{k,t+1},
\end{equation}
where we define $L_{k,t+1}= q_{k,{t+1}}\left(X_{k,{t+1}}\right) / p_{k,{t+1}}(X_{k,{t+1}})$. Then, we obtain
\begin{equation} \label{eq:rec}
W_{k, 0}=0 \text{ and  } W_{k,t} =\frac{Q_{k, t}}{Q_{k, t}+ 1}.
\end{equation}

The above recursive equations \eqref{eq:rec2} and \eqref{eq:rec} are extensions of classic results in single-stream Bayesian sequential change detection problems \cite{polunchenko2012state}. Their rigorous justifications are given in Appendix~\ref{sec:proof-eqs}.

\subsection{Proposed Sequential Decision for Unstructured Risk and Utility}
We first propose a one-step selection rule to select $S_{t+1}$, given $S_t$ and $\{W_{k, t}\}_{k \in S_{t}}$ so that 
the risk $R_{t}$ is controlled to be no greater than $\alpha$. This one-step selection rule goes over all $2^{|S_t|}$ possible subsets of $S_t$, and then select the one which attains the highest utility $U_t$. Algorithm \ref{alg:gen} implements this idea.
\begin{algorithm}%
	\caption{One-step selection rule at time $t$.}
	\label{alg:gen}
	\begin{algorithmic}[1]
    	\STATE \textbf{Input:} Tolerance level $\alpha$, the current index set $S_{t}$, and posterior probabilities $\{W_{k, t}\}_{k \in S_{t}}$, where $W_{k, t}=\prob\left(\tau_{k}<t \mid \mathcal{F}_{t}\right)$ is computed according to \eqref{eq:rec}.
    	
        \STATE For all $S \subset S_{t}$, compute $\gamma_{S} = r_{t}(\{W_{k,t}\}_{k\in S_{t} }, S_{t}, S)$ and $\mu_{S} = u_{t}(\{W_{k,t}\}_{k\in S_{t}} S_{t}, S)$.
        
    	\STATE \textbf{Output:} 
    	\begin{equation*}
    	\begin{split}
    	        S_{t+1}=&\argmax_{S} \mu_{S} 
    	       \quad\text{ subject to }  \gamma_{S}\leq \alpha \text{ and }  S\subset S_t.\footnotemark[1]
    	\end{split}
    	\end{equation*}
	\end{algorithmic}
\end{algorithm}
\footnotetext[1]{If the solution is not unique, $S_{t+1}$ can be any one of the solutions.}
According to Assumption~\ref{assump:non-empty}, $\{S: \gamma_{S}\leq \alpha \text{ and }  S\subset S_t \}\neq \emptyset$. Thus,
$S_{t+1}$ in line 3 of the above algorithm is well-defined.
The next proposition states that
the above one-step selection rule can control the risk process at any given level.
\begin{prop} \label{prop:onestep}
Under Assumption~\ref{assump:non-empty}, the index set $S_{t+1}$ selected by Algorithm \ref{alg:gen} satisfies
$
R_t \leq \alpha 
$ a.s. 
\end{prop}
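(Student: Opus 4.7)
The claim is essentially a direct consequence of the construction of Algorithm~\ref{alg:gen}, and the main thing to verify is that the constrained argmax is well-defined and that its output automatically satisfies the stated risk bound. I will proceed in two short steps: first show the feasible set is non-empty (so $S_{t+1}$ exists), and then read off $R_t \leq \alpha$ from the definition of the risk.

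\textbf{Step 1: non-emptiness of the feasible set.} Fix $t \in \mathbb{Z}_+$ and condition on $\fil_t$, so that $S_t$ and $\{W_{k,t}\}_{k \in S_t}$ are given. I need to show that
\[
\mathcal{A}_t := \{ S \subset S_t : \gamma_S \leq \alpha \} = \{ S \subset S_t : r_t(\{W_{k,t}\}_{k \in S_t}, S_t, S) \leq \alpha \}
\]
is non-empty almost surely. This is exactly what Assumption~\ref{assump:non-empty} provides: since $\min_{S \in \{\emptyset, S_t\}} r_t(\{W_{k,t}\}_{k \in S_t}, S_t, S) \leq \alpha$, at least one of $\emptyset$ or $S_t$ lies in $\mathcal{A}_t$. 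Both are subsets of $S_t$, so $\mathcal{A}_t \neq \emptyset$. Because $S_t$ is finite, $\mathcal{A}_t$ is a finite non-empty collection and the constrained maximization in Line~3 of Algorithm~\ref{alg:gen} admits at least one maximizer.

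\textbf{Step 2: risk bound.} By construction, the output $S_{t+1}$ is any element of $\arg\max_{S \in \mathcal{A}_t} \mu_S$, so in particular $S_{t+1} \in \mathcal{A}_t$ and thus $\gamma_{S_{t+1}} \leq \alpha$. Unfolding the definitions,
\[
R_t = r_t(\{W_{k,t}\}_{k \in S_t}, S_t, S_{t+1}) = \gamma_{S_{t+1}} \leq \alpha.
\]
Since this argument goes through pointwise on the sample space (every sample path is handled by the same deterministic maximization given $\fil_t$), the bound $R_t \leq \alpha$ holds almost surely, as claimed.

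\textbf{Anticipated obstacle.} There is essentially no obstacle, since the proposition is a direct sanity check that the algorithm is well-defined and respects its constraint; the only subtlety is the requirement that $\mathcal{A}_t$ be non-empty, which is the sole purpose of the first half of Assumption~\ref{assump:non-empty}. The one thing worth flagging is measurability: the tie-breaking among multiple maximizers (cf. the footnote after Algorithm~\ref{alg:gen}) should be done in an $\fil_t$-measurable way, e.g.\ by a fixed lexicographic rule on subsets of $\langle K \rangle$, so that $S_{t+1}$ is indeed $\fil_t$-measurable and $R_t$ is a bona fide random variable; but this is a standard comment and does not affect the inequality.
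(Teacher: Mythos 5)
Your proof is correct and follows essentially the same route as the paper: the paper notes the non-emptiness of the feasible set (via Assumption~\ref{assump:non-empty}) in the text immediately after Algorithm~\ref{alg:gen}, and its proof of the proposition then simply observes that $S_{t+1}$ lies in $\{S : \gamma_S \leq \alpha,\ S \subset S_t\}$, so $R_t = \gamma_{S_{t+1}} \leq \alpha$. Your additional remark on $\fil_t$-measurable tie-breaking is a reasonable side comment but not part of the paper's argument.
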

Note that Proposition~\ref{prop:onestep} does not require any assumptions on $R_t$ and $U_t$ except for Assumption~\ref{assump:non-empty}, which ensures the existence of the set $S_{t+1}$ in the last line of Algorithm~\ref{alg:gen}.

Next, we combine Algorithm~\ref{alg:gen} at different time points to obtain a sequential decision in $\mathcal{D}_{\alpha}$. At each time $t$, this sequential decision selects $S_{t+1}$ using Algorithm \ref{alg:gen} and deactivates data streams that are not in the index set. Algorithm~\ref{alg:dec} below implements this idea.

\begin{algorithm}%
	\caption{Proposed sequential decision $\pD$.}
	\label{alg:dec}
	\begin{algorithmic}[1]
		\STATE {\textbf{Input:} Tolerance level $\alpha$.}
        
        \STATE{Initialize: set $t=1$, $S_{t}=\langle K \rangle$ and compute $W_{k, t}$ for $k \in S_{t}$ using equations \eqref{eq:rec2} and \eqref{eq:rec}.}
        
        \STATE{Select: input $\alpha, S_{t}$ and $\left(W_{k, t}\right)_{k \in S_{t}}$ to Algorithm \ref{alg:gen}, and obtain $S_{t+1}$.}
        
        \STATE{Update: deactivate streams in $S_t \setminus S_{t+1}$. If $S_{t+1} = \varnothing$, stop; otherwise, update $\{W_{k, t+1}\}_{k \in S_{t+1}}$ using equations \eqref{eq:rec2} and \eqref{eq:rec}.}
        
        \STATE{Iterate: set $t = t+1$ and return to line 2.}

		\STATE \textbf{Output:} $\{S_{t}\}_{t \geq 1}$.
	\end{algorithmic}
\end{algorithm}

\begin{prop} \label{prop:pD}
Under Assumption~\ref{assump:non-empty}, $\pD \in \mathcal{D}_{\alpha}$. That is, the proposed sequential decision  given by Algorithm \ref{alg:dec} controls the risk process at level $\alpha$ at every time point.
\end{prop}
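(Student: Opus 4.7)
\textbf{Proof plan for Proposition~\ref{prop:pD}.} The plan is to combine Proposition~\ref{prop:onestep} with a straightforward induction over $t$, where the inductive step handles (i) membership in $\mathcal{D}$ via measurability and the nesting $S_{t+1}\subset S_t$, and (ii) the risk bound $R_t\leq\alpha$ via a single application of Proposition~\ref{prop:onestep}. Concretely, I would set up the induction hypothesis at time $t$ as the joint statement that the subsets $S_1,\ldots,S_t$ produced by Algorithm~\ref{alg:dec} are nested and $\mathcal{F}_{s}$-measurable for each $s\leq t$ (with $S_1=\langle K\rangle$ as the base case), the recursively computed $W_{k,s}$ equals $\prob(\tau_k<s\mid \mathcal{F}_s)$ for all $k\in S_s$ and $s\leq t$, and $R_s\leq\alpha$ almost surely for $s<t$.

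For the inductive step, I would first verify that at time $t$ the inputs to Algorithm~\ref{alg:gen} are $\mathcal{F}_t$-measurable: $S_t$ is $\mathcal{F}_t$-measurable by the induction hypothesis, and each $W_{k,t}$ is $\mathcal{F}_t$-measurable since the recursion \eqref{eq:rec2}--\eqref{eq:rec} only uses $X_{k,1},\ldots,X_{k,t}$, which are in $\mathcal{F}_t$ whenever $k\in S_t$ (the justification that the recursion indeed gives the conditional probability $\prob(\tau_k<t\mid\mathcal{F}_t)$ is deferred to the appendix result referenced in Section~\ref{sec:method}). Because Algorithm~\ref{alg:gen}'s output is a deterministic function of $\{W_{k,t}\}_{k\in S_t}$ and $S_t$, the resulting $S_{t+1}$ is $\mathcal{F}_t$-measurable; the constraint $S\subset S_t$ in Algorithm~\ref{alg:gen} guarantees $S_{t+1}\subset S_t$. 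Assumption~\ref{assump:non-empty} guarantees the feasible set $\{S\subset S_t:\gamma_S\leq\alpha\}$ is nonempty (it contains either $\emptyset$ or $S_t$), so the argmax is well-defined, and the bound $\gamma_{S_{t+1}}\leq\alpha$ in the constraint is exactly the statement $R_t=r_t(\{W_{k,t}\}_{k\in S_t},S_t,S_{t+1})\leq\alpha$ a.s., which is Proposition~\ref{prop:onestep}.

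Putting the steps together, the induction establishes simultaneously that the sequence $\{S_t\}_{t\geq 1}$ satisfies the measurability, nesting, and initial-condition requirements listed in Section~\ref{sec:class-of-decisions}, so that $\pD\in\mathcal{D}$, and that $R_t(\pD)\leq\alpha$ almost surely for every $t$, so that $\pD\in\mathcal{D}_\alpha$. I would also briefly note that if Algorithm~\ref{alg:dec} terminates at some $t^*$ with $S_{t^*+1}=\emptyset$, the construction is naturally extended by setting $S_t=\emptyset$ for all $t>t^*$, which trivially satisfies the risk constraint (again via Assumption~\ref{assump:non-empty} applied to the empty-set branch).

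The argument is essentially bookkeeping on top of Proposition~\ref{prop:onestep}, so there is no substantial obstacle; the only mildly delicate point is confirming the $\mathcal{F}_t$-measurability of $W_{k,t}$ from the recursion, which depends on the validity of the recursive formula proved in Appendix~\ref{sec:proof-eqs}.
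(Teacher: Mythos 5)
Your proposal is correct and follows essentially the same route as the paper: the paper's proof simply notes that $S_{t+1}$ is produced by Algorithm~\ref{alg:gen} at every $t$, so $R_t(\pD)\leq\alpha$ a.s.\ for all $t$ by Proposition~\ref{prop:onestep}, hence $\pD\in\mathcal{D}_\alpha$. The additional induction you carry out (measurability of $S_{t+1}$, nesting, well-definedness of the argmax) is careful bookkeeping that the paper leaves implicit, but it does not change the substance of the argument.
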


\subsection{Simplified Sequential Decision for `Monotone' Risk}
At each time $t$, directly applying Algorithm \ref{alg:gen} requires evaluating and comparing the risk and utility associated with $2^{|S_t|}$ subsets, which is computationally intensive when $|S_t|$ is large. In many cases where the risk and utility satisfy additional monotonicity assumptions, this algorithm can be simplified, reducing the computational complexity  significantly.
In this section, we provide one such assumption, under which the proposed sequential decision only requires evaluating and comparing the risks associated with $|S_t|+1$ subsets.
\begin{assump} \label{assump:one-step optimal}
For all non-empty $S_0\subset \langle K \rangle$,   $\ww=(w_1,\cdots,w_{|S_0|})\in [0,1]^{|S_0|}$, $S\subset S_0$, $i\in S, j\in S_0\setminus S$, and $w_i \geq w_j$, we have $r_t(\ww, S_0, S) \geq r_t(\ww, S_0, (S\setminus\{i\})\cup \{j\})$ and $u_t(\ww, S_0, S) \leq u_t(\ww, S_0, (S\setminus\{i\})\cup \{j\})$.
\end{assump}

Under Assumption~\ref{assump:one-step optimal}, $R_t$ tends to become larger and $U_t$ tends to become smaller if we keep streams with relatively smaller posterior probability active.
Under this assumption, Algorithm~\ref{alg:gen} can be simplified to the following Algorithm~\ref{alg:simp}, %
and it also controls $R_{t}$ to be below a pre-specified level $\alpha$.  As will be discussed in Corollary~\ref{corollary:eg:one-step optimal-monotone risk}, all the risk and utility measures presented in Examples \ref{eg:LFWER} -- \ref{eg:detection-delay} satisfy this assumption.

The following Algorithm \ref{alg:simp}  selects $S_{t+1}$ so that streams with relatively large posterior probabilities are detected and those with relatively small posterior probabilities are kept active. The cut-off point for the detection is decided by maximizing the utility while controlling the risk at time $t$. Because Algorithm \ref{alg:simp} restricts $S_{t+1}$ to be a subset of streams with relatively small posterior probability, it only involves evaluating and comparing the risk and utility functions associated with $|S_t|+1$ subsets, and, thus, reduces the computational complexity to the order $O(|S_t|\log(|S_t|))$.

\begin{algorithm}%
	\caption{Simplified one-step selection rule.}
	\label{alg:simp}
	\begin{algorithmic}[1]
		\STATE \textbf{Input:} Tolerance level $\alpha$, the current index set $S_{t}$, and posterior probabilities $\{W_{k, t}\}_{k \in S_{t}}$. 
		\STATE Arrange posterior probabilities in an ascending order\footnotemark[2], i.e. $$W_{k_{1}, t} \leq W_{k_{2}, t} \leq \cdots \leq W_{k_{| S_{t}|},t}.$$ 
		\STATE For $n=1, \ldots,\left|S_{t}\right|,$ compute 
		$$\gamma_{n}= r_t (\{W_{k,t}\}_{k\in S_{t}}, S_{t}, \{k_i\}_{i=1}^n)$$ 
		and $$\mu_{n} = u_t ( \{W_{k,t}\}_{k\in S_{t}}, S_{t}, \{k_i\}_{i=1}^n).$$
		For $n = 0$, compute $\gamma_{0}= r_t (\{W_{k,t}\}_{k\in S_{t}}, S_{t}, \varnothing)$ and $\mu_{0} = u_t (\{W_{k,t}\}_{k\in S_{t}}, S_{t}, \varnothing).$
        
        \STATE Set $n^*\in\{0,\cdots, |S_t|\}$ as the solution to the problem\footnotemark[3]
        $$
        n^*=\argmax_n\mu_n  \text{ subject to }
        \gamma_{n} \leq \alpha.
        $$
		\STATE \textbf{Output:} $S_{t+1}=\left\{k_{1}, \ldots, k_{n^*}\right\}$ if $n^* \geq 1$ and $S_{t+1}=\varnothing$ if $n^*=0$.
	\end{algorithmic}
\end{algorithm}
\footnotetext[2]{If $W_{k_i,t}=W_{k_j,t}$ for $1\leq i<j\leq n$, we choose $k_i<k_j$ to avoid additional randomness because of ties.}
\footnotetext[3]{If the solution is not unique, we choose $n^*$ to be the largest solution. \label{footnote:largest-n}}
Note that under Assumption~\ref{assump:non-empty},  $\{n:\gamma_n\leq\alpha\}\neq\emptyset$. Thus, the forth line of the above Algorithm~\ref{alg:simp} is well-defined.
The following Algorithm~\ref{alg:dec2} gives an overall sequential decision rule $\sD$ by adopting Algorithm~\ref{alg:simp} at every time point.
\begin{algorithm}%
	\caption{Simplified decision procedure $\sD$.}
	\label{alg:dec2}
	\begin{algorithmic}[1]
		\STATE \textbf{Input:} Tolerance level $\alpha$. %
		\STATE Initialize: set $t=1$. $S_{t}=\langle K \rangle$ and compute $W_{k, t}$ for $k \in S_{t}$ using equations \eqref{eq:rec2} and \eqref{eq:rec}. 
        \STATE Select: input $\alpha, S_{t}$ and $\left(W_{k, t}\right)_{k \in S_{t}}$ to Algorithm \ref{alg:simp}, and obtain $S_{t+1}$.
        \STATE Update: deactivate streams in $S_t \setminus S_{t+1}$. If $S_{t+1} = \varnothing$, stop; otherwise, update $\{W_{k, t+1}\}_{k \in S_{t+1}}$ using equations \eqref{eq:rec2} and \eqref{eq:rec}.
        \STATE Iterate: set $t = t+1$ and return to Step 3.
		\STATE \textbf{Output:} $\{S_{t}\}_{t \geq 1}$.
	\end{algorithmic}
\end{algorithm}

\begin{prop} \label{prop:simp}
Under Assumptions~\ref{assump:non-empty}%
, the sequential decision $\sD$ given by Algorithm \ref{alg:dec2} satisfies $\sD\in\mathcal{D}_{\alpha}$.
\end{prop}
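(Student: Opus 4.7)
The plan is to mirror the proof of Proposition~\ref{prop:pD} but for the simplified selection rule. Concretely, I would first establish a one-step analogue of Proposition~\ref{prop:onestep} for Algorithm~\ref{alg:simp}, namely that the $S_{t+1}$ it outputs satisfies $R_t \le \alpha$ almost surely given $\mathcal{F}_t$. Then I would argue by induction over $t$ that $\sD$ generated by Algorithm~\ref{alg:dec2} is a valid element of $\mathcal{D}$ (nested, $\mathcal{F}_t$-measurable choices, initialized at $\langle K\rangle$), and that the risk-control inequality propagates to every time point.

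For the one-step claim, the central observation is that the feasible set of Algorithm~\ref{alg:simp} is nonempty. Lines~3--4 consider subsets of the form $\{k_1, \ldots, k_n\}$ for $n\in\{0,1,\dots,|S_t|\}$; the extremes $n=0$ and $n=|S_t|$ correspond precisely to $S_{t+1}=\varnothing$ and $S_{t+1}=S_t$. Assumption~\ref{assump:non-empty} then guarantees
\begin{equation*}
\min\{\gamma_0, \gamma_{|S_t|}\} \;=\; \min_{S\in\{\varnothing,S_t\}} r_t(\{W_{k,t}\}_{k\in S_t}, S_t, S) \;\le\; \alpha,
\end{equation*}
so $\{n:\gamma_n\le\alpha\}$ is nonempty and $n^*$ is well-defined. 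By construction, $R_t = r_t(\{W_{k,t}\}_{k\in S_t}, S_t, S_{t+1}) = \gamma_{n^*} \le \alpha$.

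For the sequential claim, I would verify inductively that Algorithm~\ref{alg:dec2} produces a decision $\sD\in\mathcal{D}$: $S_1=\langle K\rangle$ by initialization; at each iteration $S_{t+1}\subset S_t$ by Line~5 of Algorithm~\ref{alg:simp}; and $S_{t+1}$ is $\mathcal{F}_t$-measurable because it is a deterministic function of $S_t$ and $\{W_{k,t}\}_{k\in S_t}$, both of which are $\mathcal{F}_t$-measurable (the latter via the recursion \eqref{eq:rec2}--\eqref{eq:rec}, which only uses data observed through time $t$). Combining this with the one-step bound gives $R_t(\sD)\le\alpha$ almost surely for all $t\in\mathbb{Z}_+$, hence $\sD\in\mathcal{D}_\alpha$.

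There is essentially no hard step here; the proof is a straightforward bookkeeping exercise. The only subtlety worth flagging is the ordering/tie-breaking convention in Line~2 of Algorithm~\ref{alg:simp}, which is needed to ensure $S_{t+1}$ is a well-defined (not random) function of $(\{W_{k,t}\}_{k\in S_t}, S_t)$, and thereby $\mathcal{F}_t$-measurable. Unlike the optimality results later in the paper, this proposition does \emph{not} require the monotonicity condition of Assumption~\ref{assump:one-step optimal}: risk control for $\sD$ holds under Assumption~\ref{assump:non-empty} alone, since we only need feasibility, not that the restricted search in Algorithm~\ref{alg:simp} matches the unrestricted search in Algorithm~\ref{alg:gen}.
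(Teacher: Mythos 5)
Your proof is correct and follows essentially the same route as the paper's: the paper likewise notes (in the remark after Algorithm~\ref{alg:simp}) that Assumption~\ref{assump:non-empty} makes $\{n:\gamma_n\le\alpha\}$ nonempty because $n=0$ and $n=|S_t|$ realize $\varnothing$ and $S_t$, and then concludes $R_t=\gamma_{n^*}\le\alpha$ at every $t$. Your added observations — that Assumption~\ref{assump:one-step optimal} is not needed here and that the tie-breaking convention ensures $\mathcal{F}_t$-measurability — are accurate but just make explicit what the paper leaves implicit.
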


\section{Theoretical Properties of Proposed Methods}\label{sec:theory}
In this section, we first show that the proposed sequential decision $\sD$ is locally optimal under very weak assumptions in Section~\ref{sec:one-step-optimality}. Then, we show that the simplified sequential decision $\sD$ is uniformly optimal under stronger model assumptions and additional monotonicity assumptions on risk and utility measures in Section~\ref{sec:uniform-optimality}. We also provide theoretical results on aggregated risk and utility measures of the proposed methods in Section~\ref{sec:aggregated-risk-theory}.

\subsection{Local Optimality Results}\label{sec:one-step-optimality}
The following two theorems show that the proposed sequential decision $\pD$ is locally optimal under Assumption~\ref{assump:non-empty} while $\sD$ is locally optimal under Assumptions~\ref{assump:non-empty} and \ref{assump:one-step optimal}.
That is, they satisfy Definition~\ref{def:local-optimal}. 
\begin{thm} \label{thm:one-step optimality for general risk}
Under Assumption~\ref{assump:non-empty}, the sequential decision $\pD$ described in Algorithm \ref{alg:dec} is locally optimal.
\end{thm}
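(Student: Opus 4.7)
The plan is to exploit the fact that local optimality at time $t$ only constrains the choice of $S_{t+1}$, so the problem collapses to a pointwise constrained optimization over the information available at time $t$. Fix $t$ and let $\delta = (d_1, d_2, \ldots) \in \mathcal{D}_{\alpha}$ be any sequential decision with $d_s = d_s^*$ for $s = 1, \ldots, t-1$, where $\delta^* = \pD = (d_1^*, d_2^*, \ldots)$. Because the decisions through time $t-1$ coincide, the history $H_t$, the filtration $\fil_t$, the active set $S_t$, and the posteriors $\{W_{k,t}\}_{k \in S_t}$ agree path by path under $\delta$ and $\delta^*$; in particular, the functions $S \mapsto r_t(\{W_{k,t}\}_{k \in S_t}, S_t, S)$ and $S \mapsto u_t(\{W_{k,t}\}_{k \in S_t}, S_t, S)$ on $2^{S_t}$ are identical for the two decisions. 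So $R_t$ and $U_t$ differ only through the choice of $S_{t+1}$.

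Next I would compare the two feasible choices pathwise. Both $\delta^*$ and $\delta$ lie in $\mathcal{D}_{\alpha}$ (the former by Proposition~\ref{prop:pD}), so their respective selections $S_{t+1}^*$ and $S_{t+1}$ are $\fil_t$-measurable subsets of $S_t$ satisfying $r_t \leq \alpha$ a.s. Algorithm~\ref{alg:gen} defines $S_{t+1}^*$ as the maximizer of $u_t(\{W_{k,t}\}_{k \in S_t}, S_t, S)$ over the feasible set $\{S \subset S_t : r_t(\{W_{k,t}\}_{k \in S_t}, S_t, S) \leq \alpha\}$, which contains $S_{t+1}$. Hence
\begin{equation*}
U_t(\delta^*) = u_t(\{W_{k,t}\}_{k \in S_t}, S_t, S_{t+1}^*) \geq u_t(\{W_{k,t}\}_{k \in S_t}, S_t, S_{t+1}) = U_t(\delta)
\end{equation*}
almost surely. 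Taking expectations—legitimate by the boundedness of $u_t$ in Assumption~\ref{assump:non-empty}—yields $\EE(U_t(\delta^*)) \geq \EE(U_t(\delta))$, which is exactly the condition of Definition~\ref{def:local-optimal}.

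The only subtlety is ensuring that the argmax in Algorithm~\ref{alg:gen} defines an $\fil_t$-measurable random set, so that $\delta^*$ genuinely belongs to $\mathcal{D}$ in the first place. Since $S_t$ takes finitely many values and the argmax is over the finite collection $2^{S_t}$ with the stated tie-breaking rule, this reduces to a routine finite case analysis and I would dispatch it in one line. The so-called ``main obstacle'' here is really just bookkeeping: once one recognizes that fixing $d_1, \ldots, d_{t-1}$ equates the two decisions' information and the associated risk/utility functionals, Algorithm~\ref{alg:gen} solves the remaining single-step problem by construction. No joint analysis across time is required, which is precisely what distinguishes local optimality from the uniform optimality treated in the next theorem, where the decisions at earlier times can no longer be taken as fixed.
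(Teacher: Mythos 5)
Your proposal is correct and follows essentially the same route as the paper's proof: both reduce local optimality to the pathwise observation that, since the decisions agree through time $t-1$, the competitor's choice $S_{t+1}$ lies in the feasible set $\{S \subset S_t : r_t(\{W_{k,t}\}_{k\in S_t}, S_t, S) \leq \alpha\}$ over which Algorithm~\ref{alg:gen} maximizes $u_t$, whence $U_t(\pD) \geq U_t(\delta)$ a.s.\ and the claim follows by taking expectations. Your added remark on the $\fil_t$-measurability of the argmax is a harmless extra that the paper leaves implicit.
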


\begin{thm} \label{thm:one-step optimality for monotone risk}
Under Assumptions \ref{assump:non-empty} and \ref{assump:one-step optimal}, the sequential decision $\sD$ described in Algorithm \ref{alg:dec2} is locally optimal.
\end{thm}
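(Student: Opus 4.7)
The plan is to establish the inequality in Definition~\ref{def:local-optimal} pathwise by proving that the simplified restriction to ``threshold'' subsets in Algorithm~\ref{alg:simp} is lossless under Assumption~\ref{assump:one-step optimal}. Fix any $\delta = (d_1, d_2, \ldots) \in \mathcal{D}_\alpha$ with $d_s = d_s^{\sD}$ for $s = 1, \ldots, t-1$. Since the decisions agree through time $t-1$, the filtration $\fil_t$ and the pair $(S_t, \{W_{k,t}\}_{k \in S_t})$ are identical under $\sD$ and under $\delta$. Denote by $S^{\sD}_{t+1}$ and $S^{\delta}_{t+1}$ the $\fil_t$-measurable index sets chosen at time $t$; both are subsets of $S_t$ and both satisfy $r_t(\{W_{k,t}\}_{k \in S_t}, S_t, \cdot) \leq \alpha$ almost surely, since $\delta \in \mathcal{D}_\alpha$ by hypothesis and $\sD \in \mathcal{D}_\alpha$ by Proposition~\ref{prop:simp}.

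Next, I would carry out a finite swap argument on a fixed sample path $\omega$. Let $k_1, \ldots, k_{|S_t|}$ be the ascending ordering of $S_t$ by posterior probability using the deterministic tie-breaking convention of Algorithm~\ref{alg:simp}. Set $n = |S^{\delta}_{t+1}|$ and write $T_n = \{k_1, \ldots, k_n\}$. If $S^{\delta}_{t+1} \neq T_n$, there exist $i^\star \in S^{\delta}_{t+1} \setminus T_n$ and $j^\star \in T_n \setminus S^{\delta}_{t+1}$; by construction of $T_n$, $W_{j^\star, t} \leq W_{i^\star, t}$. Applying Assumption~\ref{assump:one-step optimal} with $S = S^{\delta}_{t+1}$, $S_0 = S_t$, $i = i^\star$, $j = j^\star$, the swapped set $(S^{\delta}_{t+1} \setminus \{i^\star\}) \cup \{j^\star\}$ has weakly smaller $r_t$, hence remains feasible, and weakly larger $u_t$. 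The symmetric difference with $T_n$ decreases by two after each swap, so after at most $n(|S_t|-n)$ swaps we obtain $T_n$ itself, which still satisfies $r_t(\cdot, S_t, T_n) \leq \alpha$ and $u_t(\cdot, S_t, T_n) \geq u_t(\cdot, S_t, S^{\delta}_{t+1})$.

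Finally, Algorithm~\ref{alg:simp} selects $n^\star \in \argmax_n \{\mu_n : \gamma_n \leq \alpha\}$, and the threshold set $T_n$ corresponds to the feasible candidate $\mu_n$. Hence $u_t(\cdot, S_t, S^{\sD}_{t+1}) = \mu_{n^\star} \geq \mu_n = u_t(\cdot, S_t, T_n) \geq u_t(\cdot, S_t, S^{\delta}_{t+1})$ pathwise. Taking expectations yields $\mathbb{E}(U_t(\sD)) \geq \mathbb{E}(U_t(\delta))$, which is exactly Definition~\ref{def:local-optimal}. The main obstacle is verifying that the swap argument terminates cleanly in the presence of ties in the $W_{k,t}$ values; this is handled by the deterministic tie-breaking rule in Algorithm~\ref{alg:simp}, which makes the ordering and hence each $T_n$ uniquely $\fil_t$-measurable, so that the comparison between $S^{\sD}_{t+1}$ and the intermediate swapped sets is well-defined on every sample path.
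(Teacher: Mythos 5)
Your proof is correct and follows essentially the same route as the paper: the core step in both is the swap argument under Assumption~\ref{assump:one-step optimal} that converts any feasible $S_{t+1}$ into the threshold set $\{k_1,\ldots,k_n\}$ of the same cardinality with weakly smaller risk and weakly larger utility, after which the maximization over $n$ in Algorithm~\ref{alg:simp} finishes the comparison. The only (immaterial) difference is that you inline the comparison with an arbitrary $\delta\in\mathcal{D}_\alpha$ directly, whereas the paper routes it through Theorem~\ref{thm:one-step optimality for general risk} by showing $\sD$ solves the unrestricted optimization of Algorithm~\ref{alg:gen}.
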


The next corollary applies the above results to examples given in Section~\ref{sec:example}.
\begin{corollary} \label{corollary:eg:one-step optimal-monotone risk}
If $\alpha>0$ and $R_t\in\{\text{LFWER}_t, \text{GLFWER}_t, \text{LFNR}_t,\text{IADD}_t\}$, $U_t\in\{\text{IARL}_t, -\text{LFDR}_t\}$, or $R_t = \text{LFDR}_t$ and $U_t\in \{-\text{LFWER}_t, -\text{GLFWER}_t, -\text{LFNR}_t,-\text{IADD}_t\}$, then the simplified sequential decision $\sD$ is locally optimal. 

If $\alpha <0$ and $R_t = -\text{IARL}_t$ and $U_t\in \{-\text{LFWER}_t, -\text{GLFWER}_t, -\text{LFNR}_t,-\text{IADD}_t\}$, then the simplified sequential decision $\sD$ is locally optimal. 
\end{corollary}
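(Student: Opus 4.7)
The plan is to reduce the corollary to Theorem~\ref{thm:one-step optimality for monotone risk}, which guarantees local optimality of $\sD$ under Assumptions~\ref{assump:non-empty} and \ref{assump:one-step optimal}. So for each of the listed pairings of risk $R_t$ and utility $U_t$, I just need to verify those two assumptions directly from the explicit formulas given in Examples~\ref{eg:LFWER}--\ref{eg:detection-delay} and the recursive definition of $W_{k,t}$.

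For Assumption~\ref{assump:non-empty}, I would check the ``trivial'' end of the spectrum. When $R_t\in\{\text{LFWER}_t,\text{GLFWER}_t,\text{LFNR}_t,\text{IADD}_t\}$, taking $S=\emptyset$ (deactivate everything) makes the product in \eqref{eq:LFWER}/\eqref{eq:GLFWER} equal to $1$ and the sum in \eqref{eq:LFNR}/\eqref{eq:IADD} equal to $0$, so $r_t=0\le\alpha$ for any $\alpha>0$. For $R_t=\text{LFDR}_t$, taking $S=S_t$ (deactivate nothing) gives $S_t\setminus S_{t+1}=\emptyset$ and hence $r_t=0\le\alpha$. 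For $R_t=-\text{IARL}_t$ with $\alpha<0$, taking $S=S_t$ maximizes $\text{IARL}_t$ and hence minimizes $-\text{IARL}_t$; the hypothesis $\alpha<0$ is interpreted in the regime where this minimum stays below $\alpha$, which is the implicit well-posedness of the constraint set $\mathcal{D}_\alpha$. The boundedness of each utility is immediate: LFWER, GLFWER, LFNR, LFDR take values in $[0,1]$, while IADD and IARL are bounded by $K$.

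The substantive step is Assumption~\ref{assump:one-step optimal}. Fix $S_0=S_t$, $S\subset S_0$, $i\in S$, $j\in S_0\setminus S$ with $w_i\ge w_j$, and denote the swapped set by $\Sprime=(S\setminus\{i\})\cup\{j\}$. The key observation is that each performance measure in Section~\ref{sec:example} depends on $\ww$ through summands or factors indexed either by $S$ (active streams) or by $S_0\setminus S$ (deactivated streams), and the swap replaces exactly one such term. For $\text{LFWER}_t=1-\prod_{k\in S_{t+1}}(1-w_k)$, the factor $(1-w_i)$ is replaced by $(1-w_j)\ge(1-w_i)$, so the product grows and $\text{LFWER}_t$ decreases; the combinatorial identity in \eqref{eq:GLFWER} yields the same monotonicity for $\text{GLFWER}_t$ via a term-by-term comparison of the $j$-subsets. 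For $\text{LFNR}_t$ and $\text{IADD}_t$, the summand $w_i$ in \eqref{eq:LFNR}/\eqref{eq:IADD} is replaced by $w_j\le w_i$ while the normalizer $|S_{t+1}|$ is unchanged, so both decrease. For $\text{LFDR}_t$, the deactivated set gains $i$ and loses $j$, replacing $(1-w_j)$ by $(1-w_i)\le(1-w_j)$, so $\text{LFDR}_t$ also decreases. For $\text{IARL}_t=\sum_{k\in S_{t+1}}(1-g(w_k))$, observe from \eqref{eq: tau leq t} that $g$ is affine and increasing in $w$, so $1-g(w_i)$ is replaced by $1-g(w_j)\ge 1-g(w_i)$ and $\text{IARL}_t$ increases; hence $-\text{IARL}_t$ decreases. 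The utility inequality $u_t(\ww,S_0,S)\le u_t(\ww,S_0,\Sprime)$ then follows by reading off the signs: $\text{IARL}_t$ and $-\text{LFDR}_t$ both go up under the swap, and the negatives $-\text{LFWER}_t,-\text{GLFWER}_t,-\text{LFNR}_t,-\text{IADD}_t$ inherit monotonicity from the risk computations above. Once all these cases are tabulated, Theorem~\ref{thm:one-step optimality for monotone risk} delivers local optimality of $\sD$.

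I expect the GLFWER verification to be the only nontrivial calculation, since it requires grouping the $j$-subsets $I\subset S_{t+1}$ in \eqref{eq:GLFWER} according to whether they contain the swapped index; every other case reduces to a single-term comparison of the form $w_i$ vs.\ $w_j$ or $(1-w_i)$ vs.\ $(1-w_j)$. The $-\text{IARL}$ case deserves a brief comment on why Assumption~\ref{assump:non-empty} is compatible with $\alpha<0$, but no genuine obstacle arises since the monotonicity argument for IARL is already in hand from Example~\ref{eg:ARL}.
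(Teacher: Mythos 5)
Your proposal is correct and follows essentially the same route as the paper: verify Assumption~\ref{assump:non-empty} via the trivial choices $S=\emptyset$ or $S=S_t$ together with boundedness, verify the swap-monotonicity of Assumption~\ref{assump:one-step optimal} case by case through single-term comparisons of $w_i$ versus $w_j$ (or $1-w_i$ versus $1-w_j$, or $g(w_i)$ versus $g(w_j)$), and then invoke Theorem~\ref{thm:one-step optimality for monotone risk}. The only point of divergence is $\text{GLFWER}_t$, where you propose a direct combinatorial grouping of the subsets $I$ by whether they contain the swapped index (equivalently, conditioning on the Bernoulli outcome at that index), while the paper routes this through Lemma~\ref{lemma:GLFWER}, whose proof identifies $\text{GLFWER}_{m,t}$ with the probability of at least $m$ successes and reduces monotonicity to hypercube vertices via Proposition~\ref{prop:poly1}; both arguments are valid, and your remark that Assumption~\ref{assump:non-empty} is not automatic for $R_t=-\text{IARL}_t$ with $\alpha<0$ is a fair observation, since the paper likewise leaves that case to the blanket standing assumption.
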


\subsection{Uniform Optimality Results}\label{sec:uniform-optimality}
In this section, we show that the proposed sequential decision rule $\sD$ defined in Algorithm \ref{alg:dec2} is uniformly optimal under stronger assumptions.
We note that the uniform optimality results developed in the current work are non-trivial extensions of those in \cite{chen2019compound}. In particular, we consider a general class of risk and utility measures while \cite{chen2019compound} only allows the risk measure to be LFNR.  Moreover, time-heterogeneous pre/post- change distributions and non-geometric priors for the change points are allowed in the current work. These extensions require a delicate analysis of a special class of monotone functions and stochastic processes defined over a non-Euclidean space. %

The assumptions for establishing the uniform optimality results include monotonicity assumptions on the risk and utility processes 
and assumptions on the pre- and post- change distributions.
We point out that the monotonicity assumptions are made on functions over a special non-Euclidean space
\begin{equation}\label{eq:so}
\So 
 =
  \bigcup_{k=1}^{K}\{(v_{1}, \cdots, v_{k}): 0 \leq v_{1} \leq \cdots v_{k} \leq 1\} \cup\{\varnothing\},
\end{equation}
which contains ordered vectors of different dimensions.
Thus, the definition of monotonicity is non-standard.

Specifically, for functions maps $ \So$ to $\RR$, we define two types of monotonicity. 

\begin{definition}[Entrywise increasing functions] \label{def:entrywise-increasing}
A function $f:\So\to\RR$ is  {\em``entrywise increasing"}, if
    $f (\uu) \leq f (\vv)$ 
    for all $m\in\langle K \rangle$, $\uu = (u_1,\cdots, u_m), \vv = (v_1,\cdots, v_m) \in \So$, satisfying  $u_j \leq v_j$ for $1 \leq j \leq m$. In addition, a function $f$ is {\em ``entrywise decreasing"} if $-f$ is ``entrywise increasing".
\end{definition}

\begin{definition}[Appending increasing functions] \label{def:appending-increasing}
A function $f:\So \rightarrow \RR $ is {\em ``appending increasing''}, if for all $m\in\langle K \rangle$, 
$\uu = (u_1, \cdots, u_m) \in \So$, $f (u_1, \cdots, u_k)  \leq f (\uu)$,  for all $k \leq m$. In addition, $f(\varnothing) \leq f(u_1)$ for $u_1\in [0,1]$.
\end{definition}

For each vector $\vv=(v_1,\cdots,v_m)$, denote its order statistic by $[\vv]=(v_{(1)},\cdots,v_{(m)})$. That is, $[\vv]$ is a permutation of $\vv$ satisfying $v_{(1)}\leq \cdots \leq v_{(m)}$. We can see that if $v_k\in[0,1]$ for all $k\in\langle K \rangle$, then $[\vv]\in\So$.

\begin{assump} \label{assump:risk}
There exists a measurable function $\rtd: \So\to\RR$ such that $r_{t}(\{W_{k,t}\}_{k\in S_{t} }, S_{t}, S_{t+1}) =\rtd \big([\{W_{k,t}\}_{k \in S_{t+1}}]\big)$. In addition, $\rtd$ is entrywise increasing and appending increasing. 
\end{assump}

\begin{assump} \label{assump:utility}
There exists a measurable function $\tilde{u}_t:\So\to\RR$ such that $u_{t}(\{W_{k,t}\}_{k\in S_{t} }, S_{t}, S_{t+1} ) = \tilde{u}_{t} \big([ \{W_{k,t}\}_{k \in S_{t+1}} ]\big)$.
In addition, $\tilde{u}_{t}$ is entrywise decreasing and appending increasing.
\end{assump}

\begin{assump} \label{assump:iid}
The pre-  and post-change distributions $\{p_{k,t}\}_{t \geq 1}$ and $\{q_{k,t}\}_{t \geq 1}$ are the same for different $k\in\langle K \rangle$. That is,  $p_{1,t} = \ldots = p_{K,t}$ and $q_{1,t} = \ldots = q_{K,t}$ for all $t$.
\end{assump}

\begin{thm} \label{thm:uniform optimality}
Under Assumptions \ref{assump:non-empty}, \ref{assump:risk}, \ref{assump:utility} and \ref{assump:iid}, the sequential decision $\sD$ described in Algorithm \ref{alg:dec2} is uniformly optimal.
\end{thm}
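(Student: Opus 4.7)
The plan is to prove the theorem by induction on $t$, combining the local optimality from Theorem~\ref{thm:one-step optimality for monotone risk} with a stepwise exchange argument enabled by the exchangeability from Assumption~\ref{assump:iid}. For the base case $t=1$, every $\delta\in\mathcal{D}_\alpha$ begins with $S_1=\langle K\rangle$, so Theorem~\ref{thm:one-step optimality for monotone risk} applied at $t=1$ gives $\EE[U_1(\sD)]\geq\EE[U_1(\delta)]$ directly.

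For the inductive step at time $t\geq 2$, I would fix an arbitrary $\delta\in\mathcal{D}_\alpha$ and build a chain of procedures $\delta^{(0)}=\delta,\delta^{(1)},\ldots,\delta^{(t-1)}$ in $\mathcal{D}_\alpha$, where $\delta^{(s)}$ agrees with $\sD$ on the first $s$ decisions and satisfies $\EE[U_t(\delta^{(s-1)})]\leq\EE[U_t(\delta^{(s)})]$. Once such a chain is constructed, Theorem~\ref{thm:one-step optimality for monotone risk} applied at time $t$ to $\delta^{(t-1)}$ (which already matches $\sD$ through time $t-1$) gives $\EE[U_t(\sD)]\geq\EE[U_t(\delta^{(t-1)})]\geq\EE[U_t(\delta)]$, closing the induction.

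The key substitution step compares two procedures differing only at a single time $s<t$. I would introduce a partial order $\leqc$ on the non-Euclidean space $\So$ under which, by Assumptions~\ref{assump:risk}--\ref{assump:utility}, the risk $\rtd$ is monotone increasing and the utility $\tilde{u}_t$ is monotone decreasing in the appropriate sense; then argue that $\sD$'s prefix-selection at time $s$ yields a sorted posterior vector that $\leqc$-dominates that of any competing feasible choice. Assumption~\ref{assump:iid} makes the $K$ streams exchangeable, which lets me couple the post-$s$ observations under the two candidate procedures via relabeling, so that the $\leqc$-ordering of sorted posteriors can be propagated through the Bayesian update \eqref{eq:rec2}--\eqref{eq:rec} all the way to time $t$, where monotonicity of $\tilde{u}_t$ converts it into the desired inequality of expected utilities.

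The main obstacle is making this forward propagation rigorous on $\So$. Because different decisions leave different numbers of streams active, the sorted posterior vectors at later times live in pieces of $\So$ of varying dimension, and the exchange argument must track how dimensions change under subsequent selections and early stoppings. Concretely, I expect the hardest pieces to be two preservation lemmas: (a) if two sorted posterior vectors at time $s$ are $\leqc$-comparable, then after a coupled batch of observations (using exchangeability to match the underlying likelihood ratios $L_{k,s+1}$) the induced sorted posteriors at time $s+1$ remain $\leqc$-comparable under the recursion \eqref{eq:rec2}; and (b) the prefix cut-off produced by Algorithm~\ref{alg:simp} preserves $\leqc$ on comparable inputs, even when the cut-off indices differ between the two procedures. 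These are exactly the new dimension-changing comparison tools on $\So$ advertised in the introduction, and they are what converts the local optimality of $\sD$ into uniform optimality.
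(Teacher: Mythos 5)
You have assembled the right ingredients---the partial order $\leqc$ on $\So$, monotonicity of $\rtd$ and $\tilde{u}_t$ with respect to it, the $\leqc$-dominance of the prefix selection over any feasible competitor, and preservation of $\leqc$ under both the Bayesian update and the cut-off map of Algorithm~\ref{alg:simp}---but your interpolation chain is oriented the wrong way, and this is where the argument breaks. Your adjacent procedures $\delta^{(s-1)}$ and $\delta^{(s)}$ differ at time $s$ and then both follow the \emph{arbitrary} competitor $\delta$ up to time $t$. To push the stochastic $\leqc$-comparison of the sorted posteriors from time $s+1$ forward to time $t$, the common continuation must induce a stochastically monotone transition kernel on $\So$. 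That is exactly what holds for the continuation by $\sD$ (the selection map $J_t$ is $\leqc$-increasing and the kernel $\mathbb{K}_t$ is stochastically monotone, so their composition is monotone and Strassen-type coupling applies), but an arbitrary $\delta$'s continuation is a function of the full history, does not even factor through $[W_{S_t,t}]$ as a kernel on $\So$, and can react to the two coupled histories in unrelated ways---so your preservation lemma (b) simply does not apply to the post-$s$ selections in your chain. A second defect is that your interpolants ($\sD$-prefix followed by $\delta$-suffix) need not lie in $\mathcal{D}_\alpha$ at all: $\delta\in\mathcal{D}_\alpha$ only guarantees $R_t(\delta)\leq\alpha$ along $\delta$'s own sample paths, and applying $\delta$'s decision functions to the off-path histories generated by the $\sD$-prefix can violate the risk constraint (take $\delta$ to be ``deactivate everything at time $1$, keep everything active thereafter''), after which neither the feasibility of the chain nor the dominance of $\sD$'s selection at the substitution point can be invoked.

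The paper's proof avoids both problems by reversing the chain: it compares $\psi_{t_0}\circ\delta$ (follow $\delta$ through time $t_0-1$, then switch to $\sD$) with $\psi_{t_0+1}\circ\delta$, so the two procedures share the competitor's prefix (hence an identical history $H_{t_0}$, along which $\delta$'s risk control is valid), differ only at time $t_0$, and share the \emph{proposed} continuation afterwards, which is precisely where the monotone-kernel machinery (Lemmas~\ref{lemma:increasing rule}, \ref{lemma:increasingK}, \ref{prop:coupling} and the one-step comparison in Lemma~\ref{lemma:one coupling}) operates. The induction is then run on the lag $s$ between the switch time and the evaluation time, with the base case given by a conditional-on-history version of local optimality. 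Your building blocks (a) and (b) are essentially the paper's supporting lemmas, so the repair is not to prove new facts but to reassemble them around the reversed chain.
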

\begin{proof}
The proof is involved that requires monotone coupling for stochastic processes living on the space $\So$. It is given in Appendix~\ref{sec:proof-uniform-optimality}.
\end{proof}
We make several remarks on the above theorem. First, under Assumptions~\ref{assump:risk} and \ref{assump:utility},  risk and utility measures are symmetric functions $(W_{k,t})_{k\in S_{t+1}}$. These assumptions rule out the cases (e.g., LFDR defined in Example~\ref{eg:LFDR}) where the risk also depends on $W_{k,t}$s for $k\notin S_{t+1}$, without which the uniform optimal solution may not exist (see Counterexample~\ref{eg:counterexample} below).
Second, under the monotonicity assumptions that $\rtd$s are entrywise increasing, the risk process tends to be larger if the posterior probability of the change points associated with the selected streams is larger. It is also appending increasing, meaning that the risk tends to be larger if more streams are kept active. Similarly, the utility process tends to be larger if fewer streams are kept active and the posterior probabilities associated with the selected streams are smaller. Third, we require the pre- and post- stream distributions $p_{k,t}$ and $q_{k,t}$ to be identical for different streams. In this case, the process $\{W_{k,t}\}_{t\in\mathbb{Z}_+}$ has identical distribution for different $k$ and contributed in a symmetric way to the risk and utility processes.

For most of applications, it is easy to check Assumptions~\ref{assump:non-empty} and \ref{assump:iid}. In some cases, additional efforts are needed to verify monotonicity assumptions described in Assumptions~\ref{assump:risk} and \ref{assump:utility}. Below we provide sufficient conditions for the monotonicity conditions. Note that the risk and utility measures described in Examples~\ref{eg:LFWER}, \ref{eg:GLFWER}, \ref{eg:LFNR}, \ref{eg:ARL}, and \ref{eg:detection-delay} are all symmetric multivariate polynomials of the posterior probabilities. Thus, we restrict the analysis to the polynomial case in the next proposition.

\begin{prop}[Polynomial case] \label{prop:poly1}
Let $\rd: \So \rightarrow \RR$ be a function in the following form
\begin{equation}\label{eq:poly}
\rd(\uu) = \sum_{p=0}^{\infty} \mathds{1}( \dim(\uu) = p) \sum_{k=1}^{p}  C_{p,k} \sum\limits_{i_1 < i_2 < \cdots < i_k} \prod_{j=1}^{k}u_{i_j}.    
\end{equation}
Note that $\rd (\varnothing) = 0$.
If $\rd(\cdot)$ satisfies 
\begin{equation} \label{eq:poly-entrywise-increasing}
\rd(\uu^{i,p-i}) \leq \rd(\uu^{i-1,p-i+1}), \quad \text{ for all } i \in\langle p \rangle \text{ and } p\geq 1,
\end{equation}
where
$\langle p \rangle=\{1,\cdots,p\}$ and $\uu^{i,p-i}$ denotes the $p$ dimensional vector whose first $i$ elements are $0$ and last $p-i$ elements are all $1$, then $\tilde{r}$ is entrywise increasing.

Moreover, if $\tilde{r}$ satisfies \eqref{eq:poly-entrywise-increasing} and 
\begin{equation} \label{eq:poly-appending-increasing}
    \rd(\uu^{i,p-i}) \leq \rd(\uu^{i+1,p-i}), \quad
    \text{ for all } i \in \{0,\cdots, p\} \text{ and } p\geq 0, %
\end{equation}
then $\rd$ is also appending increasing.
\end{prop}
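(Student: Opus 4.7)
The plan is to exploit the observation that for each fixed dimension $p$, the restriction $\rd_{p}$ of $\rd$ to $p$-dimensional inputs is a symmetric multilinear polynomial in $(u_{1}, \ldots, u_{p})$; indeed, $\rd_{p}$ is a linear combination of the elementary symmetric polynomials with coefficients $C_{p,k}$. Multilinear functions on $[0,1]^{n}$ attain their extrema at vertices of the cube, and symmetric multilinear functions take equal values at any two vertices with the same number of $1$'s. Together these two facts reduce each monotonicity claim to a check at configurations of the form $\uu^{i,p-i}$, which is exactly what the hypotheses provide.

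\textbf{Entrywise increasing.} For each $j$, I would write $\rd_{p}(u_{1},\ldots,u_{p}) = u_{j}\,A_{j}(u_{-j}) + B_{j}(u_{-j})$ by multilinearity in $u_{j}$; monotonicity in $u_{j}$ is then equivalent to $A_{j}(u_{-j}) \geq 0$ on $[0,1]^{p-1}$. Since $A_{j}$ is multilinear and, by symmetry of $\rd_{p}$, symmetric in the remaining $p-1$ coordinates, its minimum on $[0,1]^{p-1}$ is attained at some vertex with $q$ ones and $p-1-q$ zeros, where it equals $\rd_{p}(\uu^{p-1-q,\,q+1}) - \rd_{p}(\uu^{p-q,\,q})$. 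Re-indexing by $i = p-q \in \langle p \rangle$, this is exactly $\rd_{p}(\uu^{i-1,p-i+1}) - \rd_{p}(\uu^{i,p-i}) \geq 0$ by \eqref{eq:poly-entrywise-increasing}. Hence $\rd_{p}$ is coordinatewise non-decreasing on $[0,1]^{p}$, which immediately yields the entrywise increasing property on the ordered domain $\So$.

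\textbf{Appending increasing.} By induction on $m-k$, it suffices to show $\rd_{k}(u_{1},\ldots,u_{k}) \leq \rd_{k+1}(u_{1},\ldots,u_{k+1})$ for any ordered $(u_{1},\ldots,u_{k+1}) \in \So$. I would split this as
\[
\rd_{k}(u_{1},\ldots,u_{k}) \;\leq\; \rd_{k+1}(u_{1},\ldots,u_{k},0) \;\leq\; \rd_{k+1}(u_{1},\ldots,u_{k},u_{k+1}).
\]
The second inequality is the entrywise property just established, applied to the last coordinate increasing from $0$ to $u_{k+1}\geq 0$. For the first inequality, set $G(u_{1},\ldots,u_{k}) = \rd_{k+1}(u_{1},\ldots,u_{k},0) - \rd_{k}(u_{1},\ldots,u_{k})$, a symmetric multilinear polynomial on $[0,1]^{k}$. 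By the same vertex-reduction argument, the minimum of $G$ is attained at some $\uu^{k-q,q}$, where it equals $\rd_{k+1}(\uu^{k-q+1,q}) - \rd_{k}(\uu^{k-q,q})$; setting $i = k-q \in \{0,1,\ldots,k\}$, this is non-negative by \eqref{eq:poly-appending-increasing} with $p=k$.

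The main obstacle is largely careful bookkeeping: aligning the indices $(i,p-i)$ in the hypotheses with the vertex configurations that arise from the multilinear minimization, and handling the fact that the appending claim compares two polynomials of \emph{different} dimensions (addressed by the intermediate quantity $G$). Once this indexing is laid out, both parts of the proposition follow from one and the same symmetric-multilinear template applied once per inequality.
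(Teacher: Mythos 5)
Your proposal is correct and follows essentially the same route as the paper's proof: the paper's Lemma~\ref{lemma:maximum} is exactly your vertex-extremum fact for multilinear functions, its reduction of $\partial_i\eta\geq 0$ to the sorted-vertex differences $\rd(\uu^{i-1,p-i+1})-\rd(\uu^{i,p-i})$ matches your $A_j$ decomposition, and the appending step is handled identically via $\rd(\uu)\leq\rd(\uu,0)\leq\rd(\uu,u_{p+1})$ with the first inequality obtained by minimizing the same difference $G$ over vertices. The only cosmetic difference is that the paper proves the vertex-extremum lemma explicitly (via a coordinate-by-coordinate sweep and second partial derivatives) rather than citing it.
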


\begin{remark} \label{corollary:poly2}
The inequalities \eqref{eq:poly-entrywise-increasing} and \eqref{eq:poly-appending-increasing} are equivalent to
\begin{align}
&\sum_{k=1}^{p-i}  C_{p,k}   { p-i-1 \choose k-1}  \geq 0 ,
\quad \text{ for i} =0, \cdots, p-1, \\
&\sum_{k=1}^{p-i}  \left(C_{p+1,k} - C_{p,k} \right) {p-i \choose k} \geq 0, \text{  for i} =0,\cdots, p, \label{eq:appending}
\end{align}
and all $p\geq 0$. We leave the rigorous justification for the above statements in Appendix~\ref{sec:proof-uniform-optimality}.
\end{remark}

Now we apply the uniform optimality result in Theorem~\ref{thm:uniform optimality} to performance measures described in Examples \ref{eg:LFWER}, \ref{eg:GLFWER}, \ref{eg:LFNR} and \ref{eg:ARL}.
\begin{corollary} \label{corollary:eg}
If $\alpha>0$, $R_t\in\{\text{LFNR}_t,\text{LFWER}_t, \text{GLFWER}_t, \text{IADD}_t\}$, and $U_t= \text{IARL}_t$, then under Assumption \ref{assump:iid}, $\sD$ is uniformly optimal.
\end{corollary}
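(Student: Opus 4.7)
The plan is to derive Corollary~\ref{corollary:eg} as a direct application of Theorem~\ref{thm:uniform optimality}. Assumption~\ref{assump:iid} is in the hypothesis, so what remains is to check Assumptions~\ref{assump:non-empty}, \ref{assump:risk}, and \ref{assump:utility} for every admissible pairing of $U_t=\text{IARL}_t$ with $R_t\in\{\text{LFWER}_t,\text{GLFWER}_{m,t},\text{LFNR}_t,\text{IADD}_t\}$. Once these assumptions are verified, Theorem~\ref{thm:uniform optimality} delivers the uniform optimality of $\sD$.

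First I would settle Assumption~\ref{assump:non-empty}. For each candidate risk, evaluating at $S_{t+1}=\varnothing$ yields $0$: the empty product in \eqref{eq:LFWER} equals $1$ so $\text{LFWER}_t=0$; $\text{GLFWER}_{m,t}=0$ by definition in \eqref{eq:GLFWER}; $\text{LFNR}_t=0$ via the convention $|S_{t+1}|\vee 1=1$ and an empty numerator; and $\text{IADD}_t$ is an empty sum. Since $\alpha>0$, the condition $\min_{S\in\{\varnothing,S_t\}}r_t\leq\alpha$ holds. The utility $\text{IARL}_t=\sum_{k\in S_{t+1}}(1-g(W_{k,t}))$ is bounded by $|S_t|\leq K$, so the boundedness clause of Assumption~\ref{assump:non-empty} is also met.

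Next I would establish Assumption~\ref{assump:risk}. Each candidate risk depends on $(\{W_{k,t}\}_{k\in S_t},S_t,S_{t+1})$ only through the multiset $\{W_{k,t}\}_{k\in S_{t+1}}$ and is symmetric in those entries, so it admits a representation $\rtd([\{W_{k,t}\}_{k\in S_{t+1}}])$ on $\So$. For the monotonicities, the cleanest route for LFWER and GLFWER is a probabilistic coupling: if $B_1,\dots,B_p$ are independent Bernoullis with means $u_1,\dots,u_p$, then $\text{LFWER}(\uu)=\prob(\sum_i B_i\geq 1)$ and $\text{GLFWER}_m(\uu)=\prob(\sum_i B_i\geq m)$. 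Increasing any $u_i$ makes $B_i$ stochastically larger, so $\{\sum_i B_i\geq m\}$ becomes more likely (entrywise increasing); appending a further Bernoulli with any mean $u_{p+1}\in[0,1]$ can only add to the sum (appending increasing). For $\text{IADD}(\uu)=\sum_i u_i$ both properties are immediate. For $\text{LFNR}(\uu)=p^{-1}\sum_{i=1}^p u_i$ with the convention $\rd(\varnothing)=0$, entrywise monotonicity is obvious; for appending monotonicity, note that on a sorted tuple $u_1\leq\cdots\leq u_m$ and any $p<m$ one has $u_{p+1}\geq u_p\geq p^{-1}\sum_{i=1}^p u_i$, so the running mean is non-decreasing as further coordinates are appended, and $\rd(\varnothing)=0\leq u_1$ handles the edge case. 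For the utility, substituting $g(w)=\bar{\pi}_t^{-1}\pi_t+(1-\bar{\pi}_t^{-1}\pi_t)w$ yields $1-g(w)=c_t(1-w)$ with $c_t=1-\bar{\pi}_t^{-1}\pi_t\in[0,1]$, so $\text{IARL}_t(\uu)=c_t\sum_{i=1}^p(1-u_i)$; this is entrywise decreasing in $\uu$ and appending increasing because each new summand $c_t(1-u_{p+1})$ is nonnegative. Hence Assumption~\ref{assump:utility} is verified, and Theorem~\ref{thm:uniform optimality} applies.

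The main obstacle, if one insists on the polynomial toolkit, is the GLFWER case: its elementary-symmetric-polynomial form in \eqref{eq:GLFWER} has alternating signs that make a direct invocation of Proposition~\ref{prop:poly1} together with the coefficient inequalities in Remark~\ref{corollary:poly2} fairly technical. The probabilistic route through coupled Bernoullis sidesteps the arithmetic checks and is the device I would rely on; LFWER is the special case $m=1$ and is handled by the same argument. All other cases reduce to one-line verifications, and no further structural work is required beyond invoking Theorem~\ref{thm:uniform optimality}.
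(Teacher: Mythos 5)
Your proof is correct and follows the same overall strategy as the paper: reduce the corollary to Theorem~\ref{thm:uniform optimality} by verifying Assumptions~\ref{assump:non-empty}, \ref{assump:risk}, and \ref{assump:utility} for each admissible risk paired with $U_t=\text{IARL}_t$. The one place where you genuinely diverge is the GLFWER/LFWER monotonicity check. The paper routes this through Proposition~\ref{prop:poly1}: it exploits the multilinearity of $\xi_m$ (via Lemma~\ref{lemma:maximum}) to reduce entrywise and appending monotonicity to inequalities at the vertices $\uu^{i,p-i}$ of the cube, and only then uses the probabilistic identity $\xi_m(\uu^{i,p-i})=\ind(p-i\geq m)$ to evaluate those vertex values. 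You instead prove monotonicity on all of $[0,1]^p$ directly by a monotone coupling of independent Bernoulli variables $B_i$ with means $u_i$, using $\text{GLFWER}_m(\uu)=\prob(\sum_i B_i\geq m)$; raising a mean or appending a coordinate can only increase $\sum_i B_i$ pathwise. Your coupling argument is more elementary and self-contained --- it subsumes the paper's Lemma~\ref{lemma:GLFWER} as a byproduct --- while the paper's vertex-reduction machinery is reusable for other symmetric polynomial risks whose vertex values are computable. The remaining verifications (LFNR via the running-mean inequality on sorted tuples, IADD, and IARL via $1-g(w)=(1-\bar{\pi}_t^{-1}\pi_t)(1-w)$) coincide with the paper's; you in fact treat the $R_t=\text{IADD}_t$ case explicitly, which the paper's write-up of this corollary passes over.
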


We point out that $\text{LFDR}_t$ described in Example~\ref{eg:LFDR} does not satisfy the assumptions made in Theorem~\ref{thm:uniform optimality}. Thus, we do not have uniform optimality results for it. Indeed, if   $R_t=\text{LFDR}_t$, then the uniformly optimal sequential decision may not exist. A counterexample is given below.
\begin{cexample} \label{eg:counterexample}
	Let $K = 3$,  $p_{k,t}(x)=(0.01)^x(0.99)^{1-x}$, $q_{k,t}(x)=(0.99)^x(0.01)^{1-x}$, and $\mathbb{P}(\tau_k=l)=1/3$ for $k=1,2,3$, $x=0,1$,  $l=0,1,2$, and $t\geq 1$. That is, the pre- and post- change distributions are Bernoulli distributions with parameters $0.01$ and $0.99$, respectively, and $\tau_k$s are uniformly distributed over $\{0,1,2\}$.
	We further assume that the tolerance level $\alpha = 0.51$, the risk process $R_t=\text{LFDR}_t$ (see Example \ref{eg:LFDR}) and the utility process $U_t=-\text{IADD}_t$ (see Example \ref{eg:detection-delay}). 
	
	In this setting, there does not exist a sequential decision achieving the maximum of the expected utility at both times $1$ and $2$. This implies that there is no uniformly optimal sequential decision. We leave detailed calculation in Appendix \ref{sec:proof-uniform-optimality}.
\end{cexample}

\subsection{Implications on Aggregated Risk}\label{sec:aggregated-risk-theory}
Let  $\{ a_t \}_{t \geq 1}$ be a sequence of non-negative random variables satisfying $\sum_{t = 1}^{\infty} a_t = 1$, and $\{b_t\}_{t\geq 1}$ be a sequence of non-negative constants. Consider the following aggregated risk (AR) and aggregated utility (AU),
\begin{equation}\label{eq:AR}
    \text{AR}=\EE\big(\sum_{t=1}^{\infty}a_t R_t\big) \text{ and }  \text{AU}=\EE\big(\sum_{t=1}^{\infty}b_t U_t\big).
\end{equation}
The aggregated risk and utility metrics defined above provide a summary of the performance across time. These types of risk and utility measures are considered in many recent works on multi-stream sequential change detection and hypothesis testing, including \cite{bartroff2020sequential,song2019sequential,song2017asymptotically, chen2020false}.

The next proposition shows that if the risk process is controlled at the desired tolerance level at every time point, then the aggregated risk is also controlled at the same level.
\begin{prop} \label{prop:AR}
Let $\delta\in\mathcal{D}_{\alpha}$ and $\text{AR}(\delta)$ be the corresponding aggregated risk defined in \eqref{eq:AR}. Then,
$
\text{AR} (\delta) \leq \alpha.
$
\end{prop}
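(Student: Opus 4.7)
The plan is to reduce Proposition~\ref{prop:AR} to the pointwise-in-$t$ bound that defines $\mathcal{D}_{\alpha}$, leveraging only the nonnegativity of the weights $\{a_t\}$ and the normalization $\sum_t a_t = 1$. The argument is essentially a convex-combination estimate combined with monotonicity of expectation, so the proof should be short.

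First, I would recall that $\delta\in\mathcal{D}_{\alpha}$ means $R_t(\delta)\leq\alpha$ almost surely for every $t\in\mathbb{Z}_+$. Since $a_t\geq 0$, multiplying yields $a_t R_t(\delta)\leq a_t\alpha$ almost surely for each $t$. For any finite truncation level $T$, summing over $t=1,\ldots,T$ gives
\begin{equation*}
\sum_{t=1}^{T} a_t R_t(\delta) \;\leq\; \alpha\sum_{t=1}^{T} a_t \;\leq\; \alpha
\end{equation*}
almost surely, using $\sum_{t=1}^{\infty} a_t = 1$ and $a_t\geq 0$. Letting $T\to\infty$, the partial sums increase in an "$\alpha$-dominated" sense, and on the a.s. event where $\sum_t a_t R_t(\delta)$ is well-defined as implied by the statement of the aggregated risk, we obtain $\sum_{t=1}^{\infty} a_t R_t(\delta) \leq \alpha$ almost surely.

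The second step is to take expectations. Because the random variable $\sum_{t=1}^{\infty} a_t R_t(\delta)$ is bounded above by the constant $\alpha$ almost surely, monotonicity of expectation immediately yields
\begin{equation*}
\text{AR}(\delta) \;=\; \EE\Bigl(\sum_{t=1}^{\infty} a_t R_t(\delta)\Bigr) \;\leq\; \alpha,
\end{equation*}
which is the conclusion. If one wanted to avoid invoking convergence of the full series, one could instead apply Fatou's lemma (in reverse, using $-R_t$ shifted) or simply define $\text{AR}$ via $\limsup$ of the truncated sums; in either case the a.s.\ pointwise bound $\sum_{t=1}^{T} a_t R_t \leq \alpha$ transfers to the expectation.

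There is no substantive obstacle here: the proposition is a direct consequence of the defining inequality of $\mathcal{D}_{\alpha}$ and the fact that $\{a_t\}$ forms an a.s.\ probability weighting. The only subtlety worth flagging is measurability and integrability of the infinite sum, which is handled by the uniform a.s.\ upper bound $\alpha$ together with Assumption~\ref{assump:non-empty} (which guarantees the risks remain well-defined real-valued random variables along any $\delta\in\mathcal{D}_{\alpha}$).
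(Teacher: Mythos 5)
Your proof is correct and follows essentially the same route as the paper's: both arguments multiply the defining a.s.\ bound $R_t(\delta)\leq\alpha$ by the nonnegative weights $a_t$, sum, use $\sum_t a_t = 1$, and take expectations. Your version merely spells out the truncation/limit step that the paper leaves implicit.
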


Note that the reverse statement does not hold. That is, the aggregated risk being controlled does not imply the risk at each time $t$ being controlled.

The next proposition shows that a uniformly optimal sequential decision also maximizes the aggregated utility. 
\begin{prop} \label{prop:AU}
Suppose that $\delta$ is uniformly optimal in $\mathcal{D}_{\alpha}$. Then,
for the aggregated utility defined in \eqref{eq:AR},
$$
\text{AU} (\delta)  
= \sup_{\delta^{\prime} \in \mathcal{D}_{\alpha}} \text{AU} (\delta^{\prime}),
$$
where $\text{AU}(\delta)$ and $\text{AU}(\delta^{\prime})$ denote the aggregated utility associated with $\delta$ and $\delta^{\prime}$, respectively.
\end{prop}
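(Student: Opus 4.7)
The plan is to reduce the claim to the pointwise (in $t$) uniform optimality hypothesis by linearity and the non-negativity of the weights $b_t$. The uniform optimality of $\delta$ says that, for every $t\in\mathbb{Z}_+$ and every $\delta'\in\mathcal{D}_\alpha$,
\[
\mathbb{E}[U_t(\delta)]\ \geq\ \mathbb{E}[U_t(\delta')].
\]
The aggregated utility is an expectation of a non-negatively weighted sum of the $U_t$, so it should just inherit this inequality term-by-term.

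The first step is to justify interchanging the expectation and the infinite sum in
\[
\mathrm{AU}(\delta')\ =\ \mathbb{E}\Big[\sum_{t=1}^{\infty} b_t\, U_t(\delta')\Big].
\]
By Assumption~\ref{assump:non-empty}, each $u_t$ is bounded, say $|U_t|\leq M_t$ almost surely for some constant $M_t$. The series $\sum_t b_t U_t$ is dominated in absolute value by the deterministic series $\sum_t b_t M_t$; for $\mathrm{AU}$ to be well-defined as written in \eqref{eq:AR}, this dominating series must be finite. Under this integrability, Fubini's theorem (equivalently, dominated convergence applied to the partial sums) yields
\[
\mathrm{AU}(\delta')\ =\ \sum_{t=1}^{\infty} b_t\, \mathbb{E}[U_t(\delta')],
\]
and the same identity holds for $\delta$.

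The second step is to combine this identity with uniform optimality. Fix any $\delta'\in\mathcal{D}_\alpha$. Since $b_t\geq 0$ and $\mathbb{E}[U_t(\delta)]\geq \mathbb{E}[U_t(\delta')]$ for every $t$, multiplying termwise by $b_t$ and summing preserves the inequality:
\[
\mathrm{AU}(\delta)\ =\ \sum_{t=1}^{\infty} b_t\, \mathbb{E}[U_t(\delta)]\ \geq\ \sum_{t=1}^{\infty} b_t\, \mathbb{E}[U_t(\delta')]\ =\ \mathrm{AU}(\delta').
\]
Taking the supremum over $\delta'\in\mathcal{D}_\alpha$ gives $\mathrm{AU}(\delta)\geq \sup_{\delta'\in\mathcal{D}_\alpha}\mathrm{AU}(\delta')$. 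The reverse inequality is immediate from $\delta\in\mathcal{D}_\alpha$, yielding equality.

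The proof is essentially a one-liner once the interchange of sum and expectation is set up, so there is no substantive obstacle; the only point requiring any care is the integrability justification for Fubini, which follows from the per-time boundedness in Assumption~\ref{assump:non-empty} together with the implicit well-definedness of $\mathrm{AU}$ in \eqref{eq:AR}. No further structure on $R_t$, $U_t$, on the prior, or on the pre/post-change distributions is needed, which is why this proposition (unlike Theorem~\ref{thm:uniform optimality}) requires no additional assumptions beyond those already in force.
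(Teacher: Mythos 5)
Your proposal is correct and follows essentially the same route as the paper's proof: interchange expectation and sum, then apply the termwise inequality $\mathbb{E}[U_t(\delta)]\geq\mathbb{E}[U_t(\delta')]$ from uniform optimality together with $b_t\geq 0$. Your explicit justification of the Fubini step via the boundedness of $u_t$ in Assumption~\ref{assump:non-empty} is a small addition the paper leaves implicit, but the argument is identical in substance.
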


Next, we use Propositions~\ref{prop:AR} and \ref{prop:AU} to make a connection between the current results and recent works on  the sequential multiple testing and parallel sequential change detection \cite{song2017asymptotically,chen2020false,chen2019compound}.

\subsubsection{Controlling generalized error rates in multi-stream sequential hypothesis testing}
 Note that if $\pi_0+\pi_{\infty}=1$ (i.e., change points either occur at the beginning or never occur), the sequential change point detection problem reduces to a sequential multiple hypotheses testing problem, where the goal is to choose between $H_0^k$ and $H_1^k$ for $k=1,\cdots K$,
 $$
 H_0^k: X_{k,t} \sim p_{k,t} \text{ for all } t \text{  against  }  H_1^k:  X_{k,t}\sim q_{k,t} \text{ for all }t,
 $$
 under a Bayesian setting, where $\prob(H_0^k \text{ holds})=\pi_{\infty}$ and $\prob(H_1^k \text{ holds})=\pi_0$. In addition, we assume that $X_{k,t}$ are jointly independent.%

Let $m \geq1$. We define the generalized family-wise error rate (GFWER) as
\begin{equation} \label{eq:GFWER}
\text{GFWER}_m: = \prob(E_{m,T}),
\end{equation}
where $T$ is a stopping time and the event $E_{m,t}$ is defined in \eqref{eq:def-Emt}. $\text{GFWER}_m$ can be viewed as a generalized family-wise error rate measuring type-II errors in sequential multiple hypotheses testing, which takes a similar form as the generalized type-II error rate in \cite{song2017asymptotically,song2019sequential,bartroff2018multiple,bartroff2014sequential}. Specifically, if we reject $H_0^k$ at time $t$ if and only if $k\in S_{t+1}$. Then,
$$
E_{m,t} = \Big\{ \sum_{k=1}^K \ind (H_1^k \text{ holds and } H_0^k \text{ is chosen at time }t  ) \geq m \Big\}
$$
in the context of multiple hypotheses testing. 

The next corollary of  Proposition~\ref{prop:AR} shows that the proposed method $\sD$ controls the GFWER in the perspective of the hypothesis testing problem.
\begin{corollary} \label{corollary:HT}
Given the tolerance level $\alpha$, consider the sequential decision $\sD$ in Algorithm \ref{alg:simp} with the risk process $R_t=\text{GLFWER}_{m,t}$ defined in Example \ref{eg:GLFWER} and any utility process. Then, the generalized family-wise error rate $\text{GFWER}_m$ defined in \eqref{eq:GFWER} is also controlled to be no greater than $\alpha$ for any stopping time $T$.
\end{corollary}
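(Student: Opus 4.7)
The plan is to realize $\text{GFWER}_m = \prob(E_{m,T})$ as a weighted aggregated risk and then invoke Proposition~\ref{prop:AR}. First, I would recall two facts about the procedure $\sD$ run with risk process $R_t = \text{GLFWER}_{m,t}$: by Proposition~\ref{prop:simp}, $\sD \in \mathcal{D}_\alpha$, so $R_t = \text{GLFWER}_{m,t} \leq \alpha$ a.s.\ for every $t$; and by the definition in Example~\ref{eg:GLFWER}, $\text{GLFWER}_{m,t} = \prob(E_{m,t}\mid\fil_t)$, which is legitimate because $S_{t+1}$ (and hence $E_{m,t}$ given the $\tau_k$) is $\fil_t$-measurable while the $\tau_k$ are not.

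Second, I would define the random weights $a_t := \ind(T=t)$. Because $T$ is a stopping time with respect to $\{\fil_t\}$, the event $\{T=t\}\in\fil_t$, so each $a_t$ is $\fil_t$-measurable and non-negative with $\sum_{t\geq 1} a_t \leq 1$ a.s. (with equality when $T<\infty$ a.s., which is the relevant case in sequential hypothesis testing). Conditioning on $\fil_t$ inside the expectation and then using the tower property yields
\begin{equation*}
\prob(E_{m,T}) \;=\; \sum_{t=1}^\infty \EE\bigl[\ind(T=t)\,\ind(E_{m,t})\bigr] \;=\; \sum_{t=1}^\infty \EE\bigl[a_t\,\prob(E_{m,t}\mid\fil_t)\bigr] \;=\; \EE\!\left[\sum_{t=1}^\infty a_t\,\text{GLFWER}_{m,t}\right].
\end{equation*}

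The last expression is exactly an aggregated risk of the form \eqref{eq:AR} for $\sD$, so Proposition~\ref{prop:AR} immediately gives $\prob(E_{m,T})\leq \alpha$ (after normalizing or padding the weights with a residual at $t=\infty$ if $\prob(T=\infty)>0$). Alternatively, one can sidestep Proposition~\ref{prop:AR} and bound the display directly using $\text{GLFWER}_{m,t}\leq \alpha$ a.s., giving $\prob(E_{m,T})\leq \alpha\,\EE[\sum_t a_t]\leq \alpha$. The argument is essentially a one-liner; there is no genuine analytical obstacle, and the only step worth flagging is the verification that $\{\ind(T=t)\}_{t\geq 1}$ satisfies the adaptivity and normalization requirements of Proposition~\ref{prop:AR}, which is precisely what the stopping-time hypothesis provides.
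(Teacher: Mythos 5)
Your proposal is correct and follows essentially the same route as the paper: the paper's proof likewise writes $\text{GFWER}_m = \EE(R_T) = \EE\bigl(\sum_{t=1}^{\infty}\ind(T=t)R_t\bigr)$ and applies Proposition~\ref{prop:AR} with $a_t=\ind(T=t)$. Your version is in fact slightly more careful, since you explicitly verify the $\fil_t$-measurability of $\ind(T=t)$ needed for the conditioning step and address the case $\sum_t a_t<1$, both of which the paper leaves implicit.
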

Note that the above Corollary \ref{corollary:HT} holds for any stopping time $T$. In particular, if we let $T$ grow to infinity, then Corollary~\ref{corollary:HT} states that the GFWER accumulated over all the time points is controlled to be no greater than $\alpha$. If we let $T= T_1$ as defined in Remark~\ref{remark:connection-with-Poor}, then different data streams are stopped at the same detection time $T_1$. In this case, the proposed sequential procedure belongs to the class of sequential multiple testing procedures described in \cite{song2019sequential}.

\subsubsection{Controlling aggregated false discovery rate}
The aggregated false discovery rate (AFDR) is considered in \cite{chen2020false},
\begin{align} %
\text{AFDR} &= \EE \left( \frac{\sum_{t = 1}^{\bar{N}-1} \sum_{k = 1}^K \mathds{1}\left(\tau_{k} \geq t, N_k = t \right)}{ \left(\sum_{t = 1}^{\bar{N}-1} \sum_{k = 1}^K \mathds{1}\left( N_k = t \right)\right) \vee 1} \right), \label{eq:fdr}
\end{align}
where $\bar{N}$ is a positive integer that is referred to as a `deadline'. 
The next proposition states that any decision that controls $\text{LFDR}_t$ at every time also controls AFDR asymptotically.

\begin{prop} \label{prop:coonection-to-poor}
Let $R_t=\text{LFDR}_t$ and $\delta \in \mathcal{D}_{\alpha}$. 
Assume that there exist a sequence of constants $\{C_t\}_{t\geq1}$ and a sequence of random variables $\{A_t\}_{t\geq1}$ such that $K^{-1} \sum_{k = 1}^K \ind \left( N_k = t \right)$ converges to $C_t$ in probability and $\text{FDP}_t$ converges to $A_t$ in probability for all $t\geq 1$ as $K$ grows to infinity. Then,  $\lim_{K \rightarrow \infty} \text{AFDR}(\delta) \leq \alpha$. That is, $\text{AFDR}$ is controlled to be no greater than $\alpha$ asymptotically.
\end{prop}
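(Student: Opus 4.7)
The plan is to rewrite $\text{AFDR}$ as a ratio of normalized sums and then combine the pointwise $\text{LFDR}$ constraint with the asymptotic assumptions. Since $N_k=t$ is precisely the event $\{k\in S_t\setminus S_{t+1}\}$, we have $\sum_{k=1}^{K}\ind(N_k=t)=|S_t\setminus S_{t+1}|$ and $\sum_{k=1}^{K}\ind(\tau_k\geq t,\,N_k=t)=|S_t\setminus S_{t+1}|\cdot\text{FDP}_t$. Setting $X_K:=\sum_{t=1}^{\bar N-1}|S_t\setminus S_{t+1}|\,\text{FDP}_t$ and $Y_K:=\sum_{t=1}^{\bar N-1}|S_t\setminus S_{t+1}|$, the definition in \eqref{eq:fdr} becomes $\text{AFDR}(\delta)=\EE[X_K/(Y_K\vee 1)]$.

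The key scalar bound is $\EE[A_t]\leq\alpha$. Since $|S_t\setminus S_{t+1}|$ is $\fil_t$-measurable and $\text{LFDR}_t=\EE[\text{FDP}_t\mid\fil_t]\leq\alpha$ almost surely (because $\delta\in\mathcal{D}_\alpha$ with $R_t=\text{LFDR}_t$), the tower property yields $\EE[\text{FDP}_t]=\EE[\text{LFDR}_t]\leq\alpha$ for every $t$. Because $\text{FDP}_t\in[0,1]$ and $\text{FDP}_t\to A_t$ in probability, bounded convergence (valid for convergence in probability under a uniform bound) gives $\EE[A_t]=\lim_{K\to\infty}\EE[\text{FDP}_t]\leq\alpha$.

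The core step is the asymptotic analysis of the ratio itself. Dividing by $K$, write $X_K/(Y_K\vee 1)=(X_K/K)/\big((Y_K/K)\vee(1/K)\big)$. Slutsky combined with the assumed convergences $K^{-1}|S_t\setminus S_{t+1}|\to C_t$ and $\text{FDP}_t\to A_t$ in probability delivers $X_K/K\to D':=\sum_{t=1}^{\bar N-1}C_tA_t$ and $Y_K/K\to D:=\sum_{t=1}^{\bar N-1}C_t$ in probability (the sums are finite so term-by-term convergence suffices). Assuming $D>0$, the denominator is eventually bounded away from $0$, so the continuous mapping theorem together with bounded convergence (the ratio lies in $[0,1]$) yields $\lim_K\text{AFDR}(\delta)=\EE[D'/D]=D^{-1}\sum_t C_t\EE[A_t]\leq\alpha$, using linearity together with the scalar bound from the previous paragraph.

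The main obstacle is the degenerate regime $D=0$, where the limiting ratio is the indeterminate form $0/0$ and the continuous-mapping argument breaks down. A natural workaround is the monotonicity inequality $X_K/(Y_K\vee 1)\leq(X_K+\epsilon K)/(Y_K+\epsilon K)$ for every $\epsilon>0$, which follows from $X_K\leq Y_K\leq Y_K\vee 1$ together with the elementary fact that $a/b\leq(a+c)/(b+c)$ whenever $0\leq a\leq b$ and $c>0$; bounded convergence applied to the right-hand side yields $\limsup_K\text{AFDR}(\delta)\leq(\alpha D+\epsilon)/(D+\epsilon)$, and $\epsilon\downarrow 0$ recovers the bound when $D>0$. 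In the pathological case $D=0$, where $K^{-1}Y_K\to 0$ forces $\EE[Y_K]=o(K)$ and detections are asymptotically sparse, the bound must be salvaged by invoking $\text{LFDR}_t\leq\alpha$ at each realized detection epoch individually rather than through the averaging used above; carefully handling this edge case is where the most delicate bookkeeping lies.
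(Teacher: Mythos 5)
Your argument follows the same route as the paper's: rewrite $\text{AFDR}$ as an expectation of a ratio of empirical sums, use the tower property to get $\EE[\text{FDP}_t]=\EE[\text{LFDR}_t]\leq\alpha$ and hence $\EE[A_t]\leq\alpha$ by bounded convergence, then pass to the limit in the ratio via the assumed convergences in probability. The paper does this term by term, asserting that $\frac{G_K^t}{M_K\vee 1}\text{FDP}_t\to\frac{C_t}{M\vee 1}A_t$ in probability (with $G_K^t=K^{-1}\sum_k\ind(N_k=t)$, $M_K=\sum_tG_K^t$, $M=\sum_tC_t$) and then applying dominated convergence together with $\sum_tC_t/(M\vee 1)\leq 1$; your $\epsilon$-regularization $X_K/(Y_K\vee 1)\leq(X_K+\epsilon K)/(Y_K+\epsilon K)$ is a cleaner way to justify the same limit when $D=\sum_tC_t>0$, since it avoids dividing by a quantity that may vanish.

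On the degenerate case $D=0$ that you flag as unresolved: you have not missed a device that the paper supplies. The paper's proof covers this case only by asserting the displayed convergence in probability, and that assertion is precisely what can fail when $M=0$ (for instance, if exactly one stream is always deactivated at $t=1$, then $G_K^1/(M_K\vee K^{-1})\equiv 1$ while $C_1=0$, so the term tends to $A_1$ rather than to $0$). Your proof is therefore complete on the same set of configurations on which the paper's argument is actually valid, and your last paragraph correctly locates the one place where either extra work or an additional non-degeneracy hypothesis (such as $\sum_tC_t>0$) is needed; the pointwise bound $\text{LFDR}_t\leq\alpha$ alone does not decouple the random weights $|S_t\setminus S_{t+1}|/(Y_K\vee 1)$ from the $\text{FDP}_t$'s when $Y_K$ remains of order one. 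As submitted, your proof is incomplete in exactly the same edge case as the published one.
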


\subsubsection{Maximizing total average run length}
Let the total average run length (TARL) be
\begin{equation}
    \text{TARL} = \sum_{k=1}^K (\tau_k\wedge N_k),
\end{equation}
where $N_k$ is defined in \eqref{eq:Nk}. TARL aggregates IRL$_t$ across different time points, and can be viewed as an extension of the classic ARL2FA to multi-stream problems.
The next corollary of Proposition~\ref{prop:AU} shows that the proposed method also maximizes TARL under certain conditions.
\begin{corollary}\label{corollary:EU}
Under Assumptions~\ref{assump:iid}, and $R_t\in \{\text{LFNR}_t,\text{LFWER}_t, \text{GLFWER}_t, \text{IADD}_t\}$,
$$
\EE(\text{TARL}(\sD)) = \sup_{\delta\in\mathcal{D}_{\alpha}}
\EE(\text{TARL}(\delta)),
$$
where $\sD$ is obtained from Algorithm~\ref{alg:dec2} by letting $U_t=\text{IARL}_t$, and
 $\text{TARL}(\sD)$ and $\text{TARL}(\delta)$ denote the the total average run length (TARL) of the decision $\sD$ and $\delta$, respectively.
\end{corollary}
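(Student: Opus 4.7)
The plan is to realize $\EE(\text{TARL}(\delta))$ as an aggregated utility with $U_t = \text{IARL}_t$ and weights $b_t \equiv 1$, and then invoke the uniform optimality of $\sD$ established in Corollary~\ref{corollary:eg} together with Proposition~\ref{prop:AU}. The central identity is the telescoping observation already implicit in Example~\ref{eg:ARL}: for every $t \geq 1$,
$$
\sum_{s=0}^{t-1}\text{IRL}_s \;=\; \sum_{k=1}^{K}(\tau_k \wedge N_k \wedge t) \;-\; \sum_{k=1}^{K}(\tau_k \wedge N_k \wedge 0) \;=\; \sum_{k=1}^{K}(\tau_k \wedge N_k \wedge t),
$$
where the last equality uses $\tau_k \geq 0$ and $N_k \geq 1$ (since $S_1 = \langle K \rangle$). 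Letting $t \to \infty$ and applying monotone convergence pointwise (each increment $\text{IRL}_s$ is nonnegative) gives, almost surely,
$$
\text{TARL}(\delta) \;=\; \sum_{k=1}^{K}(\tau_k \wedge N_k) \;=\; \sum_{s=0}^{\infty}\text{IRL}_s.
$$

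Next I would take expectations, exchange sum and expectation by Tonelli (all terms nonnegative), and use the tower property together with $\text{IARL}_s = \EE(\text{IRL}_s \mid \mathcal{F}_s)$ to obtain
$$
\EE\bigl(\text{TARL}(\delta)\bigr) \;=\; \sum_{s=0}^{\infty}\EE(\text{IRL}_s) \;=\; \EE(\text{IRL}_0) \;+\; \sum_{s=1}^{\infty}\EE\bigl(\text{IARL}_s(\delta)\bigr).
$$
The $s=0$ term equals $\sum_{k=1}^{K}\EE(\tau_k \wedge 1)$, which does not depend on $\delta$ because $N_k \geq 1$ always. Hence maximizing $\EE(\text{TARL}(\delta))$ over $\mathcal{D}_\alpha$ is equivalent to maximizing the aggregated utility $\text{AU}(\delta) = \EE\bigl(\sum_{s=1}^{\infty} b_s \,\text{IARL}_s(\delta)\bigr)$ with $b_s \equiv 1$, which fits the framework of \eqref{eq:AR}.

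Finally I would assemble the pieces. Under Assumption~\ref{assump:iid} and the choice of $R_t \in \{\text{LFNR}_t, \text{LFWER}_t, \text{GLFWER}_t, \text{IADD}_t\}$ with $U_t = \text{IARL}_t$, Corollary~\ref{corollary:eg} guarantees that $\sD$ is uniformly optimal in $\mathcal{D}_\alpha$. Proposition~\ref{prop:AU} then asserts that every uniformly optimal decision maximizes any aggregated utility with nonnegative constant weights, and in particular the one isolated above. Combining these yields $\EE(\text{TARL}(\sD)) = \sup_{\delta \in \mathcal{D}_\alpha} \EE(\text{TARL}(\delta))$. There is no real obstacle here: the argument is essentially tower property plus telescoping, and the only subtlety worth flagging is that the $s=0$ term in the telescoping sum depends only on the prior on $\tau_k$, not on the decision, so it can be dropped from the optimization without affecting the conclusion.
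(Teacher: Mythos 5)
Your proposal is correct and follows essentially the same route as the paper's own proof: write $\text{TARL}=\sum_{s\geq 0}\text{IRL}_s$ via the telescoping identity, pass to $\sum_{s\geq 0}\text{IARL}_s$ by the tower property, and then invoke the uniform optimality of $\sD$ from Corollary~\ref{corollary:eg} together with Proposition~\ref{prop:AU} with $b_t\equiv 1$. Your explicit observation that the $s=0$ increment is decision-independent (since $S_1=\langle K\rangle$ forces $N_k\geq 1$) is a small bookkeeping refinement that the paper's proof leaves implicit, but it does not change the argument.
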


\section{A Simulation Study}\label{sec:sim}

In this section, we study the performance of the proposed sequential decision $\sD$ defined in Algorithm~\ref{alg:dec2} through a simulation study. We choose $R_t=\text{LFDR}_t$ and $U_t=-\text{IADD}_t$ in  the simulation study and 
let the tolerance level $\alpha = 0.1$. We also compare the performance of the proposed method with the MD-FDR method proposed in \cite{chen2020false}.

We also conduct a simulation study where $R_t=\text{LFNR}_t$.
For space reason, we leave it to the appendix in the supplementary materials.

We let
$p_{k,t}(x) = (2\pi)^{-1/2}e^{-\frac{x^2}{2}}$
and $q_{k,t}(x) = (2\pi)^{-1/2}e^{-\frac{(x-1)^2}{2}}$
for all $k$ and $t$. In addition, let  $\pi_{\infty}=0.2$, and $
\pi_t = 0.1\cdot 0.8\cdot (0.9)^{t}
$ for $t\geq 0$.  
That is, we set the pre- and post-change probability distributions to be the Gaussian distributions $\mathcal{N}(0,1)$ and $\mathcal{N}(1,1)$, respectively, and set the prior distribution for the change point $\tau_k$ to be a mixture of a point mass at infinity and a geometric distribution.

We assess and compare the performance of two  sequential decisions. The first sequential decision is the the proposed method $\sD$ described in Algorithm \ref{alg:dec2} with $R_t=\text{LFDR}_t$ (defined in Example \ref{eg:LFDR}) and $U_t=-\text{IADD}_t$ (defined in Example \ref{eg:detection-delay}). With this choice of $R_t$ and $U_t$, line 4 in Algorithm \ref{alg:simp} can be simplified as
$$
 n^* =\arg\min\limits_{n = 0,1, \ldots,\left|S_{t}\right| } \{n: \gamma_{n} \leq \alpha \}.
$$
The other sequential decision is the MD-FDR method developed in \cite{chen2020false}. Following the MD-FDR method, the risk measure AFDR defined in \eqref{eq:fdr} is guaranteed to be no greater than the tolerance level $\alpha$.

We first compare the proposed method with the MD-FDR method in terms of their FDP$_t$ (defined in \eqref{eq:FDP}) and $\text{IDD}_t$ (defined in \eqref{eq:IDD}) for  fixed $K=500$ with $1000$ independent Monte Carlo simulations. The  averaged $\text{FDP}_t$ and $\text{IDD}_t$ across the $1000$ replications are plotted in Figures \ref{fig:sim1_FDPt} and \ref{fig:sim1_IDDt} as functions of $t$. 
According to Figure~\ref{fig:sim1_FDPt},  the averaged $\text{FDP}_t$ of both methods are below $0.1$ for all $t$ with a trend of first increasing and then decreasing as $t$ increases. The $\text{FDP}_t$ of the proposed method has a plateau near $0.1$ for $t \in [5,20]$. In addition, the $\text{FDP}_t$ of the proposed method is larger than that of the MD-FDR method, which suggests that the proposed method is less conservative while still controlled under the target tolerance level. Figure~\ref{fig:sim1_IDDt} compares the averaged $\text{IDD}_t$ of the proposed method and the MD-FDR method for different $t$. 
It displays that, for both methods, 
$\text{IDD}_t$ first increases and then decreases as $t$ increases.   The proposed method has a lower averaged $\text{IDD}_t$ than the MD-FDR method for all $t$, indicating a smaller detection delay.

Next, we compare the two methods in terms of aggregated performance measures. In particular, we
consider the aggregated risk AFDR defined in \eqref{eq:fdr}, where we set the deadline parameter $\bar{N} = 500$. For aggregated utility, we consider the  the total average detection delay (TADD), defined as 
\begin{equation} \label{eq:TADD}
\text{TADD} = \EE\Big(\sum_{s=0}^{\bar{N} -1} \text{IDD}_s \Big) = \EE\Big(\sum_{s=0}^{\bar{N} -1} \text{IADD}_s \Big),
\end{equation}
where $\text{IDD}_s$ is defined in \eqref{eq:sum-idd}. Then, we let the aggregated utility  be $\text{AU}=-\text{TADD}$. A higher utility, which corresponds to a lower TADD,  reflects a quicker detection of the changes. 

Tables~\ref{tbl:sim1a_fdr} and \ref{tbl:sim1a_add} compare the two methods in terms of their aggregated risk AFDR and the aggregated utility TADD,  respectively, which are estimated based on a Monte Carlo simulation with 1000 replications. From Table~\ref{tbl:sim1a_fdr}, we can see that both the proposed method and MD-FDR method control AFDR below the tolerance level $\alpha = 0.1$, while the MD-FDR method is more conservative. We also note that as $K$ grows larger, AFDR of the proposed method is approaching $\alpha=0.1$. %
From Table \ref{tbl:sim1a_add}, we can see that the TADD of the proposed method is significantly less than that of the MD-FDR method, indicating that the proposed method detects changes faster than the MD-FDR method, when the AFDR of both methods are controlled at the same level.
An interesting observation is that  TADD of both methods scale with $K$ as $K$ grows. That is, $\text{TADD}/K$ seems to converge to a constant as $K$ grows large. 
Specifically, for the proposed method, $\text{TADD}/K$ is around $3.9$. For  the MD-FDR method, $\text{TADD}/K$ is around $6.5$ for large $K$.

Overall, these results suggests that the proposed method is less conservative and adapts better to the tolerance level than the MD-FDR method.
\begin{figure}
\begin{center}
\includegraphics[scale = 0.50]
{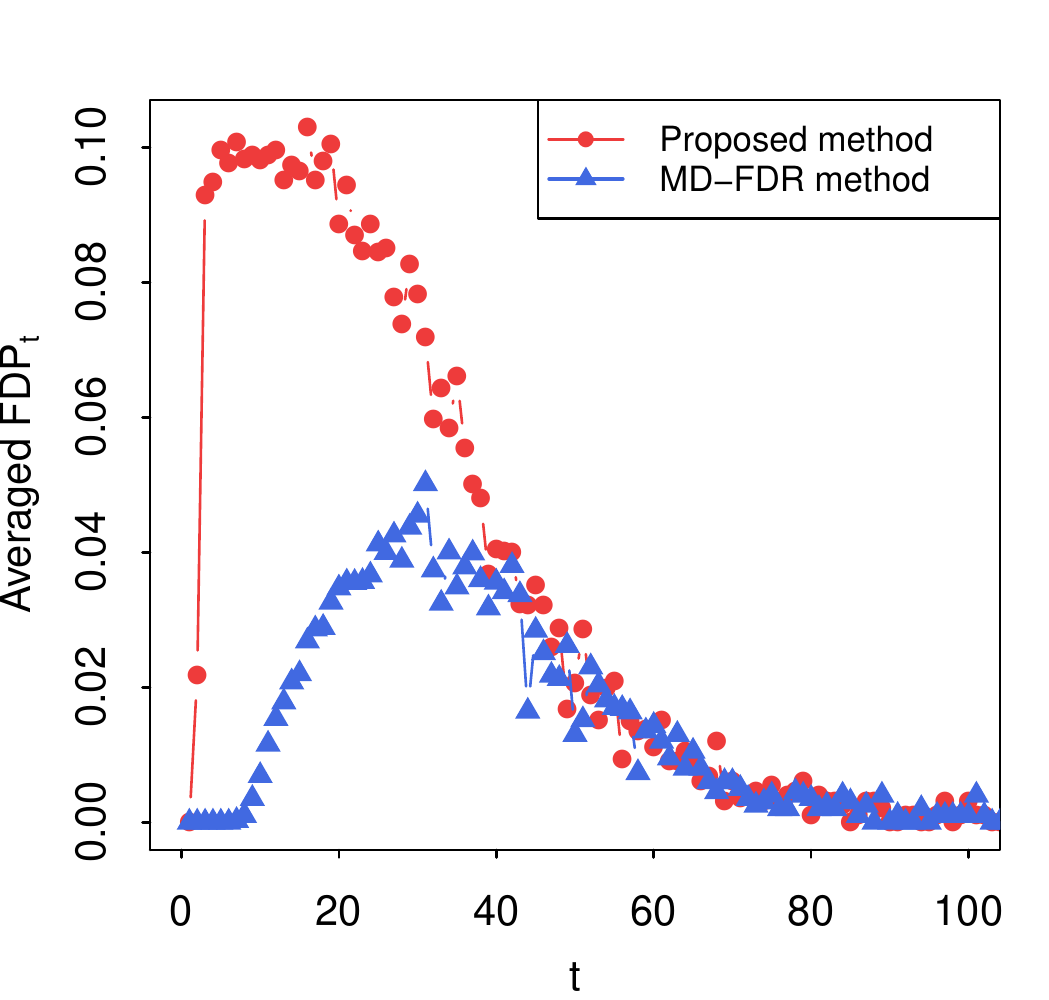}
\end{center}
\caption{$\text{FDP}_t$ averaged over 1000 Monte Carlo simulations at $K = 500$
}
\label{fig:sim1_FDPt}
\end{figure}

\begin{figure}
\begin{center}
\includegraphics[scale = 0.50]
{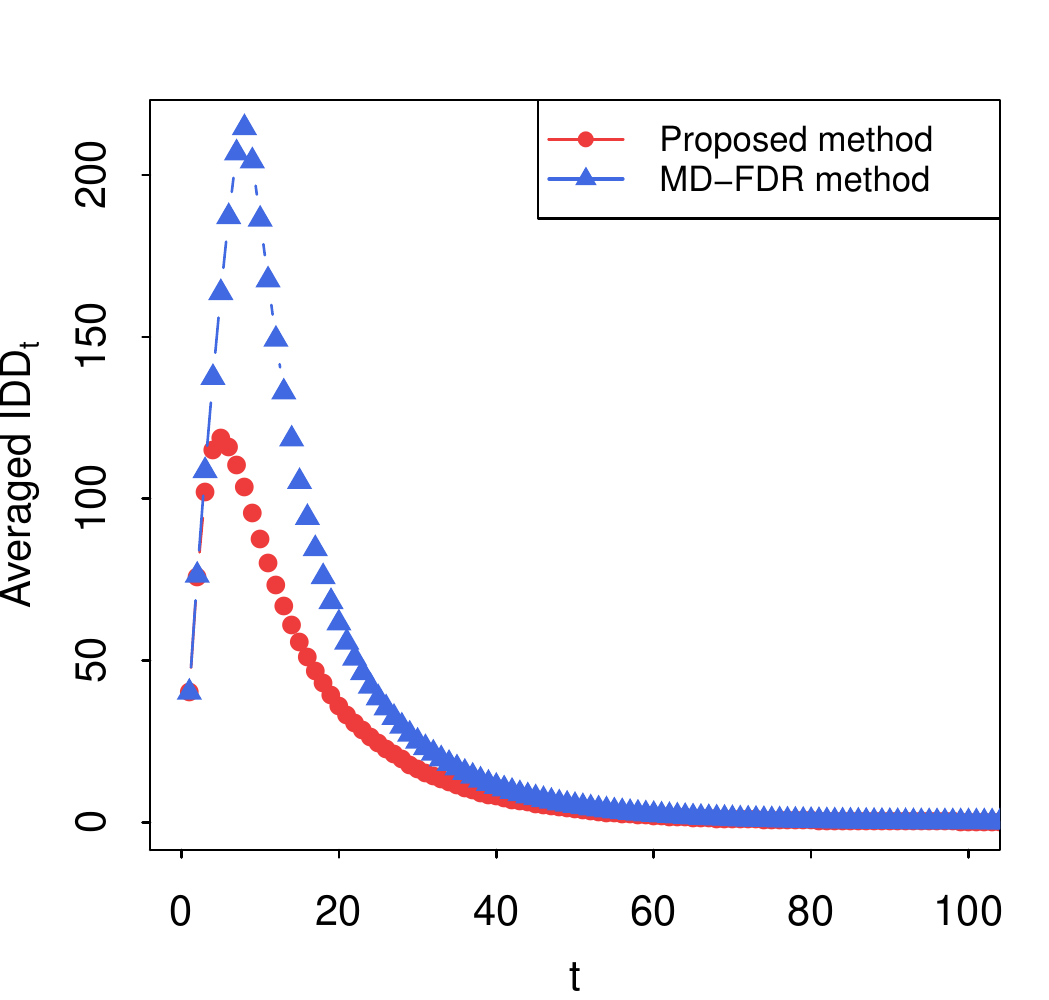}
\end{center}
\caption{$\text{IDD}_t$ averaged over 1000 Monte Carlo simulations at $K = 500$ 
}
\label{fig:sim1_IDDt}
\end{figure}

\begin{table}[ht]
\centering
\begin{tabular}{ccc}
\hline
K & Proposed method & MD FDR method \\ 
\hline
10 & $7.0\times 10^{-2}$ ($3\times10^{-3}$) & $2.8\times 10^{-2}$ ($2\times10^{-3})$ \\ 
100 & $8.6\times 10^{-2}$ ($9\times10^{-4}$) & $2.4\times 10^{-2}$ ($5\times10^{-4}$) \\ 
200 & $9.2\times 10^{-2}$ ($7\times10^{-4}$) & $2.4\times 10^{-2}$ ($4\times10^{-4}$) \\ 
500 & $9.6\times 10^{-2}$ ($5\times10^{-4}$) & $2.3\times 10^{-2}$ ($2\times10^{-4}$) \\ 
1000 & $9.8\times 10^{-2}$ ($3\times10^{-4}$) & $2.4\times 10^{-2}$ ($2\times10^{-4}$) \\ 
   \hline
\end{tabular}
\caption{Estimated AFDR 
(standard deviation in parenthesis) } 
\label{tbl:sim1a_fdr}
\end{table}

\begin{table}[ht]
\centering
\begin{tabular}{ccc}
  \hline
K & Proposed method & MD FDR method \\ 
  \hline
 10 & 45.8 (0.5) & 61.4 (0.5) \\ 
  100 & 413.8 (1.3) & 650 (1.7) \\ 
  200 & 799.8 (1.9) & 1304.1 (2.4) \\ 
  500 & 1964.9 (3.0) & 3264 (3.7) \\ 
  1000 & 3891.4 (4.0) & 6535.3 (5.2) \\ 
   \hline
\end{tabular}
\caption{Estimated TADD 
(standard deviation in parenthesis)} 
\label{tbl:sim1a_add}
\end{table}

\section{A Case Study: Multi-Channel Spectrum Sensing in Cognitive Radios}
\label{sec:case-study}
In this section, we conduct a case study on a multi-channel spectrum sensing problem for cognitive radios, following the settings described in \cite{chen2020false}. 
Cognitive radios are radios that can dynamically and automatically adjust their operational parameters according to the environment so that the spectrum is utilized more efficiently \cite{haykin2005cognitive,mitola1999cognitive}. To make the most out of a spectrum, a cognitive user is allowed to use the idle spectrum band when the primary user is not transmitting. However, when the primary user starts transmission, the cognitive user should detect the change and vacate the spectrum band as soon as possible. The detection of the transmission of the primary user can be formulated as a sequential change detection problem, where the transmission time corresponds to the change point \cite{lai2008quickest,chen2020false}.

We consider a multi-channel spectrum sensing problem for cognitive radios, where there are $K$ independent frequency channels assigned to $K$ independent primary users. 
The cognitive users monitor the spectrum bands and collect signal samples sequentially. The distribution of the signals will change when a primary user starts transmission. As soon as the change is detected, the cognitive user vacates the spectrum band, so that the primary user can use it without interference. Here, each channel corresponds to a data stream, and the time that a primary user starts transmission corresponds to a change point in that data stream. 
Our goal is to have a sequential decision that can detect the transmission of the primary user at each channel quickly to reduce the interference, while controlling the false discovery rate, which corresponds to the expected proportion of unoccupied channels among the detected ones.

Specifically, we assume that  $X_{k,t}$ is the signal collected from the $k$th cognitive user at time $t$, $\tau_k$ is the time when the $k$-th primary user starts transmission, and $X_{k,t}$s and $\tau_k$s follow the change point model described in \eqref{eq:prior} and \eqref{eq:model}.
For the change point $\tau_k$, we further assume that 
$$
\pi_t =  (1 - \PiInf)\theta (1 - \theta)^{t}
$$
with $\PiInf = 0.1$ and $\theta = 0.05$. That is, $\tau_k$ follows a mixture distribution of a point mass at infinity and a geometric distribution.

For the pre- and post- change distributions, we assume 
$$ 
X_{k,t}=
\begin{cases}
Y_{k,t}, &\text{ if } t\leq \tau_k\\
Y_{k,t}+Z_{k,t},
&\text{ if } t>\tau_k
\end{cases},
$$
where $Y_{k,t}$ denotes a Gaussian white noise and $Z_{k,t}$ denotes the faded received primary radio signal at the cognitive user's end. We further assume that $Y_{k,t}\sim \mathcal{C N}(0,\sigma^2)$, $Z_{k,t}\sim \mathcal{C N}(0,\lambda_k)$, and $Y_{k,t}$s and $Z_{k,t}$s are independent, where $\mathcal{C N}(0,\sigma^2 )$ and $\mathcal{C N}(0,  \lambda_k)$ denote the circularly-symmetric complex Gaussian distributions with mean $0$ and the complex variance $\sigma^2$ and $\lambda_k$, respectively.
Note that a complex random variable  has a circularly-symmetric complex Gaussian distribution with a variance $\sigma^2$ if its real and imaginary parts are independent and identically distributed univariate Gaussian random variables with the mean zero and the variance $\sigma^2/2$. 

Under this model,  $X_{k,t}$ has the distribution
$$ 
X_{k,t}\sim
\begin{cases}
\mathcal{C N}(0,\sigma^2 ), &\text{ if } t\leq \tau_k\\
\mathcal{C N}(0, \sigma^2 + \lambda_k ),
&\text{ if } t>\tau_k
\end{cases}.
$$

Notice that in this setting, the streams share the same pre-change distribution, but have different post-change distributions characterized by their different variances. The above distribution assumptions are commonly adopted in the literature \cite{lai2008quickest,chen2020false}.

In this case study, we assume $\sigma^2 =2$ and sample independent $\lambda_k$s from a uniform distribution over $[1,2]$. We then treat $\lambda_k$ as known parameters. Here, we sample $\lambda_k$ from an interval to mimic the practical situation where the signals sent by the primary users may experience channel attenuation at the cognitive user's end, which results in a range of variance-distinct post-change signals.

Let the tolerance level $\alpha = 0.1$. We compare the performance of the proposed sequential decision following Algorithm~\ref{alg:simp} (with $R_t=\text{LFDR}_t$ and $U_t=-\text{IADD}_t$) and the MD-FDR method proposed in \cite{chen2020false}.  
We also consider the aggregated risks AFDR (defined in \eqref{eq:fdr}) and the aggregated utility TADD (defined in \eqref{eq:TADD}). %

Figures~\ref{fig:case_FDPt} and \ref{fig:case_IDDt} show the averaged $\text{FDP}_t$ and $\text{IDD}_t$ for different $t$ based on a Monte Carlo simulation with 1000 replications. We see that $\text{FNP}_t$ of both methods are below 0.1 with a peak at around $t = 12$ and $t = 42$, respectively. According to Figure~\ref{fig:case_FDPt}, the averaged $\text{FNP}_t$ of the MD-FDR method appears to be smaller than that of the proposed method for time $t < 50$, while both of them decline at a similar rate after time $t = 50$. For larger $t$, the  averaged $\text{FNP}_t$ of both methods are close to zero. According to Figure~\ref{fig:case_IDDt}, the averaged $\text{IDD}_t$ of the MD-FDR method is larger than that of the proposed method for all $t$, which suggests that the proposed method detects changes more quickly than that of the MD-FDR method. 

\begin{figure}%
	\begin{center}
		\includegraphics[scale = 0.50]
		{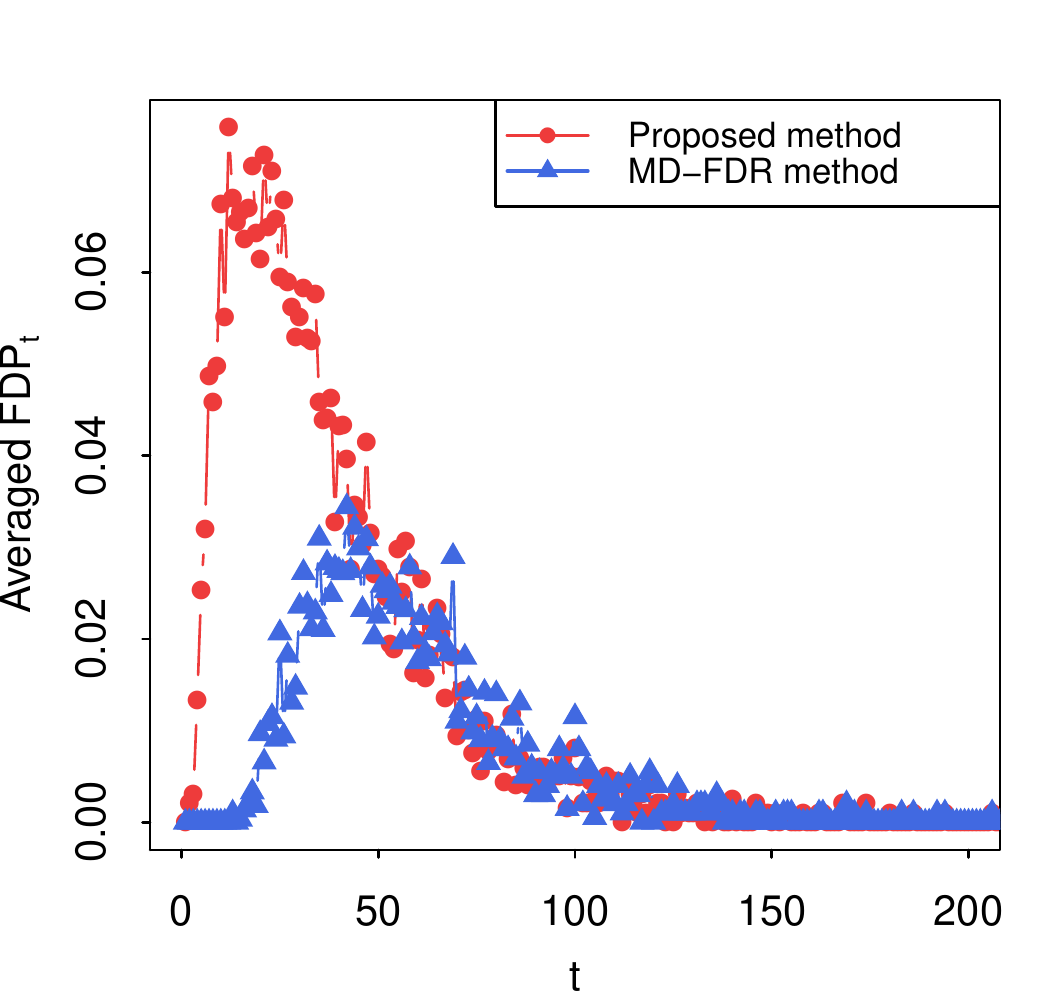}
	\end{center}
	\caption{$\text{FDP}_t$ averaged over 1000 Monte Carlo simulations at $K = 100$ in Case Study}
	\label{fig:case_FDPt}
\end{figure}

\begin{figure}%
	\begin{center}
		\includegraphics[scale = 0.50]
		{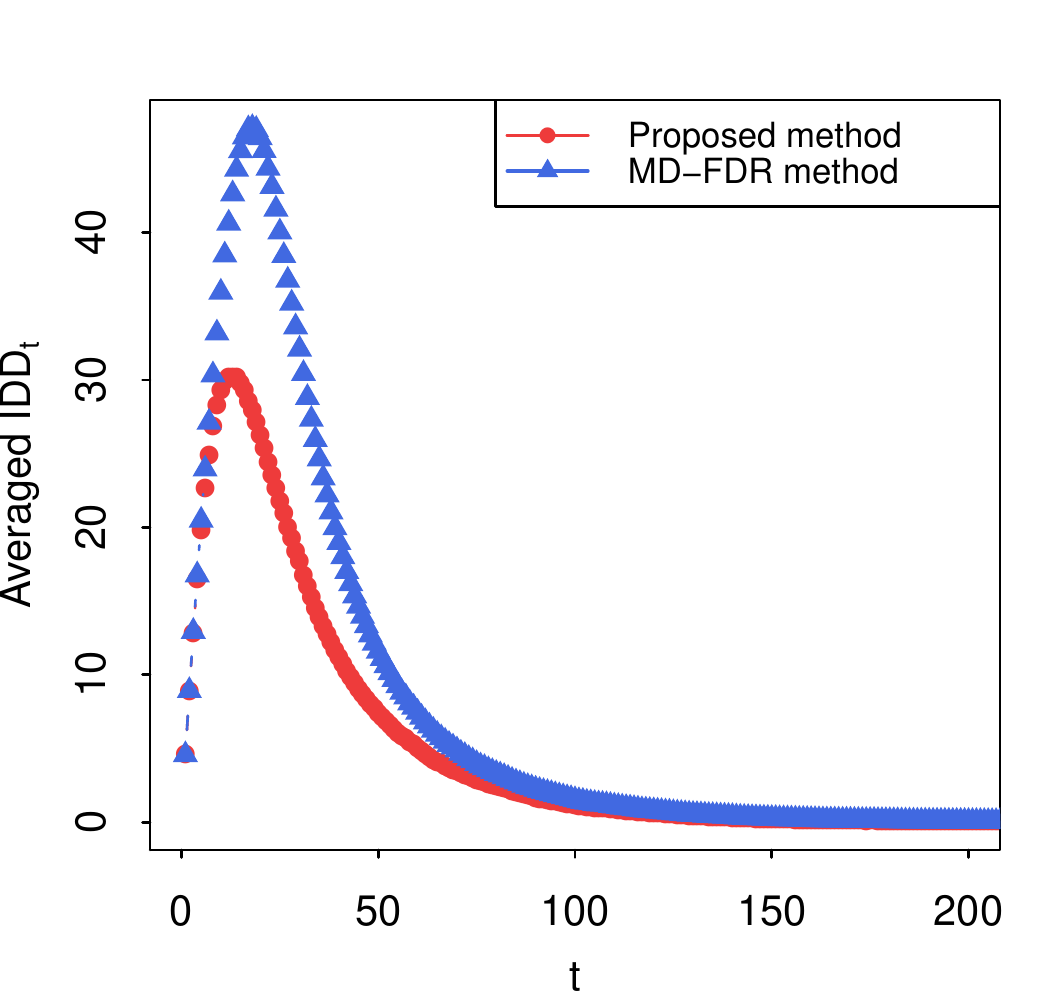}
	\end{center}
	\caption{$\text{IDD}_t$ averaged over 1000 Monte Carlo simulations at $K = 100$ in Case Study}
	\label{fig:case_IDDt}
\end{figure}

Tables~\ref{tbl:case_fdr} and \ref{tbl:case_add} show the AFDR and TADD for both methods for $K\in\{10,100,200,500,1000\}$. According to the tables, the AFDR of both methods are controlled to be less than  $\alpha=0.1$, with the AFDR of the MD-FDR method smaller than that of the proposed method for all $K$. This indicates that the proposed method is less conservative in controlling FDR-type of risks, when compared with the MD-FDR method.
Moreover, the proposed method has a much smaller TADD that that of the MD-FDR method for all $K$, indicating that the proposed method has a smaller detection delay.

\begin{table}%
\centering
\begin{tabular}{ccc}
  \hline
K & Proposed method & MD FDR method \\ 
  \hline
  10 & $6.7\times10^{-2}$ ($3\times10^{-3}$) & $3.2\times10^{-2}$ ($2\times10^{-3}$) \\ 
   100 & $8.5\times10^{-2}$ ($9\times10^{-4}$) & $2.9\times10^{-2}$ ($6\times10^{-4}$) \\ 
   200 & $9.0\times10^{-2}$ ($7\times10^{-4}$) & $2.9\times10^{-2}$ ($4\times10^{-4}$) \\ 
   500 & $9.5\times10^{-2}$ ($4\times10^{-4}$) & $2.8\times10^{-2}$ ($2\times10^{-4}$) \\ 
  1000 & $9.7\times10^{-2}$ ($3\times10^{-4}$) & $2.8\times10^{-2}$ ($2\times10^{-4}$) \\ 
   \hline
\end{tabular}
\caption{Estimated AFDR in case study (standard deviation in parenthesis)} 
\label{tbl:case_fdr}
\end{table}

\begin{table}%
\centering
\begin{tabular}{ccc}
  \hline
K & Proposed method & MD FDR method \\ 
  \hline
 10 & 122.1 (1.2) & 162 (1.4) \\ 
  100 & 1115.8 (3.7) & 1708.5 (4.6) \\ 
  200 & 2178.2 (5.1) & 3434.8 (6.6) \\ 
  500 & 5293.4 (8.1) & 8609.4 (10.4) \\ 
  1000 & 10460.1 (11.3) & 17246.7 (14.8) \\ 
   \hline
\end{tabular}
\caption{Estimated TADD in case study (standard deviation in parenthesis)} 
\label{tbl:case_add}
\end{table}

\section{Conclusions}\label{sec:conc}
The parallel sequential change detection problem
is widely encountered in the analysis of large-scale real-time streaming data. This study introduces a general decision theory framework for this problem, covering many compound performance metrics. It further proposes a sequential procedure under this general framework and proves its optimal properties under reasonable conditions. Simulation and case studies evaluate the performance of the proposed method and compare it with the method proposed in \cite{chen2020false}. The results support the theoretical developments and also show that the proposed method outperforms in our simulation studies and case study.

The current study can be extended in several directions. First, the current parallel sequential change detection framework may be extended to account for multiple types of decisions, including alerting the changes without stopping the streams and 
diagnosis of the post-change distribution upon stopping, which is also known as the sequential change diagnosis   \cite{ma2020two,dayanik2008bayesian}. Second, in many applications, the post-change distribution of data is challenging to obtain. Also, it is sometimes difficult to specify a prior distribution for the change points. In these cases, it is desirable to formulate the problem in a non-Bayesian decision theory framework, and develop a flexible parallel sequential change detection method that is robust for unknown post-change distributions under this framework. Third, we assume that the change points are independent for different data streams. For some applications, it is reasonable to extend the methods to the case where the change points are dependent. For example, the change points may be driven by the same event \cite{xie2013sequential} or propagated by each other \cite{zhang2019online}.

\newpage
\appendix
\begin{center}
	\Large\bf Appendix
\end{center}
\section{An additional simulation study}
In this simulation study, we let
$p_{k,t}(x) = (2\pi)^{-1/2}e^{-\frac{x^2}{2}}$
and $q_{k,t}(x) = (2\pi)^{-1/2}e^{-\frac{(x-1)^2}{2}}$
for all $k$ and $t$ and $\pi_{\infty}=0.2$, and $\pi _t = 
(t+2)!/(2\, t!) \cdot 0.8 \cdot (0.1)^3 (0.9)^t$ for $t\geq0$. 
In addition, we set $K = 100$ and consider the risk process $R_t=\text{LFNR}_t$ (see Example~\ref{eg:LFNR}) and the utility process $U_t=\text{IRL}_t$ (see  Example~\ref{eg:ARL}).

\begin{figure}[ht]
\begin{center}
\includegraphics[scale = 0.4]
{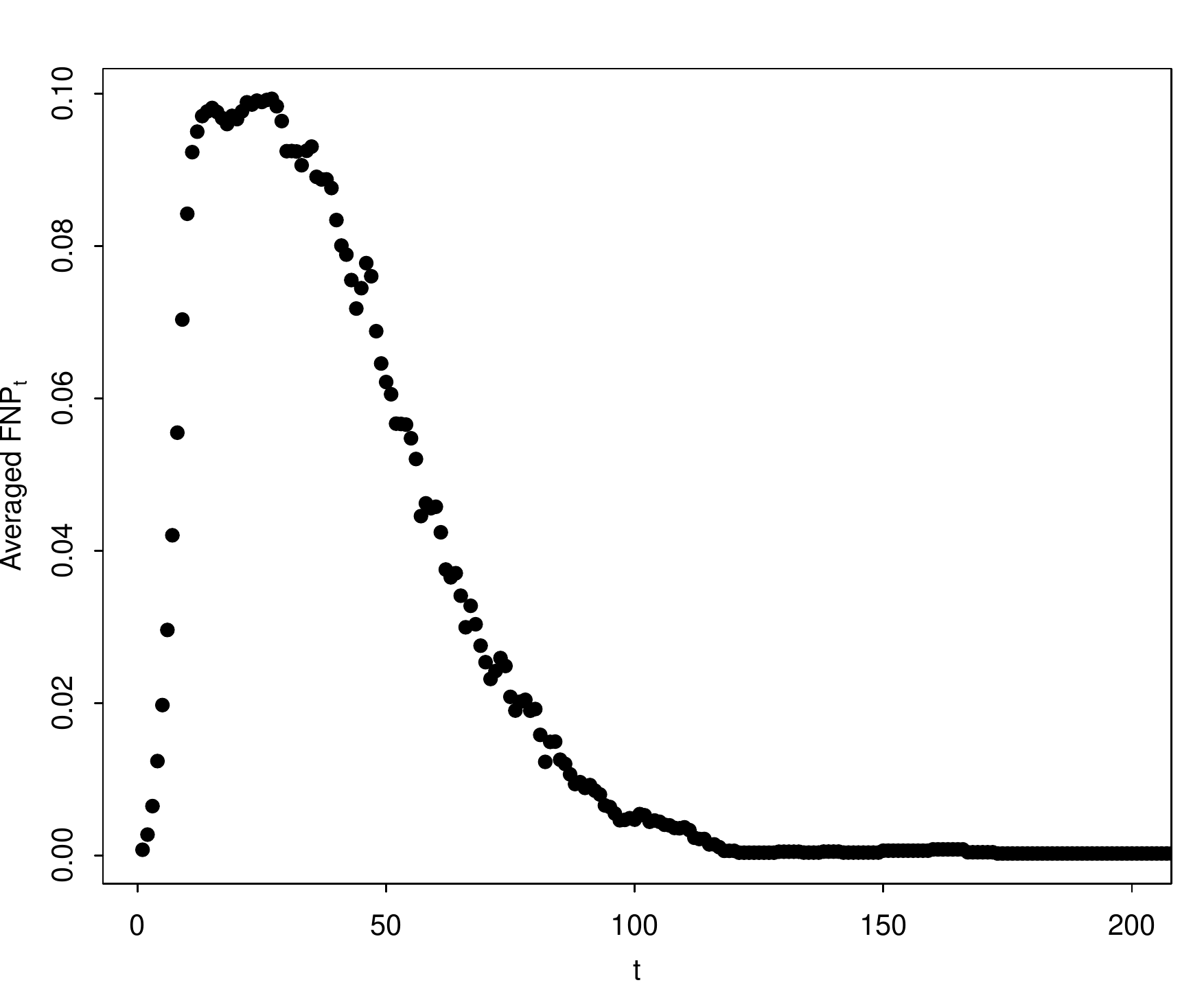}
\end{center}
\caption{$\text{FNP}_t$ averaged over a Monte Carlo simulation with 1000 replications at $K = 100$
}
\label{fig:sim1b_FNPt}
\end{figure}

\begin{figure}[ht]
\begin{center}
\includegraphics[scale = 0.4]
{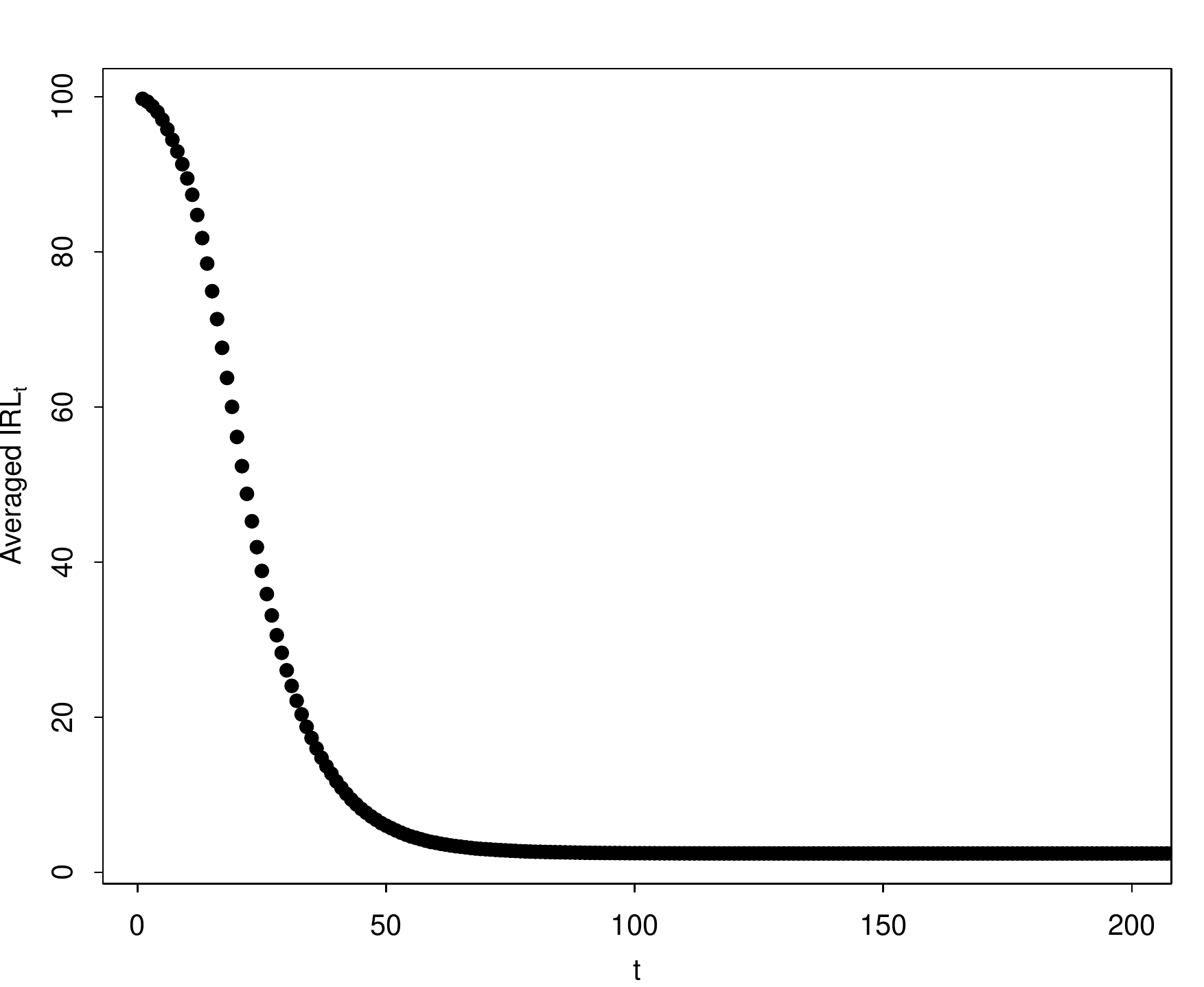}
\end{center}
\caption{$\text{IRL}_t$ averaged over 1000 Monte Carlo simulations at $K = 100$%
}
\label{fig:sim1b_IRLt}
\end{figure}

We plot the averaged risk measure $\text{FNP}_t$ and the averaged utility measure  of the proposed method $\sD$ defined in Algorithm \ref{alg:dec2} in Figures \ref{fig:sim1b_FNPt} and \ref{fig:sim1b_IRLt} based on a Monte Carlo simulation with 1000 replications. From Figure \ref{fig:sim1b_FNPt} we can see that the averaged $\text{FNP}_t$ is below 0.1 with a peak at around $t = 27$, which is consistent with Proposition~\ref{prop:simp}. From Figure \ref{fig:sim1b_IRLt} we can see that $\text{IRL}_t$ is decreasing to $0$ as $t$ increases.

\section{Proof of Results in Section~\ref{sec:method}}
\begin{proof}[Proof of Proposition~\ref{prop:onestep}]
	According to the second and third lines of Algorithm \ref{alg:gen}, $S_{t+1}$ output by Algorithm \ref{alg:gen} belongs to the set $\{S: \gamma_{S}\leq \alpha \text{ and }  S\subset S_t \}$. Thus, 
	$
	R_t=r_{t}(\{W_{k,t}\}_{k\in S_{t} }, S_{t}, S_{t+1})=\gamma_{S_{t+1}} \leq \alpha.
	$
\end{proof}

\begin{proof}[Proof of Proposition~\ref{prop:pD}]
	For each $t$, $S_{t+1}$ is obtained through Algorithm \ref{alg:gen}. Thus, $R_t(\pD) \leq \alpha$ a.s. for all $t\in \mathbb{Z}_+$, according to Proposition \ref{prop:onestep}. This implies $\pD\in\mathcal{D}_{\alpha}$.
\end{proof}

\begin{proof}[Proof of Proposition~\ref{prop:simp}]
	Under Assumption~\ref{assump:non-empty}, $S_{t+1}$ is obtained by Algorithm \ref{alg:simp} satisfies
	$R_t=r_{t}(\{W_{k,t}\}_{k\in S_{t} }, S_{t}, S_{t+1}) %
	$. According to the third and fourth lines of Algorithm \ref{alg:simp}, it satisfies
	$R_t= r_t ( \{W_{k,t}\}_{k\in S_{t}},S_{t}, S_{t+1} ) \leq \alpha$. Thus, $\sD\in\mathcal{D}_{\alpha}$.
\end{proof}

\section{Proof of Results in Section~\ref{sec:one-step-optimality}
}\label{sec:proof-thm}

\begin{proof}[Proof of Theorem \ref{thm:one-step optimality for general risk}]
First, we know that $\pD\in\mathcal{D}_{\alpha}$ according to Proposition \ref{prop:pD}.

Next, we compare the proposed sequential decision $\pD = (d^*_1,d^*_2,\cdots,d^*_t,\cdots)$ with an arbitrary sequential decision $\delta=(d_1,d_2,\cdots,d_t,\cdots)\in\mathcal{D}_{\alpha}$ satisfying $d_s= d^*_s$, for $s = 1, \cdots, t-1$. 
Let $\{S_{t}^*\}_{t\in\mathbb{Z}_+}$ be the index set of active streams following $\pD$ at all time points, and $S_{t+1}$ be the set selected by $\delta$ at time $t+1$. Note that both $\pD$ and $\delta$ select $S_1^*,\cdots, S_{t}^*$ as the index sets at time $1,\cdots, t$, according to the assumption that $d_s=d_s^*$ for $s=1,\cdots,t-1$.

According to the second and third line of Algorithm \ref{alg:gen}, $S_{t+1}^*$ satisfies
$
u_{t}(\{W_{k,t}\}_{k\in S^*_{t} }, S^*_{t}, S^*_{t+1}) 
= \max_{S\subset S^*_t} u_{t}(\{W_{k,t}\}_{k\in S^*_{t} }, S^*_{t}, S)
$
subject to $\gamma_{S}\leq \alpha$. Because $\delta\in\mathcal{D}_{\alpha}$, and the index set selected by $\delta$ and $\pD$ at time $t$ are both $S_t^*$, we have $R_{t}(\delta)=\gamma_{S_{t+1}}\leq\alpha$ a.s. This further implies
$
u_{t}(\{W_{k,t}\}_{k\in S^*_{t} }, S^*_{t}, S^*_{t+1}) 
\geq  u_{t}(\{W_{k,t}\}_{k\in S^*_{t} }, S^*_{t}, S_{t+1}).
$
That is, $U_t(\pD)\geq U_t(\delta)$. 
The proof is completed by taking expectation on both sides.
\end{proof}
\begin{proof}[Proof of Theorem \ref{thm:one-step optimality for monotone risk}]
First, according to Theorem~\ref{thm:one-step optimality for general risk}, the sequential decision $\pD$ obtained from Algorithm~\ref{alg:dec} is locally optimal. Thus, to show $\sD$ obtained from Algorithm~\ref{alg:dec2} is also locally optimal, it suffices to show that $\sD$ is a special case of $\pD$ under Assumption~\ref{assump:one-step optimal}.
Let $\{S_{t}^*\}_{t\in\mathbb{Z}_+}$ be the index set of active streams following the decision $\sD$ at every time. For each $t$, it is sufficient to show that $S_{t+1}^*=\{k_1,\cdots, k_{n^*}\}$ obtained in the forth line of Algorithm~\ref{alg:simp} also solves the optimization problem in the third line of Algorithm~\ref{alg:gen}. To see this, it is sufficient to show
\begin{equation}\label{eq:to-show}
    u_t(\{W_{k,t}\}_{k\in S^*_{t} }, S^*_{t}, S_{t+1})\leq  u_t(\{W_{k,t}\}_{k\in S^*_{t} }, S^*_{t}, S^*_{t+1})
\end{equation}
for any set $S_{t+1}\subset S_t$ satisfying $\gamma_{S_{t+1}}\leq\alpha$.

Recall that $k_1,\cdots, k_{|S_t^*|}$ are chosen so that $W_{k_1,t}\leq\cdots\leq W_{k_{|S_t^*|},t}$. We discuss two cases: $S_{t+1}=\{k_1,\cdots, k_n\}$ for some $n\in\{0,\cdots, |S_t^*|\}$; and $S_{t+1}\neq\{k_1,\cdots, k_n\}$ for all $n$. In what follows, we show that \eqref{eq:to-show} holds for both cases.

{\em Case 1:} $S_{t+1}=\{k_1,\cdots, k_n\}$ for some $n\in\{0,\cdots, |S_t^*|\}$.

In this case, $\gamma_n=r_t(\{W_{k,t}\}_{k \in S^*_{t}},S^*_{t}, \{k_1,\cdots, k_n\})\leq \alpha$. Recall that $n^*$ obtained in the forth line of Algorithm~\ref{alg:simp} satisfies $\mu_{n^*}=\max_{n':\gamma_{n'}\leq \alpha} \mu_{n'}$. Thus, $\mu_{n^*}\geq \mu_{n}.$ 
We complete the proof of \eqref{eq:to-show} in Case 1 by recalling  that $\mu_{n^*}=u_{t}(\{W_{k,t}\}_{k \in S^*_{t}},S^*_{t}, S^*_{t+1})$ and $\mu_n= u_{t}(\{W_{k,t}\}_{k \in S^*_{t}},S^*_{t}, S_{t+1})$.

{\em Case 2: $S_{t+1}\neq\{k_1,\cdots, k_n\}$ for all $n$.}

Let $n=|S_{t+1}|$. In this case, there exists $h\in\{0,\cdots, n\}$ and
$2\leq j_1<j_2<\cdots<j_{n-h}$ such that
$S_{t+1}=\{k_{1},\cdots, k_{h}, k_{h+j_1}, k_{h+j_2},\cdots, k_{h+j_{n-h}}\}$ (we set $k_0=0$ for the ease of presentation).

Note that $W_{k_{h+j_1}}\geq W_{k_{h+1}}$, $k_{h+j_1}\in S_{t+1}$, and $k_{h+1}\notin S_{t+1}$. According to Assumption~\ref{assump:one-step optimal}, we have
\begin{equation}\label{eq:r_t-bound-1}
\begin{split}
   & r_t(\{W_{k,t}\}_{k\in S^*_{t} }, S^*_{t}, S_{t+1})\\
    \geq & 
    r_{t}(\{W_{k,t}\}_{k\in S^*_{t} }, S^*_{t}, S_{t+1}\setminus\{k_{h+j_1}\}\cup\{k_{h+1}\})\\
     = &
     r_{t}(\{W_{k,t}\}_{k\in S^*_{t} }, S^*_{t}, \{k_{1},\cdots, k_{h+1}, k_{h+1+j_2}, \cdots, k_{h+j_{n-h}}\}).
\end{split}
\end{equation}
That is, the risk evaluated at $\{k_{1},\cdots, k_{h+1}, k_{h+j_2}, \cdots, k_{h+j_{n-h}}\}$ is no greater than at $\{k_{1},\cdots, k_{h}, k_{h+j_1}, \cdots, k_{h+j_{n-h}}\}$. With similar arguments, we have
\begin{equation}\label{eq:r_t-bound-2}
\begin{split}
     & r_{t}(\{W_{k,t}\}_{k\in S^*_{t} }, S^*_{t}, \{k_{1},\cdots, k_{h+1}, k_{h+j_2}, \cdots, k_{h+j_{n-h}}\})\\
     \geq & \cdots\\
     \geq & r_{t}(\{W_{k,t}\}_{k\in S^*_{t} }, S^*_{t}, \{k_{1},\cdots, k_{n}\}).
\end{split}
\end{equation}
Combining \eqref{eq:r_t-bound-1} and \eqref{eq:r_t-bound-2}, we obtain
\begin{equation}\label{eq:r_t-bound-3}
    \begin{split}
   r_t(\{W_{k,t}\}_{k\in S^*_{t} }, S^*_{t}, S_{t+1})%
     \geq r_{t}(\{W_{k,t}\}_{k\in S^*_{t} }, S^*_{t}, \{k_{1},\cdots, k_{n}\}).
\end{split}
\end{equation}
Recall that $r_t(\{W_{k,t}\}_{k\in S^*_{t} }, S^*_{t}, S_{t+1})\leq \alpha$. Thus, the above inequality also implies
\begin{equation}\label{eq:r_t-bound-4}
    \begin{split}
\gamma_n= r_{t}(\{W_{k,t}\}_{k\in S^*_{t} }, S^*_{t}, \{k_{1},\cdots, k_{n}\})\leq\alpha.
\end{split}
\end{equation}
Next, we consider $u_t(\{W_{k,t}\}_{k\in S^*_{t} }, S^*_{t}, S_{t+1})$. By replacing $r_t$ with $u_t$ and `$\geq$' with `$\leq$' in \eqref{eq:r_t-bound-1} -- \eqref{eq:r_t-bound-3}, we obtain
$     u_t(\{W_{k,t}\}_{k\in S^*_{t} }, S^*_{t}, S_{t+1}) \leq  u_{t}(\{W_{k,t}\}_{k\in S^*_{t} }, S^*_{t}, \{k_{1},\cdots, k_{n}\})
$
under Assumption~\ref{assump:one-step optimal}. This implies
$     u_t(\{W_{k,t}\}_{k\in S^*_{t} }, S^*_{t}, S_{t+1}) \leq \mu_n.$
According to equation \eqref{eq:r_t-bound-4}  and the proof in Case 1, we have 
$\mu_n\leq \mu_{n^*}=u_t(\{W_{k,t}\}_{k\in S^*_{t} }, S^*_{t}, S^*_{t+1}).$
Combining these inequalities, we obtain \eqref{eq:to-show}.
\end{proof}

\begin{proof}[Proof of Corollary \ref{corollary:eg:one-step optimal-monotone risk}]
We start with the proof of the first statement of the corollary.  According to \eqref{eq:LFWER} -- \eqref{eq:IADD}, we have
 $\text{LFWER}_t=\text{GLFWER}_t=\text{LFNR}_t=\text{IARL}_t=\text{IADD}_t =0$ if $S_{t+1}=\emptyset$, and $\text{LFDR}_t=0$ if $S_{t+1}=S_{t}$. In addition, according to the definition, $\text{LFWER}_t, \text{GLFWER}_t, \text{LFNR}_t$, and $\text{LFDR}_t$ fall into the interval $[0,1]$, and $\text{IADD}_t$ and $\text{IARL}_t$ belong to $[0,K]$. Thus, Assumption \ref{assump:non-empty} is verified for $\alpha>0$. 
 
Next, we show that if $R_t\in\{ \text{LFWER}_t, \text{GLFWER}_t, \text{LFNR}_t,\text{IADD}_t \}$ and $U_t\in \{ \text{IARL}_t, -\text{LFDR}_t \}$ then Assumption \ref{assump:one-step optimal} is satisfied. Without loss of generality, we assume $S_0 = [n]$, where $1 \leq n \leq K$. Also, assume $\ww=(w_1,\cdots,w_n)\in [0,1]^n$, $S\subset [n]$, $i\in S, j\in [n]\setminus S$, and $w_i \geq w_j$. We consider different choices of $R_t$ and $U_t$ below.

If $R_t = \text{GLFWER}_t$, we apply the next lemma, whose proof is postponed to Appendix \ref{sec:proof-uniform-optimality} on page \pageref{proof-lemma:GLFWER}.
\begin{lemma} \label{lemma:GLFWER}
For $\uu  = (u_1, \ldots, u_p), \vv = (v_1, \ldots, v_p) \in [0,1]^p$, then
$$1-\sum_{l = 0}^{m-1} \sum_{\substack{I \subset \langle p \rangle \\ |I| = l}} \big(\prod_{q \in I} u_q \big) \prod_{k\in \langle p \rangle \setminus I}(1-u_k) \leq
1-\sum_{l = 0}^{m-1} \sum_{\substack{I \subset \langle p \rangle \\ |I| = l}} \big(\prod_{q \in I} v_q \big) \prod_{k\in \langle p \rangle \setminus I}(1-v_k)
$$ if $0 \leq u_k\leq v_k \leq 1$ for all $k \in \langle p \rangle$.
\end{lemma}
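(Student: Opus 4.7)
The plan is to interpret both sides of the inequality as tail probabilities of sums of independent Bernoulli random variables and then use a monotone coupling argument. Specifically, for a parameter vector $\ww = (w_1,\ldots,w_p) \in [0,1]^p$, introduce independent Bernoulli random variables $Y_1^\ww,\ldots,Y_p^\ww$ with $\mathbb{P}(Y_k^\ww = 1) = w_k$. The key observation is the combinatorial identity
\begin{equation*}
\sum_{l=0}^{m-1} \sum_{\substack{I\subset \langle p\rangle\\ |I|=l}} \Big(\prod_{q\in I} w_q\Big)\prod_{k\in\langle p\rangle\setminus I}(1-w_k) = \mathbb{P}\Big(\sum_{k=1}^p Y_k^\ww \leq m-1\Big),
\end{equation*}
since the $I$-th summand equals $\mathbb{P}(\{k : Y_k^\ww = 1\} = I)$ and we are summing over all outcomes with at most $m-1$ successes. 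Consequently both sides of the lemma's inequality can be rewritten in the equivalent form $\mathbb{P}(\sum_k Y_k^\uu \geq m) \leq \mathbb{P}(\sum_k Y_k^\vv \geq m)$.

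The next step is to realize the two Poisson-binomial random variables on a common probability space so that the stochastic dominance is pointwise. Let $U_1,\ldots,U_p$ be independent and uniformly distributed on $[0,1]$, and define $Y_k^\uu = \mathds{1}(U_k \leq u_k)$ and $Y_k^\vv = \mathds{1}(U_k \leq v_k)$ for each $k \in \langle p\rangle$. These have the correct marginal Bernoulli distributions. Because $u_k \leq v_k$, the event $\{U_k \leq u_k\}$ is contained in $\{U_k \leq v_k\}$, so $Y_k^\uu \leq Y_k^\vv$ for every $k$ almost surely. Summing yields $\sum_k Y_k^\uu \leq \sum_k Y_k^\vv$ almost surely, hence $\mathds{1}(\sum_k Y_k^\uu \geq m) \leq \mathds{1}(\sum_k Y_k^\vv \geq m)$ almost surely.

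Taking expectations on both sides of this pointwise inequality immediately gives $\mathbb{P}(\sum_k Y_k^\uu \geq m) \leq \mathbb{P}(\sum_k Y_k^\vv \geq m)$, which, by the combinatorial identity above, is the claim of the lemma. There is no serious obstacle here; the only step that requires any care is verifying the identification of the combinatorial sum with the Poisson-binomial CDF, after which the classical quantile coupling closes the argument in a single line.
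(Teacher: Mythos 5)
Your proof is correct, but it takes a genuinely different route from the paper's. The paper proves this lemma by invoking the entrywise monotonicity of the symmetric function $\xi_m$ established in the proof of Corollary~\ref{corollary:eg}: there, $\xi_m$ is recognized as a multilinear symmetric polynomial, Proposition~\ref{prop:poly1} (via Lemma~\ref{lemma:maximum}) reduces entrywise monotonicity to comparing values at the vertices $\uu^{i,p-i}\in\{0,1\}^p$, and those vertex values are computed to be $\ind(p-i\geq m)$ through the probabilistic interpretation of $\text{GLFWER}_{m,t}$ as $\prob(E_{m,t}\mid\fil_t)$. The lemma then follows by passing to order statistics and using the partial order $\leqc$. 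You instead identify the whole expression directly as the upper tail $\prob(\sum_k Y_k^{\ww}\geq m)$ of a Poisson--binomial distribution and obtain monotonicity in the parameter vector by the standard common-uniforms coupling $Y_k^{\ww}=\ind(U_k\leq w_k)$, which gives the inequality pointwise before taking expectations. Your argument is more elementary and entirely self-contained: it needs neither the vertex-reduction machinery of Proposition~\ref{prop:poly1} nor the order-statistic formalism, and it makes the probabilistic content of the inequality transparent. What the paper's route buys is economy within its own framework --- the polynomial and partial-order machinery is needed anyway for the appending-increasing property and the uniform optimality results, so the lemma falls out of structure already built. Both proofs are valid; yours would stand alone as a cleaner proof of this particular statement.
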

We apply the above lemma with $\uu = (w_k)_{k \in S} = (u_1, \ldots, u_{|S|})$, and $\vv = (w_k)_{k \in (S\setminus\{i\})\cup \{j\}} = (v_1, \ldots, v_{|S|})$. %
Since
$
0\leq u_r \leq v_r \leq 1\text{ for all }r\in [|S|],
$
we obtain $$1-\sum_{l = 0}^{m-1} \sum_{\substack{I \subset [|S|] \\ |I| = l}} \big(\prod_{q \in I} u_q \big) \prod_{k\in [|S|] \setminus I}(1-u_k) 
\leq
 1-\sum_{l = 0}^{m-1} \sum_{\substack{I \subset [|S|]  |I| = l}} \big(\prod_{q \in I} v_q \big) \prod_{k\in [|S|] \setminus I}(1-v_k).
$$
It follows that
\sloppy $r_t(\ww, [n], S)
=  1-\sum_{l = 0}^{m-1} \sum_{\substack{I \subset S \\ |I| = l}} \big(\prod_{q \in I} w_q \big) \prod_{k\in S \setminus I}(1-w_k)
\geq 
1-\sum_{l = 0}^{m-1} \sum_{\substack{I \subset (S\setminus\{i\}) \cup \{j\} \\ |I| = l}} \big(\prod_{q \in I} w_q \big) \prod_{k\in (S\setminus\{i\}) \cup \{j\} \setminus I}(1-w_k)
=  r_t(\ww, [n], (S\setminus\{i\}) \cup \{j\}).$

If  $R_t = \text{LFNR}_t$, we have
$       r_t(\ww, [n], (S\setminus\{i\})\cup \{j\})
        =  \frac{\sum_{k \in (S\setminus\{i\})\cup \{j\} } w_k}{|S| \vee 1}
        =  \frac{\sum_{k \in S\ } w_k + w_i- w_j}{|S| \vee 1}
        \leq  \frac{\sum_{k \in S\ } w_k}{|S| \vee 1}
        = r_t(\ww, [n], S).
$

If $R_t = \text{IADD}_t$, we have $r_t(\ww, [n], S) = \sum_{k\in S} w_k = \sum_{k\in (S\setminus\{i\})\cup \{j\}} w_k + w_i - w_j 
\geq \sum_{k\in (S\setminus\{i\})\cup \{j\}} w_k=r_t(\ww, [n], (S\setminus\{i\})\cup \{j\}).
$

If $U_t = \text{IARL}_t$, then
$u_t(\ww, [n], S) = \sum_{k\in S} \{1-g(w_k)\} = \sum_{k\in (S\setminus\{i\})\cup \{j\}} \{1-g(w_k)\} -g( w_i) + g(w_j)  \leq \sum_{k\in (S\setminus\{i\})\cup \{j\} } \{1-g(w_k)\} = u_t(\ww, [n], (S\setminus\{i\})\cup \{j\}),
$
where we used the fact that $g(w_i) \geq g(w_j)$.

If $U_t = - \text{LFDR}_t$, then
$u_t(\ww, [n], S) =  - (|[n] \setminus S| \vee 1)^{-1}\sum_{k \in [n] \setminus S}  (1- w_k)  
=  - (|[n] \setminus S| \vee 1)^{-1}\sum_{k \in [n] \setminus (S\setminus\{i\}\cup \{j\})}  (1- w_k) +w_i - w_j\leq  -(|[n] \setminus S| \vee 1)^{-1}\sum_{k \in [n] \setminus (S\setminus\{i\}\cup \{j\}) }  (1- w_k)
- = u_t(\ww, [n], (S\setminus\{i\})\cup \{j\})$.

Since $\text{LFWER}_t$ is a special case of $\text{GLFWER}_t$ where $m = 1$ in Example \ref{eg:GLFWER}, if $R_t= \text{LFWER}_t$, we also have $r_t(\ww, [n], S)\leq r_t(\ww, [n], (S\setminus\{i\})\cup \{j\})$.
The proof for the other set of $R_t$ and $U_t$ can be obtained similarly by flipping their signs. We omit the repetitive details.
\end{proof}

\section{Proof of Results in Section~\ref{sec:uniform-optimality}} \label{sec:proof-uniform-optimality}
\subsection{Proof sketch for  Theorem~\ref{thm:uniform optimality}}
The proof of Theorem~\ref{thm:uniform optimality} is involved. In this section, we provide a high level summary of steps in proving Theorem~\ref{thm:uniform optimality} and an overview of the supporting lemmas in Appendix~\ref{subsec:partial}--\ref{subsec:coupling}. We will wrap up these supporting results and provide the proof of Theorem~\ref{thm:uniform optimality} in Appendix \ref{subsec:proof-of-thm-3}.

In Appendix~\ref{subsec:partial}, we define a special partial order relationship `$\leqc$' over the space $\So=\cup_{k=1}^{K}\{(v_{1}, \cdots, v_{k}) : 0 \leq v_{1} \leq \cdots v_{k} \leq 1\} \cup\{\varnothing\}$  so that one can compare vectors in $\So$ even when they have different dimensions. We also study monotone functions in terms of this special partial order relation. It turns out that the concepts of `entrywise monotonicity' and `appending monotonicity' (defined in Definitions~\ref{def:entrywise-increasing} and \ref{def:appending-increasing}) of functions are closely related to their monotonicity in terms of the partial order `$\leqc$'. In particular, we show that the utility function $\tilde{u}_t(\cdot)$ is a decreasing function over $\So$. See Lemma~\ref{lemma:decreasing} for more details.

Let $\delta^*$ denotes the proposed method. Heuristically, if we could argue that $[W^{\delta^*}_{S^{\delta^*}_t,t}]\leqc [W^{\delta}_{S^{\delta}_t,t}]$ for all decision $\delta\in\mathcal{D}_{\alpha}$ all $t$, then Theorem~\ref{thm:uniform optimality} is proved by combining this with the assumption $U_t(\delta)=\tilde{u}_t([W^{\delta}_{S^{\delta}_{t+1},t}])$ and that $\tilde{u}_t(\cdot)$ is decreasing. However, this statement does not hold almost surely for the stochastic processes $[W^{\delta^*}_{S^{\delta^*}_t,t}]$ and $ [W^{\delta}_{S^{\delta}_t,t}]$. Instead, we  show this stochastic version of this statement using concepts such as {\em stochastic ordering}. That is  $\EE(g([W^{\delta^*}_{S^{\delta^*}_t,t}])\leq \EE(g([W^{\delta}_{S^{\delta}_t,t}]))$ for any increasing functions over $\So$. The main analysis is carried out through induction using supporting lemma developed in Appendix~\ref{subsec:property-of-algos} and Appendix~\ref{subsec:coupling}.

In particular, in Appendix~\ref{subsec:property-of-algos}, we study the monotonicity (in terms of the partial order `$\leqc$') of the proposed one-step selection rule. We show that the proposed decision induces a monotone mapping over $\So$ (Lemma \ref{lemma:increasing rule}). 
We also show that the order statistic of the posterior probability of the remaining stream will become larger by following decisions other than the proposed one (Lemma~\ref{lemma:one-step optimal}).
Roughly, these result suggests that proposed method tends to make $[W_{S_t,t}]$ `smaller' at the `current time', when compared with other methods.
In Appendix~\ref{subsec:coupling}, we show several stochastic ordering results regarding the process $[W_{S_t,t}]$ following different decisions. 
In particular,  Lemma~\ref{prop:coupling} states that $[W_{S_{t+s},t+s}]$ is stochastically increasing in $[W_{S_t,t}]$ following the proposed method. Lemma~\ref{lemma:one coupling} states that $[W_{S_{t+1},t+1}]$ becomes `stochastically larger' if we follow another method that also controls the risk for one step, when compared with the proposed method.

We note that the proof of the results in Appendix~\ref{subsec:coupling} follows similar ideas of that in \cite{chen2019compound} with the following main differences. First,
  \cite{chen2019compound} only allows geometric priors and time-homogeneous pre-/post- change distributions, which leads to homogeneous Markov chains $\{W_{k,t}\}_{t\geq 0}$ for different $k$. Under the current settings, we allow a general class of prior distributions and time-heterogeneous of pre/post-change distributions. As a result, $\{W_{k,t}\}_{t\geq 0}$  are still Markov chains for different $k$, but their transition kernels will be time-heterogeneous. Second, \cite{chen2019compound} only considers the risk measure LFNR while we need to take care of a range of more complicated risk and utility functions. To address the additional challenges, we leverage results in Appendix~\ref{subsec:partial} and \ref{subsec:property-of-algos} to perform detailed analysis on the risk and utility processes.

\subsection{Partial order spaces and monotone mapping}\label{subsec:partial}
In this section, we first introduce the classic definition of partial order spaces, then define a partial order relation over the space $\So$. After that, we provide several useful supporting lemmas connecting the proposed sequential decision with processes over $\So$.

\begin{definition}
A partially ordered space (or pospace) $(\mathcal{S}, \leqc)$ is a topological space $\mathcal{S}$ with a closed partial order $\leqc$. That is, `$\leqc$' satisfies 1) $\uu \leqc \uu$ for all $\uu \in \So$; 2) $\uu \leqc \vv$ and $\vv \leqc \uu$ implies that $\uu = \vv$; 3) $\uu \leqc \vv$ and $\vv \leqc \ww$ imply that $\uu \leqc \ww$, and the set $\{(\uu, \vv) \in \So^2:\uu \leqc \vv\}$ is a closed set.
\end{definition}
Recall $\So$ is defined  as
\begin{align*}
&\bigcup_{k=1}^{K}\{(v_{1}, \cdots, v_{k}) : 0 \leq v_{1} \leq \cdots v_{k} \leq 1\} \cup\{\varnothing\},
\end{align*}
where $\varnothing$ denotes the vector with zero dimension. The elements in $\So$ are order statistics of elements in the following space $\Su$.
$$
\Su = \bigcup_{k=1}^{K}[0,1]^{k} \cup\{\varnothing\}.
$$
It is easy to verify that for $\uu \in \Su$, $[\uu] \in \So$, and $\big[\uu\big] = \uu$ for any $\uu \in \So$.

We define a partial order relation over $\So$. Let the function $\dim (\cdot)$ denote the dimension of a vector.
\begin{definition}
For $\uu=(u_1,\cdots, u_{\dim(\uu)}), \vv=(v_1,\cdots,v_{\dim(\vv)}) \in \So$, $\uu \leqc \vv$ if $\dim(\uu) \geq \dim(\vv)$ and $u_{i} \leq v_{i}$ for $i=1, \ldots, \dim(\vv).$ In addition,  $\uu \leqc \varnothing$ for all $\uu \in \So$.  
\end{definition}
The next lemma states that $(\So,\leqc)$ is a polished  partial order space.
\begin{lemma}[Lemma F.1 in \cite{chen2019compound}]\label{lemma:pospace}
$(\So, \leqc )$ is a  partially ordered polish space  equipped with a closed partial order $\leqc$.
\end{lemma}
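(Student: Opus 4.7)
The plan is to establish the three requirements separately: that $\So$ admits a topology making it a polish space, that $\leqc$ satisfies the three axioms of a partial order, and that the graph of $\leqc$ is closed in $\So \times \So$. The only subtle part is choosing the right topology on $\So$, since its elements have varying (and possibly zero) dimension, but the natural choice handles everything cleanly.

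First, I would equip $\So$ with the disjoint union topology, treating it as the finite disjoint union
\begin{equation*}
\So = \{\varnothing\} \;\sqcup\; \bigsqcup_{k=1}^{K} \So^{(k)}, \qquad \So^{(k)} := \{(v_1,\ldots,v_k) : 0 \leq v_1 \leq \cdots \leq v_k \leq 1\}.
\end{equation*}
Each $\So^{(k)}$ is a closed subset of the compact metrizable space $[0,1]^k$, hence is itself compact and polish, and $\{\varnothing\}$ is a trivially polish isolated point. A finite disjoint union of polish spaces is polish (take the max of the constituent complete metrics on same-component pairs, and declare distance $1$ between points in different components). This gives completeness, separability, and metrizability in one stroke.

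Next, I would verify the axioms of a partial order directly from the definition. Reflexivity is immediate since $\dim(\uu) \geq \dim(\uu)$ and $u_i \leq u_i$. Antisymmetry: if $\uu \leqc \vv$ and $\vv \leqc \uu$ with both nonempty, the two dimension inequalities force $\dim(\uu) = \dim(\vv)$, and the entrywise inequalities then force $\uu = \vv$; the case involving $\varnothing$ is handled by noting that $\varnothing \leqc \uu$ holds only when $\uu = \varnothing$, since $\dim(\varnothing) = 0 \geq \dim(\uu)$ forces $\dim(\uu) = 0$. Transitivity follows the same pattern by chaining dimension inequalities and entrywise inequalities, with the conventions on $\varnothing$ handled as absorbing cases.

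Finally, for closedness of the graph, I would take a convergent sequence $(\uu_n, \vv_n) \to (\uu, \vv)$ in $\So^2$ with $\uu_n \leqc \vv_n$ for every $n$, and show $\uu \leqc \vv$. Because the components $\So^{(k)}$ and $\{\varnothing\}$ are clopen in the disjoint union topology, convergence forces $\dim(\uu_n) = \dim(\uu)$ and $\dim(\vv_n) = \dim(\vv)$ for all large $n$. If $\vv = \varnothing$, then $\uu \leqc \vv$ holds by definition and we are done. Otherwise the condition $\uu_n \leqc \vv_n$ gives $\dim(\uu_n) \geq \dim(\vv_n)$ and $u_{n,i} \leq v_{n,i}$ for $i = 1, \ldots, \dim(\vv_n)$; passing to the limit (using that the coordinate projections on each $\So^{(k)}$ are continuous and weak inequalities are preserved under limits) yields $\dim(\uu) \geq \dim(\vv)$ and $u_i \leq v_i$ for $i = 1, \ldots, \dim(\vv)$, which is exactly $\uu \leqc \vv$. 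The only place requiring any care is the handling of the $\varnothing$ component, which is why I would start by explicitly isolating it as a clopen point in the topology; once that is done, the dimension-stabilization argument makes the closedness proof essentially a coordinatewise limit argument inside a fixed Euclidean piece.
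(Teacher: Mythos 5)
Your proof is correct. Note that the paper itself gives no proof of this lemma — it is imported verbatim as Lemma F.1 of the cited reference — so there is no in-paper argument to compare against; your route (disjoint-union topology on $\So$ with each component $\So^{(k)}$ a compact subset of $[0,1]^k$, direct verification of the order axioms with the $\varnothing$ cases isolated, and dimension stabilization plus coordinatewise limits for closedness of the graph) is the standard and complete way to establish it. The only cosmetic imprecision is the phrase ``take the max of the constituent complete metrics'': what you want is the metric that restricts to each component's (bounded) metric and assigns distance $1$ across components, which works here because each component's sup metric is already bounded by $1$; this does not affect the validity of the argument.
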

Next, we present the definition for monotone functions mapping from a partial order space to another one.
\begin{definition}
Let $(\mathcal{S}_1,\leqc_{\mathcal{S}_1})$ and $(\mathcal{S}_2,\leqc_{\mathcal{S}_2})$ be two partially ordered polish spaces.
For a function $f: \mathcal{S}_1 \rightarrow \mathcal{S}_2$, $f$ is said to be increasing if $f(\uu) \leqc_{\mathcal{S}_2} f(\vv)$ for all $\uu, \vv \in \mathcal{S}_1$ satisfying $\uu \leqc_{\mathcal{S}_1} \vv$. In addition, a function $f$ is said to be decreasing if $-f$ is increasing.
\end{definition}
In particular,  a function $f:\So\to \mathbb{R}$ is said to be increasing, if $f(\uu)\leq f(\vv)$ for all $\uu\leqc \vv$, where `$\leq$' refers to the typical `smaller or equal' relation over real numbers; a function $g:\So\to\So$ is said to be increasing, if $g(\uu)\leqc g(\vv)$ for all $\uu\leqc\vv$.

The next lemma presents the connection between monotone functions with respect to the partial order $\leqc$ and its entrywise and appending monotonicity.
\begin{lemma} \label{lemma:decreasing}
If a function $f: \So \rightarrow \RR$ is entrywise decreasing and appending increasing, then $f$ is decreasing with respect to the partial order relation `$\leqc$'. In particular, the utility function $\tilde{u}_t(\cdot)$ is a decreasing function over $\So$ in terms of `$\leqc$'. 
\end{lemma}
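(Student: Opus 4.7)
The plan is a direct structural argument that bridges two one-dimensional monotonicities into monotonicity under the two-dimensional partial order $\leqc$ by going through a carefully chosen intermediate vector. The key observation is that the relation $\uu \leqc \vv$ combines two features that are individually handled by the two hypotheses: (i) the tail of $\uu$ beyond $\dim(\vv)$ is controlled by appending monotonicity, and (ii) the head of $\uu$ of length $\dim(\vv)$ is controlled by entrywise monotonicity against $\vv$.

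Concretely, fix $\uu, \vv \in \So$ with $\uu \leqc \vv$. If $\vv = \varnothing$, I would iterate appending monotonicity: starting from $f(\varnothing) \leq f(u_1)$ and then applying the appending inequality $f(u_1,\dots,u_k) \leq f(u_1,\dots,u_{k+1})$ successively for $k=1,\dots,\dim(\uu)-1$, which yields $f(\varnothing) \leq f(\uu)$. If $\vv \neq \varnothing$, set $m = \dim(\vv) \leq n = \dim(\uu)$ and introduce the auxiliary vector $\ww = (u_1,\ldots,u_m)$, which lies in $\So$ because the ordered head of an element of $\So$ is itself ordered. Appending monotonicity gives $f(\ww) \leq f(\uu)$, while $w_i = u_i \leq v_i$ for $i \leq m$ together with entrywise monotonicity (applied in the common dimension $m$) gives $f(\ww) \geq f(\vv)$. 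Chaining these inequalities produces $f(\uu) \geq f(\vv)$, which is exactly decreasingness of $f$ with respect to $\leqc$.

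For the final assertion about $\tilde{u}_t$, I would simply invoke Assumption~\ref{assump:utility}, which guarantees that $\tilde{u}_t : \So \to \RR$ is entrywise decreasing and appending increasing, and hence the general statement just proved applies. No additional obstacles are expected, as the argument is essentially a two-step interpolation; the only subtlety is the case $\vv = \varnothing$, which is handled cleanly once one notes that the appending-increasing condition of Definition~\ref{def:appending-increasing} includes the base case $f(\varnothing) \leq f(u_1)$.
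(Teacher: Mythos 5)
Your proof is correct and follows essentially the same route as the paper: split off the head $(u_1,\dots,u_{\dim(\vv)})$, compare it to $\vv$ via entrywise monotonicity, and compare it to $\uu$ via appending monotonicity, then chain the inequalities. Your explicit treatment of the $\vv=\varnothing$ case is a minor point of extra care that the paper's proof leaves implicit.
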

\begin{proof}
For $\uu \leqc \vv$ with $\uu, \vv \in \So$, there are two cases: 1) $\dim (\uu) = \dim(\vv)$ and $u_i \leq v_i$ for $i = 1, \ldots, \dim(\uu)$; and 2)  $\dim (\uu) \geq \dim(\vv)$ and $u_i \leq v_i$ for $i = 1, \ldots, \dim(\vv)$. For the first case, $f(\uu) \geq f(\vv)$ because $f$ is entrywise decreasing. For the second case, $f(u_1, \ldots, u_{\dim(\vv)}) \geq f(\vv)$ because $f$ in entrywise increasing. In addition, since $f$ is appending decreasing, $f(\uu) \geq f(u_1, \ldots, u_{\dim(\vv)})$. Combining these two inequalities, we arrive at $f(\uu) \geq f(\vv)$.
\end{proof}

\subsection{Property of the one-step selection rule in Algorithm~\ref{alg:simp}} \label{subsec:property-of-algos}
Recall that $R_t=\rtd([W_{S_{t+1},t}])$ in Assumption~\ref{assump:risk}. We define two related maps $\Io: \So \rightarrow\{0, \cdots, K\}$ and $\Ho: \So \rightarrow \So$ below.
For any $\uu = \big(u_{1}, \ldots, u_{\dim(\uu)}\big) \in \So$, if $\dim(\uu)=0$, we define  $\Io (\uu) = 0$, otherwise we define $\Io (\uu)$ as
\begin{equation}\label{eq:io}
\Io (\uu)
    =\sup \{n \in\{0, \ldots, \dim (\uu)\} : \rtd (\{u_{i}\}_{i=1}^n) \leq \alpha \},   
\end{equation}
and
\begin{equation}
\Ho(\uu)=
\begin{cases}
\big(u_{1}, \cdots, u_{\Io (\uu)}\big) & \text { if } \Io (\uu) \geq 1 \\
\varnothing & \text { otherwise }.
\end{cases} 
\end{equation}

$\Io$ in \eqref{eq:io} is well-defined, thanks to the following lemma.
\begin{lemma} \label{lemma:empty-set}
Under Assumptions \ref{assump:non-empty} and \ref{assump:risk}, $\rtd(\varnothing)  \leq \alpha$. %
\end{lemma}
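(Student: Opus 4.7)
The lemma is quite direct once the two assumptions are unpacked, so the proof plan is short. The strategy is to combine the fact that, under Assumption~\ref{assump:risk}, the risk $r_t$ evaluated at $S_{t+1}=\varnothing$ reduces to the constant $\rtd(\varnothing)$, with the appending-increasing property of $\rtd$, which forces $\rtd(\varnothing)$ to be the smallest value attained by $\rtd$.

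Concretely, first I would invoke Assumption~\ref{assump:risk} to write $r_t(\{W_{k,t}\}_{k\in S_t}, S_t, \varnothing) = \rtd([\{W_{k,t}\}_{k\in\varnothing}]) = \rtd(\varnothing)$, and $r_t(\{W_{k,t}\}_{k\in S_t}, S_t, S_t) = \rtd([\{W_{k,t}\}_{k\in S_t}])$. Next, I would apply the appending-increasing property from Definition~\ref{def:appending-increasing}, which gives $\rtd(\varnothing)\leq \rtd(\uu)$ for every $\uu\in\So$; in particular, for any realization of $\{W_{k,t}\}_{k\in S_t}$ with $S_t\neq\varnothing$,
\begin{equation*}
\rtd(\varnothing)\leq \rtd\bigl([\{W_{k,t}\}_{k\in S_t}]\bigr).
\end{equation*}
Therefore the minimum in Assumption~\ref{assump:non-empty} is always attained at $S=\varnothing$, which yields $\rtd(\varnothing) = \min_{S\in\{\varnothing,S_t\}} r_t(\{W_{k,t}\}_{k\in S_t}, S_t, S)\leq \alpha$ and completes the argument.

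There is essentially no obstacle. The only subtlety worth stating carefully is that Assumption~\ref{assump:non-empty} gives the bound $\leq \alpha$ for arbitrary (measurable) realizations of the posterior probabilities, while the conclusion $\rtd(\varnothing)\leq\alpha$ is a deterministic statement about the function $\rtd$; the appending-increasing property is precisely what lets us pass from the minimum bound to a bound on the $\varnothing$ term alone, since it guarantees the minimum is always realized by $S=\varnothing$ rather than by $S=S_t$.
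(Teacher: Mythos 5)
Your proposal is correct and follows essentially the same argument as the paper: express both $r_t(\cdot,\cdot,\varnothing)$ and $r_t(\cdot,\cdot,S_t)$ via $\rtd$ using Assumption~\ref{assump:risk}, use the appending-increasing property to show the minimum in Assumption~\ref{assump:non-empty} is attained at $S=\varnothing$, and conclude $\rtd(\varnothing)\leq\alpha$. No gaps.
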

\begin{proof}
Under Assumption \ref{assump:risk}, $r_{t}(\{W_{k,t}\}_{k\in S_{t} }, S_{t}, \varnothing) = \rtd (\varnothing)$ and $r_{t}(\{W_{k,t}\}_{k\in S_{t} }, S_{t}, S_t) = \rtd \big([\{W_{k,t}\}_{k \in S_{t}}]\big)$. Again by Assumption \ref{assump:risk}, $\rtd$ is appending increasing, thus, $\rtd (\varnothing) \leq \rtd \big([\{W_{k,t}\}_{k \in S_{t}}]\big)$. It follows that $\min_{S\in\{\emptyset,S_t\}}r_{t}(\{W_{k,t}\}_{k\in S_{t} }, S_{t}, S) = \rtd (\varnothing)$. Hence, Assumption \ref{assump:non-empty} implies that $\rtd (\varnothing)\leq \alpha$.
\end{proof}

Next, we show the connection between the maps $\Io$ and $\Ho$ and Algorithm~\ref{alg:simp}.
\begin{lemma} \label{lemma:def-I&J}
Under Assumptions~\ref{assump:non-empty}, \ref{assump:risk}, and \ref{assump:utility},
if we input $W_{S_t,t} = \uu$ and an arbitrary index set $S_t$ satisfying $|S_t|=\dim(\uu)$ in Algorithm \ref{alg:simp}, then the selected $S_{t+1}$ 
satisfies 
\begin{equation} \label{eq:def-I&J}
|S_{t+1}| = \Io ([\uu]) \text{ and } [W_{S_{t+1}, t}] = \Ho ([\uu])    
\end{equation}
\end{lemma}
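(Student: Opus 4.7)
The plan is to unravel Algorithm~\ref{alg:simp} under Assumptions~\ref{assump:non-empty}, \ref{assump:risk}, and \ref{assump:utility}, and to match each quantity appearing in the algorithm with the corresponding quantity in the definitions of $\Io$ and $\Ho$. Writing $[\uu]=(v_1,\dots,v_{\dim(\uu)})$ with $v_1\le\cdots\le v_{\dim(\uu)}$ for the sorted input, Line~2 relabels the streams of $S_t$ as $k_1,\dots,k_{|S_t|}$ with $W_{k_i,t}=v_i$. Assumption~\ref{assump:risk} then lets me rewrite the quantity $\gamma_n$ from Line~3 as
\[
\gamma_n = r_t\bigl(\{W_{k,t}\}_{k\in S_t}, S_t, \{k_i\}_{i=1}^n\bigr) = \rtd\bigl([(v_1,\dots,v_n)]\bigr) = \rtd\bigl((v_1,\dots,v_n)\bigr),
\]
where the last equality uses that $(v_1,\dots,v_n)$ is already sorted. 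Assumption~\ref{assump:utility} gives analogously that $\mu_n = \tilde{u}_t((v_1,\dots,v_n))$.

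Next, I will exploit the appending-increasing property embedded in Assumptions~\ref{assump:risk} and \ref{assump:utility} to conclude that both $n\mapsto\gamma_n$ and $n\mapsto\mu_n$ are non-decreasing in $n$. From the monotonicity of $\gamma_n$, the feasible set $\{n\in\{0,\dots,|S_t|\}:\gamma_n\le\alpha\}$ is an initial segment $\{0,1,\dots,n_0\}$; it is non-empty by Lemma~\ref{lemma:empty-set}, and by the defining formula~\eqref{eq:io} of $\Io$ the right endpoint is exactly $n_0=\Io([\uu])$. The monotonicity of $\mu_n$ then implies that $\mu_n$ attains its maximum over this initial segment at $n=n_0$, and the tiebreaker rule specified in footnote~\ref{footnote:largest-n} of Algorithm~\ref{alg:simp} selects exactly this largest maximizer, giving $n^*=n_0=\Io([\uu])$.

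Finally, Line~5 of Algorithm~\ref{alg:simp} returns $S_{t+1}=\{k_1,\dots,k_{n^*}\}$ when $n^*\ge 1$ and $S_{t+1}=\varnothing$ when $n^*=0$, both of which yield $|S_{t+1}|=\Io([\uu])$ and $[W_{S_{t+1},t}]=(v_1,\dots,v_{n^*})=\Ho([\uu])$ directly from the definition of $\Ho$. I do not foresee a substantive obstacle here; the appending-increasing hypotheses were engineered precisely so that the greedy one-step rule picks the largest admissible initial segment of the sorted posterior vector. The only mildly delicate point is invoking the tiebreaker explicitly, since $\mu_n$ may be constant on a trailing portion of the feasible set, in which case the ``largest solution'' convention is what guarantees agreement with $\Io([\uu])$ rather than some smaller index.
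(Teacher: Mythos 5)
Your proposal is correct and follows essentially the same route as the paper's proof: identify $\gamma_n$ and $\mu_n$ with $\rtd$ and $\tilde u_t$ evaluated on initial segments of the sorted vector, use the appending-increasing hypotheses to get monotonicity in $n$, and invoke the largest-solution tiebreaker to force $n^*=\Io([\uu])$. The only cosmetic difference is that you explicitly note the feasible set is an initial segment via monotonicity of $\gamma_n$, whereas the paper gets $n^*\le \Io([\uu])$ directly from the supremum in the definition of $\Io$; both arguments hinge on the same tiebreaker step.
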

\begin{proof}
We first show that $\mu_n$ defined in line 3--4 of Algorithm~\ref{alg:simp} is non-decreasing in $n$.  Recall that, $\mu_n = \ud_t \big([W_{k,t}]_{k \in \{k_i\}_{i = 1}^{n}}\big) = \ud_t \big([\uu]_1, \ldots, [\uu]_n \big)$ for $n = 1, \ldots, |S_t|$, and $\mu_0 = \ud_t (\varnothing)$, where $[\uu]_k$ denotes the $k$th element of the order statistics $[\uu]$. According to Assumption~\ref{assump:utility}, $\ud_t$ is appending increasing. Thus, $
\mu_0 \leq \mu_1 \leq \ldots \leq \mu_{|S_t|}.
$

Next, we show that $|S_{t+1}| = \Io ([\uu])$.
Recall that $S_{t+1}=\big\{k_{1}, \ldots, k_{n^*}\big\}$ if $n^* \geq 1$ and $S_{t+1}=\varnothing$ if $n^*=0$ from the line 5 in Algorithm~\ref{alg:simp}, where $k_1,
\ldots, k_{n^*}$ satisfy $W_{k_{1}, t} \leq W_{k_{2}, t} \leq \cdots \leq W_{k_{n^*},t}$ and are obtained in the line 2 of Algorithm~\ref{alg:simp}. 

Let $n^{**} = \Io ([\uu])$.
Comparing $S_{t+1}$ output by Algorithm~\ref{alg:simp} and the definition of $\Io$, we have
$n^* \leq n^{**}$. On the other hand, we have $\mu_{n^*} \geq \mu_{n^{**}}$ according to lines 3--4 of Algorithm~\ref{alg:simp}. Because $\mu_n$ is non-decreasing in $n$, this implies $n^*\geq n^{**}$ or ($\mu_{n^*} = \mu_{n^{**}}$ and $n^*< n^{**}$). Note that the latter case  ($\mu_{n^*} = \mu_{n^{**}}$ and $n^*< n^{**}$) is not possible due to footnote \ref{footnote:largest-n}. Thus, $n^*\geq n^{**}$ and, consequently, $|S_{t+1}| = \Io ([\uu])$.

The second equation in \eqref{eq:def-I&J} holds due to the definition of $\Ho$ and lines 2 and 5 in Algorithm~\ref{alg:simp}.
\end{proof}

The following lemma compares the posterior probability associated with the proposed sequential decision rule and that of another sequential rule so that the risk process is controlled at the desired level.

\begin{lemma}\label{lemma:one-step optimal}
Let $\uu =(u_1, \ldots, u_m)\in \Su$. 
Under Assumptions~\ref{assump:non-empty}, \ref{assump:risk}, and \ref{assump:utility},
if $\{k_1, \ldots, k_l \} \subset \{1, \ldots, m \}$ satisfies $\rtd ([\uu^{\prime}]) \leq \alpha$ for $\uu^{\prime} = (u_{k_1}, \ldots, u_{k_l})$ and $l\geq 1$,
then
$
\Ho ([\uu]) \leqc [\uu^{\prime}].
$
In addition, if $\Ho ([\uu]) = \varnothing$, then no nonempty $\uu^{\prime} = (u_i)_{i \in A} \in \Su$ with $A \subset \{1, \ldots, m \} $
such that $\rtd ([\uu^{\prime}]) \leq \alpha$.
\end{lemma}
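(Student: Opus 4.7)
The plan is to separate the two cases based on whether $\Ho([\uu])$ equals $\varnothing$. Throughout, I would denote $[\uu] = (u_{(1)}, \ldots, u_{(m)})$ and $[\uu'] = (u'_{(1)}, \ldots, u'_{(l)})$ for the sorted vectors, and write $n^* = \Io([\uu])$. The fundamental observation I will use repeatedly is that since the multiset $\{u_{k_1}, \ldots, u_{k_l}\}$ is a sub-multiset of $\{u_1, \ldots, u_m\}$, its $i$th order statistic must dominate the $i$th order statistic of the larger set: $u'_{(i)} \geq u_{(i)}$ for every $i \leq l$.

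For the second (``in addition'') claim, assume $\Ho([\uu]) = \varnothing$, so $n^* = 0$ and in particular $\rtd((u_{(1)})) > \alpha$ by definition of $\Io$. Pick any nonempty subset $\uu'$. The fundamental observation yields $u'_{(1)} \geq u_{(1)}$, so the entrywise-increasing property of $\rtd$ (Assumption~\ref{assump:risk}) forces $\rtd((u'_{(1)})) \geq \rtd((u_{(1)})) > \alpha$, and the appending-increasing property then gives $\rtd([\uu']) \geq \rtd((u'_{(1)})) > \alpha$, contradicting $\rtd([\uu']) \leq \alpha$.

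For the main claim with $n^* \geq 1$, I need $\Ho([\uu]) = (u_{(1)}, \ldots, u_{(n^*)}) \leqc [\uu']$, which unwinds to (i) $l \leq n^*$ and (ii) $u_{(i)} \leq u'_{(i)}$ for $i = 1, \ldots, l$. Part (ii) is immediate from the fundamental observation. For (i) I would argue by contradiction: if $l \geq n^* + 1$, then the fundamental observation applied to the first $n^* + 1$ coordinates combined with $\rtd$ being entrywise increasing gives
\[
\rtd\bigl((u'_{(1)}, \ldots, u'_{(n^*+1)})\bigr) \;\geq\; \rtd\bigl((u_{(1)}, \ldots, u_{(n^*+1)})\bigr) \;>\; \alpha,
\]
where the final strict inequality uses the maximality of $n^* = \Io([\uu])$. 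Appending-increasingness then upgrades this to $\rtd([\uu']) \geq \rtd((u'_{(1)}, \ldots, u'_{(n^*+1)})) > \alpha$, contradicting $\rtd([\uu']) \leq \alpha$.

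The only subtlety worth flagging is orienting the partial order correctly: $\Ho([\uu]) \leqc [\uu']$ requires the left-hand side to be \emph{longer} and coordinatewise \emph{smaller}, so the length comparison must be extracted from the maximality of $n^*$ (via entrywise plus appending monotonicity), while the coordinatewise comparison comes for free from the subset/order-statistic inequality. No other ingredients beyond Assumption~\ref{assump:risk} (entrywise and appending monotonicity of $\rtd$) are needed; Assumption~\ref{assump:utility} and Assumption~\ref{assump:non-empty} enter only indirectly through the well-definedness of $\Io$ via Lemma~\ref{lemma:empty-set}.
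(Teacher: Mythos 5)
Your proof is correct and follows essentially the same route as the paper's: both rest on the observation that the $i$th order statistic of a sub-multiset dominates the $i$th order statistic of the full vector, combined with the entrywise monotonicity of $\rtd$. The only cosmetic difference is that you establish the length comparison $l \leq \Io([\uu])$ by contradiction via the maximality of $\Io([\uu])$ and appending-increasingness, whereas the paper gets $\Io([\uu]) \geq l$ directly from $\rtd(([\uu]_1,\ldots,[\uu]_l)) \leq \alpha$ and the definition of $\Io$ as a supremum.
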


\begin{proof}
We first show the ``In addition" part by contradiction. Suppose there exists $\uu^{\prime} = (u_{k_1}, \ldots, u_{k_l}) \in \Su$ ($l \geq 1$) such that $\{k_1, \ldots, k_l \} \subset \{1, \ldots, m \}$ and $\rtd([\uu^{\prime}])\leq \alpha$.
Because $[\uu]$ is the order statistic of $\uu$, $[\uu]_i\leq [\uup]_i$ for $i=1,\cdots, l$. 
Combining this with the assumption that $\rtd$ is entrywise increasing, we arrive at
$\rtd ([\uu]_1, \ldots, [\uu]_l) \leq \rtd ([\uu^{\prime}]) \leq \alpha.
$
This implies
$\Io([\uu]) \geq l$, which  contradicts with $\Ho ([\uu]) = \varnothing$.

For the first part of the lemma, 
for any $\uu^{\prime} = (u_{k_1}, \ldots, u_{k_l})$ with $l\geq 1$ which satisfies $\{k_1, \ldots, k_l \} \subset \{1, \ldots, m \}$ and $\rtd ([\uu^{\prime}]) \leq \alpha$,
to show $\Ho ([\uu]) \leqc [\uup]$, it suffices to show $\Io (\uu) \geq l$ and $[\uu]_{i} \leq u_{k_i}$ for $i=1, \ldots, l$. Similar to the previous arguments, since $[\uu]_i\leq [\uup]_i$ for $i=1,\cdots, l$, 
and the function $\rtd$ is entrywise increasing, $\rtd (([\uu]_1, \ldots, [\uu]_l)) \leq \rtd ([\uu^{\prime}]) \leq \alpha,
$ which implies $\Io([\uu]) \geq l$. 
\end{proof}

\begin{lemma} \label{lemma:increasing rule}
Under Assumptions~\ref{assump:non-empty}, \ref{assump:risk}, and \ref{assump:utility},
 the mapping $\Ho(\uu)$ is increasing in $\uu$ with respect to the partial order relation `$\leqc$'. That is, for any $\uu \leqc \vv \in \mathcal{S}_{\mathrm{o}}$, $\Ho(\uu) \leqc \Ho(\vv)$.
\end{lemma}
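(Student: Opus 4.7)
\textbf{Proof proposal for Lemma~\ref{lemma:increasing rule}.}
The plan is to unpack the definition of \(\leqc\) and show the two requirements separately: the dimension inequality \(\dim(\Ho(\uu))\geq\dim(\Ho(\vv))\), and the coordinatewise inequality on the first \(\dim(\Ho(\vv))\) entries. Since both quantities are controlled by the index \(\Io\), the bulk of the work is showing \(\Io([\uu])\geq \Io([\vv])\) whenever \(\uu\leqc\vv\); the coordinate comparison will then follow for free from the assumption \(\uu\leqc\vv\) together with \(\Io([\vv])\leq \dim(\vv)\). Fix \(\uu,\vv\in\So\) with \(\uu\leqc\vv\), so that \(\dim(\uu)\geq\dim(\vv)\) and \(u_i\leq v_i\) for \(i=1,\dots,\dim(\vv)\).

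First I would dispose of the degenerate cases. If \(\vv=\varnothing\), then \(\Ho(\vv)=\varnothing\) and any element of \(\So\) is \(\leqc \varnothing\), so we are done. Similarly, if \(\Io(\vv)=0\) then \(\Ho(\vv)=\varnothing\) and the same trivial comparison applies. So assume \(\vv\neq\varnothing\) and set \(n:=\Io(\vv)\geq 1\). By the definition of \(\Io\) in \eqref{eq:io}, we have \(\rtd(v_1,\dots,v_n)\leq \alpha\). Since \(n\leq\dim(\vv)\leq\dim(\uu)\), the tuple \((u_1,\dots,u_n)\) is well defined and lies in \(\So\) (as \(\uu\in\So\) is already ordered), and moreover \(u_i\leq v_i\) for \(i=1,\dots,n\) by the definition of \(\leqc\).

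Now I invoke Assumption~\ref{assump:risk}, which asserts that \(\rtd\) is entrywise increasing. Applied to the two length-\(n\) vectors \((u_1,\dots,u_n)\) and \((v_1,\dots,v_n)\), this gives
\[
\rtd(u_1,\dots,u_n)\;\leq\;\rtd(v_1,\dots,v_n)\;\leq\;\alpha.
\]
Consequently \(n\) is an admissible index in the definition of \(\Io(\uu)\), so \(\Io(\uu)\geq n=\Io(\vv)\). This in turn shows
\[
\dim(\Ho(\uu))=\Io(\uu)\;\geq\;\Io(\vv)=\dim(\Ho(\vv)).
\]
Finally, the coordinate comparison needed by \(\leqc\) requires \([\Ho(\uu)]_i\leq [\Ho(\vv)]_i\) for \(i=1,\dots,\Io(\vv)\), but by construction \(\Ho(\uu)=(u_1,\dots,u_{\Io(\uu)})\) and \(\Ho(\vv)=(v_1,\dots,v_{\Io(\vv)})\), so the required inequalities \(u_i\leq v_i\) for \(i=1,\dots,\Io(\vv)\) are exactly those already provided by \(\uu\leqc\vv\) (using \(\Io(\vv)\leq \dim(\vv)\)). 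Combining the dimension inequality with these coordinatewise inequalities yields \(\Ho(\uu)\leqc \Ho(\vv)\), as required.

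There is no real obstacle here: the entrywise-increasing property of \(\rtd\) does all the work for the dimension comparison, and the definition of \(\leqc\) makes the coordinate comparison automatic. The one subtlety to flag explicitly in the final write-up is the case analysis on \(\vv=\varnothing\) and \(\Io(\vv)=0\), since in these cases the comparison holds vacuously and one should not try to invoke the entrywise monotonicity on an empty tuple (though Lemma~\ref{lemma:empty-set} would cover that too). No appeal to the appending-increasing property of \(\rtd\) or to Assumption~\ref{assump:utility} is needed for this particular lemma.
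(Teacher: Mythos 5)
Your proof is correct and follows the same overall decomposition as the paper's: dispose of the case $\vv=\varnothing$ (and $\Io(\vv)=0$), reduce the claim to the dimension inequality $\Io([\uu])\geq\Io([\vv])$ plus the coordinatewise comparison, and observe that the latter is immediate from $\uu\leqc\vv$ together with $\Io(\vv)\leq\dim(\vv)$. The one genuine difference is in how the dimension inequality is obtained. The paper argues by contradiction: assuming $\Io(\uu)<\Io(\vv)$, it bounds $\rtd(u_1,\dots,u_{\Io(\uu)+1})$ first by $\rtd(v_1,\dots,v_{\Io(\uu)+1})$ (entrywise monotonicity) and then by $\rtd(v_1,\dots,v_{\Io(\vv)})\leq\alpha$ (appending monotonicity), contradicting the definition of $\Io(\uu)$. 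You instead compare directly at the single index $n=\Io(\vv)$, getting $\rtd(u_1,\dots,u_n)\leq\rtd(v_1,\dots,v_n)\leq\alpha$ and concluding $\Io(\uu)\geq n$ straight from the fact that $\Io$ is defined as a supremum over admissible indices. Your route is shorter and, as you correctly flag, dispenses with the appending-increasing property of $\rtd$ (it is still needed, via Lemma~\ref{lemma:empty-set}, to guarantee that $\Io$ is well defined, but not in the comparison itself) and with Assumption~\ref{assump:utility} entirely. Both arguments are valid; yours isolates more precisely which hypothesis does the work.
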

\begin{proof}
If $\vv = \varnothing$, then $\Ho(\vv) = \varnothing$. It follows that $\Ho(\uu) \leqc \varnothing = \Ho(\vv)$. We then assume $\vv \neq \varnothing$ in the rest of the proof. Assume $\vv = (v_1, \ldots, v_{\dim (\vv)} )$, $\dim (\vv) \geq 1$ and $\uu \leqc \vv$.

Recall that $\Ho(\uu) = (u_1,\cdots, u_{\Io(\uu)})$ and $\Ho(\vv)=(v_1,\cdots, v_{\Io(\vv)})$, it is sufficient to show $\Io(\uu) \geq \Io(\vv)$ and $u_i \leq v_i$ for $i = 1, \ldots, \Io(\vv)$. 

According to the definition of the partial order relation, $\dim(\uu)\geq \dim(\vv)$ and $u_i\geq v_i$ for $i=1,2,\cdots, \dim(\vv)$. Also note that $\Io(\vv)\leq\dim(\vv)$. This implies that 
$\Io(\vv)\leq\dim(\uu)$ and 
$u_i\leq v_i$ for $i= 1,2,\ldots, \Io(\vv)$. 

Next we show $\Io(\uu) \geq \Io(\vv)$ by contradiction. If on the contrary $\Io(\uu) < \Io(\vv)$, then $\Io(\uu) + 1 \leq \Io(\vv) \leq \dim(\vv)$. Since  $\rtd$ is entrywise increasing under Assumption~\ref{assump:risk}, we have
\begin{equation}\label{eq:ineq1}
\rtd ((u_1, \ldots, u_{\Io(\uu)+1})) \leq \rtd ((v_1, \ldots, v_{\Io(\uu)+1})).
\end{equation}
Because $\rtd$ is appending increasing and $\Io(\uu)+1\leq \Io(\vv)$, 
\begin{equation}\label{eq:ineq2}
\rtd \big((v_1, \ldots, v_{\Io(\uu)+1})\big) \leq 
\rtd \big((v_1, \ldots, v_{\Io(\vv)})\big).
\end{equation}
According to the definition of $\Io(\vv)$, we have 
$ \rtd \big((v_1, \ldots, v_{\Io(\vv)})\big)\leq \alpha.$
Combining this with \eqref{eq:ineq1} and \eqref{eq:ineq2}, we obtain $\rtd \big((u_1, \ldots, u_{\Io(\uu)+1})\big)\leq\alpha$. This contradicts with the definition of $\Io(\uu)$.
\end{proof}
\subsection{Monotone coupling of stochastic processes living on $\So$}\label{subsec:coupling}

In this section, we first introduce the definition and
classic results on stochastic dominance and coupling over a partial order space (see, e.g.,  \cite{strassen1965existence,kamae1977stochastic, lindvall2002lectures} for more details). Then, we present results for several stochastic processes living on $\So$ which are useful for comparing the proposed sequential decision with other decisions.

\begin{definition}\label{def:stoc-ordering}
Let $(\mathcal{S}, \leqc)$ be a partially ordered polish space. Assume $X, Y$ are two $\mathcal{S}$-valued random variables.  $X$ is  stochastically dominated by $Y$ (denoted by  $X \leqc_{st} Y$) if for all increasing, bounded and measurable functions $f: \mathcal{S} \rightarrow \RR$, $\EE (f(X)) \leq \EE (f(Y))$. 
\end{definition}
Let $\dd$ denote that random variables on both sides have the same distribution. The next result \cite{lindvall2002lectures} connects coupling with stochastic ordering.
\begin{fact}[Strassen’s theorem for a polish pospace] \label{fact:Strassen-thm}
Let $(\mathcal{S}, \leqc)$ be a polish partially ordered space, and let $X$ and $Y$ be $\mathcal{S}$-valued random variables. Then,  $X \leqc_{st} Y$ if and only if there exists a coupling $(\hat{X}, \hat{Y})$ such that $\hat{X} \dd X$ and $\hat{Y} \dd Y$ and $\hat{X} \leqc \hat{Y}$ a.s.
\end{fact}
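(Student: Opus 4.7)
The plan is to prove the two implications separately. The sufficiency direction is immediate: if a coupling $(\hat X,\hat Y)$ exists with $\hat X\leqc\hat Y$ almost surely, then for any bounded, measurable, increasing $f:\mathcal{S}\to\RR$ one has $f(\hat X)\leq f(\hat Y)$ on an event of probability one, so taking expectations and using $\hat X\dd X$, $\hat Y\dd Y$ yields $\EE(f(X))\leq \EE(f(Y))$, which is exactly $X\leqc_{st}Y$. This is a direct one-line argument and needs no structure on $\mathcal{S}$ beyond measurability of the order relation.

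For the necessity direction I would follow the approach of Kamae--Krengel--O'Brien. First, I would reduce to the compact case: since $\mathcal{S}$ is polish, the laws $\mu_X,\mu_Y$ are tight, so I can choose an increasing sequence of compact sets $K_n\uparrow$ with $\mu_X(K_n),\mu_Y(K_n)\to 1$. If the theorem is established on each $K_n$ for the conditional laws, the resulting couplings form a tight family on $\mathcal{S}\times\mathcal{S}$ (by tightness of the marginals), so Prokhorov's theorem yields a weakly convergent subsequence. The closedness of the order set $D=\{(x,y):x\leqc y\}$, which is guaranteed by the pospace definition, ensures the weak limit is still concentrated on $D$; the marginals of the limit are $\mu_X,\mu_Y$ by continuity of the projection operators on bounded continuous test functions.

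The compact case would be handled by a Hahn--Banach separation argument on the space of signed Radon measures on $K\times K$. Let $\Pi(\mu_X,\mu_Y)$ be the (nonempty, convex, weak-$*$ compact) set of couplings of $\mu_X$ and $\mu_Y$, and let $\mathcal{M}_D$ be the closed convex cone of nonnegative Radon measures supported on $D\cap(K\times K)$. The goal is to show $\Pi(\mu_X,\mu_Y)\cap\mathcal{M}_D\neq\emptyset$. If the intersection were empty, Hahn--Banach (in the duality between signed Radon measures and continuous functions on $K\times K$) would produce a continuous $\phi(x,y)$ and a constant $c$ with $\int\phi\,d\nu\leq c$ for all $\nu\in\mathcal{M}_D$ and $\int\phi\,d\pi>c$ for all $\pi\in\Pi(\mu_X,\mu_Y)$. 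The first inequality forces $\phi(x,y)\leq 0$ on $D$, which I would use (via the structure that $D$ contains the diagonal and is closed) to show $\phi$ can be replaced by a functional of the form $h(y)-g(x)$ with $g\leq h$ on pairs $x\leqc y$, and then by a single increasing function $f$, contradicting $\EE(f(X))\leq\EE(f(Y))$.

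The main obstacle is precisely this last step: extracting an \emph{increasing} function from an arbitrary continuous separating functional. The standard route is to replace $\phi$ by its ``monotone regularization'' $\bar\phi(x,y)=\sup\{\phi(x',y'):x'\leqc x,\;y\leqc y'\}$ (or an appropriate variant), verify it is still upper semicontinuous, preserves the separation, and factors through an increasing function on $\mathcal{S}$. Once this produces an increasing upper-semicontinuous $f$ that violates the stochastic-dominance hypothesis on the class of bounded continuous increasing functions, a monotone class / approximation argument extends the dominance hypothesis from bounded measurable increasing $f$ down to this class, completing the contradiction. A technical but routine point along the way is verifying that testing against bounded measurable increasing functions (Definition~\ref{def:stoc-ordering}) is equivalent to testing against bounded upper-semicontinuous increasing functions on a polish pospace, which uses the closedness of upper sets of the form $\{x:a\leqc x\}$.
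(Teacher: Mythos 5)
The paper does not actually prove this statement: it is recorded as a ``Fact'' and attributed to the coupling literature (\cite{lindvall2002lectures}, see also \cite{kamae1977stochastic}), so there is no in-paper argument to compare yours against. Judged on its own, your sufficiency direction is correct and complete. The necessity direction has the right classical skeleton (reduce to a tractable case, run a separation/duality argument, pass to the limit using closedness of $D=\{(x,y):x\leqc y\}$), but it contains one step that fails as stated and leaves the hardest step as a plan rather than a proof.

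The step that fails is the reduction to the compact case. You condition $\mu_X$ and $\mu_Y$ on compact sets $K_n$ and then invoke ``the theorem on each $K_n$ for the conditional laws,'' but conditioning on a compact set does not preserve stochastic dominance, so the hypothesis of the compact-case theorem need not hold for those conditional laws. Concretely, on $\RR$ take $\mu_X=\tfrac12(\delta_0+\delta_2)$, $\mu_Y=\tfrac12(\delta_1+\delta_3)$ and $K=\{0,1,2\}$: then $\mu_X$ is stochastically dominated by $\mu_Y$, yet $\mu_X(\cdot\mid K)=\tfrac12(\delta_0+\delta_2)$ is not dominated by $\mu_Y(\cdot\mid K)=\delta_1$ (test with the increasing function $f=\ind_{[2,\infty)}$). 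The standard route instead proves the finite/discrete case directly (marriage lemma or max-flow min-cut), approximates the Polish-space laws by discretizations chosen so that approximate dominance is preserved, and only then uses your (correct) closing move: tightness of the approximate couplings, Prokhorov, and closedness of $D$ to see the limit coupling is supported on $D$ with the right marginals.

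Separately, the Hahn--Banach step --- upgrading a continuous separating $\phi$ with $\phi\le 0$ on $D$ to a functional of the form $h(y)-g(x)$ and then to a single increasing $f$ --- is precisely the analytic heart of Strassen's theorem, and you explicitly flag it as ``the main obstacle'' and describe only an intended monotone regularization. As written this is a statement of intent, not an argument. (One minor point in your favor: the final approximation step you worry about is unnecessary, since upper semicontinuous increasing functions are measurable and hence already belong to the test class of Definition~\ref{def:stoc-ordering}.)
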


Next we introduce the `monotonicity' of a Markov kernel over a pospace.
\begin{definition} \label{def:stochastically-monotone}
Let $K_1, K_2$ be two transition kernels over a partially ordered polish space $(\mathcal{S}, \leqc)$. $K_1$ is said to be {\em stochastically dominated} by $K_2$ (denoted by or $K_1 \prec_{st} K_2$) if  $$K_1 (\uu, \cdot) \leqc_{st} K_2 (\vv, \cdot)$$ for all $\uu \leqc \vv$. Moreover, if $K \prec_{st} K$, then the kernel $K$ is said to be stochastically monotone. 
\end{definition}

\begin{lemma} \label{lemma:Kt-MC}
The process $\{W_{k,t}\}_{t \geq 1}$ defined in \eqref{eq:rec2} and \eqref{eq:rec} is a Markov chain. Moreover, if  Assumption \ref{assump:iid} holds, $\{W_{k,t}\}_{t \geq 1}$ have the same  transition kernel for different $k\in\{1,\cdots,K\}$. Denote by  $\Kt (\cdot, \cdot)$ this transition kernel of $\{W_{k,t}\}_{t \geq 1}$ at time $k$. Then,  $\Kt$ are stochastically monotone for $t\geq 1$.
\end{lemma}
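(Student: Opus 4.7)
My plan is to break the lemma into its three parts (Markov property, $k$-free transition kernel, stochastic monotonicity) and address them in order, exploiting the explicit recursion \eqref{eq:rec2}--\eqref{eq:rec}.

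First, for the Markov property, I would observe that the map $w \mapsto w/(1-w)$ is a bijection between $[0,1)$ and $[0,\infty)$, so $\{W_{k,t}\}$ and $\{Q_{k,t}\}$ generate the same filtration and it suffices to prove $\{Q_{k,t}\}$ is Markov. From \eqref{eq:rec2}, $Q_{k,t+1}$ is a deterministic function of $Q_{k,t}$ and $L_{k,t+1}=q_{k,t+1}(X_{k,t+1})/p_{k,t+1}(X_{k,t+1})$, so I only need that the conditional distribution of $X_{k,t+1}$ given the natural filtration of the $k$-th stream depends on the past only through $W_{k,t}$. This follows from the mixture representation: conditional on $\mathcal{F}_t$, equation \eqref{eq: tau leq t} says $\mathbb{P}(\tau_k\le t\mid\mathcal{F}_t)=g(W_{k,t})$, so $X_{k,t+1}$ is drawn from the $g(W_{k,t})$--$(1-g(W_{k,t}))$ mixture of $q_{k,t+1}$ and $p_{k,t+1}$, a quantity depending on the past only through $W_{k,t}$. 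Hence $\{W_{k,t}\}$ is a (time-inhomogeneous) Markov chain.

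Second, the transition kernel of $\{W_{k,t}\}$ is built entirely from (i) the mixing weight $g(\cdot)$, which depends only on the prior $\{\pi_s\}$, and (ii) the laws of $L_{k,t+1}$ under $p_{k,t+1}$ and $q_{k,t+1}$. Under Assumption~\ref{assump:iid}, $p_{k,t+1}$ and $q_{k,t+1}$ do not depend on $k$, so neither does the kernel; denote it $K_t$.

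Third and most delicate, I would prove stochastic monotonicity by constructing a monotone coupling. Fix $w\le v$ in $[0,1]$ and set $q=w/(1-w)\le v/(1-v)=q'$. From \eqref{eq:rec2} write $Q_{k,t+1}=\phi_t(Q_{k,t})L$ with $\phi_t$ strictly increasing, and note $W_{k,t+1}=Q_{k,t+1}/(Q_{k,t+1}+1)$ is strictly increasing in $Q_{k,t+1}$. The key lemma I will invoke is that the law $\mathcal{L}(L\mid X\sim q_{t+1})$ stochastically dominates $\mathcal{L}(L\mid X\sim p_{t+1})$; this is a standard consequence of $L$ being its own likelihood ratio (for any threshold $c$, $\mathbb{P}_q(L>c)=\int_{L>c}q\,dx=\int_{L>c}L\cdot p\,dx\ge c\,\mathbb{P}_p(L>c)$ combined with a complementary bound giving the desired stochastic order). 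Since $g$ is increasing in its argument, the mixture weight on the dominating component is larger at $v$ than at $w$, so I can couple the $L$'s at the two starting points with $L^{(w)}\le L^{(v)}$ almost surely via common quantile / inverse-CDF construction. Applying the increasing map $(Q_{k,t},L)\mapsto \phi_t(Q_{k,t})L/(1+\phi_t(Q_{k,t})L)$ yields $W_{k,t+1}^{(w)}\le W_{k,t+1}^{(v)}$ a.s., which by Strassen's theorem (Fact~\ref{fact:Strassen-thm}) is equivalent to $K_t(w,\cdot)\le_{st} K_t(v,\cdot)$.

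The main obstacle is the third step: the conditional law of $X_{k,t+1}$ changes with $w$, so one cannot simply couple through a single noise variable. The trick is to separate the two sources of randomness (the mixture indicator, monotone in $g(w)$, and, conditionally, the draw from $p_{t+1}$ or $q_{t+1}$) and then handle the within-component comparison using likelihood-ratio stochastic dominance. Once this coupling is in place, everything else is monotone arithmetic.
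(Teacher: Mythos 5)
Your proposal is correct and follows essentially the same route as the paper's proof: the Markov property and $k$-independence come from the recursion together with the mixture representation of the conditional law of $X_{k,t+1}$ given $W_{k,t}$, and stochastic monotonicity of $K_t$ is obtained by coupling the likelihood ratios at two starting points (using that $L$ under $q_{t+1}$ stochastically dominates $L$ under $p_{t+1}$ and that the mixture weight $g$ is increasing) and then applying the monotone update map. The only cosmetic difference is that you build the coupling explicitly via a common inverse-CDF construction, whereas the paper cites Lemma F.6 of \cite{chen2019compound} for the dominance $L_{t+1}(x)\leq_{st}L_{t+1}(x')$ and invokes Strassen's theorem to extract the a.s.\ ordered coupling.
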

\begin{proof}
Combining equations \eqref{eq:rec2} and \eqref{eq:rec}, we have a recursive formula for $W_{k, t}$
\begin{equation} \label{eq: process W}
W_{k, t+1} = \frac{L_{k,t+1} }
{L_{k,t+1} + \frac{\bpi_{t+1}}{\bpi_{t}}   
\frac{1 - W_{k,t}} 
{\frac{\pi_{t}}{\bpi_{t}} + (1 - \frac{\pi_{t}}{\bpi_{t}})  W_{k,t} } }.
\end{equation}

Recall that $L_{k,t+1}= q_{k,{t+1}}\big(X_{k,{t+1}}\big) / p_{k,{t+1}}(X_{k,{t+1}})$. From the right-hand side of \eqref{eq: process W} we know $W_{k, t+1}$ is a function of $W_{k,t}$ and $X_{k,t+1}$%
. For $X_{k, t+1}$, conditioning on $W_{k,1}, W_{k,2}, \ldots, W_{k,t}$, its density function is 
$
g(W_{k,t}) q_{k,t+1}(\cdot) + (1 - g(W_{k,t})) p_{k,t+1}(\cdot),
$
which is determined by  $W_{k,t}$. Thus,  $W_{k, t+1}$ only depends on $W_{k,t}$, given $W_{k,1}, W_{k,2}, \ldots, W_{k,t}$, which further implies that $\{W_{k,t}\}_{t \geq 1}$ is a Markov chain.

Next, from the above argument we know the probability density function of $X_{k,t+1}$ conditioning on $W_{k,t}$ is 
$
g(W_{k,t}) q_{k,t+1}(\cdot) + (1 - g(W_{k,t})) p_{k,t+1}(\cdot),
$
which  is independent of the index $k$ given $W_{k,t}$ under Assumption~\ref{assump:iid}. This, combined with  \eqref{eq: process W}, implies that the conditional probability of $W_{k, t+1}$ given $W_{k,t}$ is independent of the index $k$. That is, all the streams share the same transition kernel $K_t(\cdot, \cdot)$.
In the rest of the proof, we show that the kernels $K_t$ are stochastically monotone, for which we extend the proof of Lemma F.9 in \cite{chen2019compound}. 

Under Assumption~\ref{assump:iid},  $p_{k,t} = p_t$ and $q_{k,t} = q_t$ for all $k$ for some functions $p_t$ and $q_t$. We first describe a random variable $M_{t+1}(x)$, which we will see to have the distribution $K_t(x,\cdot)$. Let $\zeta_t(x) = \bar{\pi}_{t}^{-1} \pi_{t} + \big(1- \bar{\pi}_{t}^{-1} \pi_{t} \big) x$. 
For $x\in(0,1)$, first generate a random variable $Z_{t+1}(x)$ with the density 
$
\zeta_t(x) q_{t+1}(\cdot) + (1 - \zeta_t(x)) p_{t+1}(\cdot).
$
and then compute $L_{t+1}(x) = q_{t+1}(Z_{t+1}(x))/p_{t+1}(Z_{t+1}(x))$ and let
$
M_{t+1}(x) = (L_{t+1}(x) + \frac{\bpi_{t+1}}{\bpi_{t}}   
\frac{1 - x} 
{\frac{\pi_{t}}{\bpi_{t}} + (1 - \frac{\pi_{t}}{\bpi_{t}})  x })^{-1}  L_{t+1}(x) .
$
From this generation process, we can see that $M_{t+1}(x)$ has the same distribution as $W_{k,t+1} | W_{k,t} = x$. That is, $M_{t+1}(x)$ has the density function $K_t(x,\cdot)$.

Next, to show the kernel $K_t(x,\cdot)$ is stochastically monotone, by Definition~\ref{def:stochastically-monotone}, it is sufficient to show $K_t(x,\cdot) \leq_{st} K_t(x^{\prime},\cdot)$ for any $x,x^{\prime}$ with $0 \leq x \leq x^{\prime} \leq 1$. 
Because $\zeta_t(x)$ is increasing in $x$ and according to Lemma F.6 in \cite{chen2019compound},  
we have
$
L_{t+1}(x) \leq_{st} L_{t+1}(x^{\prime}).$
Combine this result with Fact \ref{fact:Strassen-thm}, there exists a coupling $(\hat{L}_{t+1},\hat{L}_{t+1}^{\prime})$ such that $\hat{L}_{t+1} \dd L_{t+1}(x)$, $\hat{L}_{t+1}^{\prime} \dd L_{t+1}(x^{\prime})$ and $\hat{L}_{t+1} \leq \hat{L}_{t+1}^{\prime}$ a.s. Let 
$\hat{M}_{t+1} = (\hat{L}_{t+1} + \frac{\bpi_{t+1}}{\bpi_{t}}   
\frac{1 - x} 
{\frac{\pi_{t}}{\bpi_{t}} + (1 - \frac{\pi_{t}}{\bpi_{t}})  x })^{-1} \hat{L}_{t+1} $
and 
$\hat{M}_{t+1}^{\prime} = (\hat{L}_{t+1}^{\prime} + \frac{\bpi_{t+1}}{\bpi_{t}}   
\frac{1 - x^{\prime}} 
{\frac{\pi_{t}}{\bpi_{t}} + (1 - \frac{\pi_{t}}{\bpi_{t}})  x^{\prime} })^{-1}
 \hat{L}_{t+1}^{\prime}.
$
Then,
we have 
$\hat{M}_{t+1} 
= 
(\hat{L}_{t+1} + \frac{\bpi_{t+1}}{\bpi_{t}}   
\frac{1 - x} 
{\frac{\pi_{t}}{\bpi_{t}} + (1 - \frac{\pi_{t}}{\bpi_{t}})  x } )^{-1}\hat{L}_{t+1} \leq 
(\hat{L}_{t+1}^{\prime}  + \frac{\bpi_{t+1}}{\bpi_{t}}   
\frac{1 - x} 
{\frac{\pi_{t}}{\bpi_{t}} + (1 - \frac{\pi_{t}}{\bpi_{t}})  x } )^{-1}\hat{L}_{t+1}^{\prime}  \leq (\hat{L}_{t+1}^{\prime} + \frac{\bpi_{t+1}}{\bpi_{t}}   
\frac{1 - x^{\prime}} 
{\frac{\pi_{t}}{\bpi_{t}} + (1 - \frac{\pi_{t}}{\bpi_{t}})  x^{\prime} })^{-1} \hat{L}_{t+1}^{\prime} = \hat{M}_{t+1}^{\prime} \text{ a.s.}
$
By Fact \ref{fact:Strassen-thm}, this inequality implies $M_{t+1}(x) \leq_{st} M_{t+1}(x^{\prime})$, which further implies $K_t(x,\cdot) \leq_{st} K_t(x^{\prime},\cdot)$.
\end{proof}

For the ease of presentation, let $S_t^{\delta}$ and $H^{\delta}_t$ denote the active set $S_t$ and the historical information  following the decision $\delta$. Similarly, we let $W^{\delta}_{k,t}=\mathbb{P}(\tau_k<t|H^{\delta}_t)$ be the posterior probability for the change point to occur before time $t$ at the $k$-th stream, following the decision $\delta$. We also let $W_{S, t}^{\delta}=\big(W_{k, t}^{\delta}\big)_{k \in S}$  denote the vector of posterior probability associated with the subset $S \subset\{1, \cdots, K\}$ of data streams following the decision $\delta$.

\begin{lemma} \label{lemma:Kt}
Under Assumption~\ref{assump:iid}, for any sequential decision $\delta$,  $[W^{\delta}_{S^{\delta}_{t+1}, t+1}]$ is independent of $H_t^{\delta}$ given $[W^{\delta}_{S^{\delta}_{t+1}, t}]$. In addition, the conditional density of $[W^{\delta}_{S^{\delta}_{t+1}, t+1}]$ at $\vv$ given $[W^{\delta}_{S^{\delta}_{t+1}, t}] = \uu$ with $\dim (\uu) = m$ is
\begin{equation}
\begin{split}
&\KKt(\uu, \vv):=\\
&\begin{cases}
\sum_{\pi \in \Gamma_{m}} \prod_{l=1}^{m} \Kt (u_{l}, v_{\pi(l)} ) &\text{if } \dim(\vv) = m \geq 1 \\
1 &\text {if } \dim(\vv)= m = 0 \\
0 & \text {otherwise},
\end{cases}   
\end{split}
\end{equation}
where $\Gamma_{m}$ represents the set of all permutations over $[m]$. $\KKt$ is a transition kernel on $\So \times \So$.
\end{lemma}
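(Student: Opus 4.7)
The plan is to combine the single-stream result from Lemma~\ref{lemma:Kt-MC} with the conditional independence across streams, and then pass to the order statistic via a symmetry computation. First I would observe that $S_{t+1}^{\delta}$ is $\mathcal{F}_t^{\delta}$-measurable, so conditional on $H_t^{\delta}$ the set $S_{t+1}^{\delta}$ is deterministic, and only the stochastic evolution of the $W_{k,t+1}^{\delta}$ for $k\in S_{t+1}^{\delta}$ remains to be described. By equation~\eqref{eq: process W} in the proof of Lemma~\ref{lemma:Kt-MC}, $W_{k,t+1}^{\delta}$ is a measurable function of $W_{k,t}^{\delta}$ and $X_{k,t+1}$; and the conditional density of $X_{k,t+1}$ given $H_t^{\delta}$ equals $g(W_{k,t}^{\delta})q_{t+1}+(1-g(W_{k,t}^{\delta}))p_{t+1}$ under Assumption~\ref{assump:iid}, which depends on $H_t^{\delta}$ only through $W_{k,t}^{\delta}$. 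Together with the conditional independence of $\{X_{k,t+1}\}_{k\in S_{t+1}^{\delta}}$ across $k$ given $H_t^{\delta}$, this yields that $\{W_{k,t+1}^{\delta}\}_{k\in S_{t+1}^{\delta}}$ are conditionally independent given $H_t^{\delta}$ with marginal density $K_t(W_{k,t}^{\delta},\cdot)$.

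Next I would write down the conditional joint density, indexed by the unordered set $S_{t+1}^{\delta}$:
\begin{equation*}
f\!\left((y_k)_{k\in S_{t+1}^{\delta}}\,\big|\,H_t^{\delta}\right)=\prod_{k\in S_{t+1}^{\delta}} K_t(W_{k,t}^{\delta},y_k).
\end{equation*}
Labelling the elements of $S_{t+1}^{\delta}$ arbitrarily as $1,\dots,m$ with pre-values $(w_1,\dots,w_m)$, the standard density-of-order-statistics formula gives that the vector $[W_{S_{t+1}^{\delta},t+1}^{\delta}]$ has conditional density, on the ordered region $v_1\leq\cdots\leq v_m$,
\begin{equation*}
\sum_{\pi\in\Gamma_m}\prod_{l=1}^{m}K_t(w_l,v_{\pi(l)}).
\end{equation*}

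The key symmetry step is to observe that this expression is invariant under any permutation of $(w_1,\dots,w_m)$: replacing $w_l$ by $w_{\sigma(l)}$ and reindexing $\pi'=\pi\circ\sigma^{-1}$ yields the same sum. Hence the density depends on $(w_1,\dots,w_m)$ only through its order statistic $\uu=[w]$, and equals $\KKt(\uu,\vv)$. This simultaneously (i) identifies the conditional density with $\KKt(\uu,\vv)$ and (ii) shows that the conditional law of $[W_{S_{t+1}^{\delta},t+1}^{\delta}]$ given $H_t^{\delta}$ depends on $H_t^{\delta}$ only through $[W_{S_{t+1}^{\delta},t}^{\delta}]$, which is precisely the conditional independence claim.

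Finally, to verify that $\KKt$ is a transition kernel on $\So\times\So$, measurability in $\uu$ for fixed Borel sets in $\vv$ follows from measurability of $K_t$ together with the finiteness of $\Gamma_m$. For the probability measure property, I would integrate over the ordered region: swapping sum and integral and changing variables $v_{\pi(l)}\mapsto v_l$ inside each term, the terms indexed by permutations $\pi$ partition $[0,1]^m$ (up to a null set) into the $m!$ ordered regions, so their sum equals $\int_{[0,1]^m}\prod_l K_t(u_l,v_l)\,dv=1$; the cases $m=0$ and $\dim(\vv)\neq m$ are immediate from the definition. The main obstacle, though not especially deep, is being careful about the permutation bookkeeping when passing from the unordered joint density to the ordered one and identifying the result with a function of the order statistic alone; once that is handled cleanly, the remainder is routine.
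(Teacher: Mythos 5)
Your proposal is correct, and it follows the route one would expect: conditional independence of the streams given $H_t^{\delta}$, the single-stream kernel $K_t$ from Lemma~\ref{lemma:Kt-MC}, the order-statistics density formula, and the permutation-invariance observation that collapses the dependence on $(w_1,\dots,w_m)$ to its order statistic. The paper itself does not spell this out --- its proof is a one-line pointer to Lemma F.10 of \cite{chen2019compound} with the time-homogeneous kernel replaced by $K_t$ --- so your argument is essentially a self-contained reconstruction of that deferred proof rather than a genuinely different approach.
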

\begin{proof}
The proof follows that of Lemma F.10 in \cite{chen2019compound} by replacing the time-homogeneous kernel $K$ in Lemma F.10 in \cite{chen2019compound} with $K_t$.  %
\end{proof}

\begin{lemma}\label{lemma:increasingK}
Under Assumption \ref{assump:iid},  $\KKt(\uu, \cdot) \leqc_{st} \KKt(\uup, \cdot)$ for $\uu, \uup \in \So$ with $\uu \leqc \uu^{\prime}$, and $t\geq 1$.
\end{lemma}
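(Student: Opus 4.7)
The plan is to establish the stochastic ordering by constructing an explicit monotone coupling and then invoking Strassen's theorem (Fact~\ref{fact:Strassen-thm}). Fix $\uu \leqc \uup$ in $\So$, and write $m = \dim(\uu) \geq \dim(\uup) = m'$, so that $u_l \leq u'_l$ for $l = 1, \ldots, m'$; the case $\uup = \varnothing$ corresponds to $m' = 0$.

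Reading off the definition of $\KKt$ from Lemma~\ref{lemma:Kt}, the law $\KKt(\uu,\cdot)$ on $\So$ is nothing but the distribution of the order statistic $[Z_1, \ldots, Z_m]$, where $Z_1, \ldots, Z_m$ are mutually independent with $Z_l \sim \Kt(u_l, \cdot)$; when $m=0$ it is the point mass at $\varnothing$. An analogous representation holds for $\KKt(\uup,\cdot)$ in terms of independent $Z'_l \sim \Kt(u'_l,\cdot)$, $l = 1,\ldots, m'$. I would therefore build the coupling by (i) using the stochastic monotonicity of $\Kt$ from Lemma~\ref{lemma:Kt-MC} together with Fact~\ref{fact:Strassen-thm} on each coordinate $l \leq m'$ to couple $Z_l$ and $Z'_l$ with $Z_l \leq Z'_l$ almost surely, (ii) taking these $m'$ pairwise couplings to be mutually independent, and (iii) drawing $Z_{m'+1}, \ldots, Z_m$ independently from $\Kt(u_{m'+1},\cdot), \ldots, \Kt(u_m,\cdot)$.

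The main task is then to verify that this joint construction yields $[Z_1,\ldots,Z_m] \leqc [Z'_1,\ldots, Z'_{m'}]$ almost surely in $\So$. The dimension condition $m \geq m'$ from the definition of `$\leqc$' is automatic, so what remains is the coordinatewise inequality on the first $m'$ order statistics:
\begin{equation*}
[Z_1,\ldots,Z_m]_{(i)} \leq [Z'_1,\ldots,Z'_{m'}]_{(i)}, \qquad i = 1, \ldots, m'.
\end{equation*}
I would prove this by chaining two elementary facts about order statistics: (a) appending extra entries to a tuple can only weakly decrease each of its lower order statistics, giving $[Z_1,\ldots,Z_m]_{(i)} \leq [Z_1,\ldots,Z_{m'}]_{(i)}$ for $i \leq m'$; and (b) componentwise monotonicity, giving $[Z_1,\ldots,Z_{m'}]_{(i)} \leq [Z'_1,\ldots,Z'_{m'}]_{(i)}$ for $i \leq m'$ since $Z_l \leq Z'_l$ a.s.\ for $l \leq m'$. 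The degenerate case $\uup = \varnothing$ is immediate because $\vv \leqc \varnothing$ for every $\vv \in \So$ by definition.

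With the coupling in hand, the "if" direction of Fact~\ref{fact:Strassen-thm} on the polish pospace $(\So,\leqc)$ (see Lemma~\ref{lemma:pospace}) delivers the desired stochastic domination $\KKt(\uu,\cdot) \leqc_{st} \KKt(\uup,\cdot)$. I expect the main obstacle to be nothing deep but rather the careful bookkeeping in step (iii), namely verifying that the extra variables $Z_{m'+1}, \ldots, Z_m$ on the $\uu$-side -- which have no counterparts on the $\uup$-side -- do not destroy the comparison; this is exactly where fact (a) is needed, and it is the reason the partial order `$\leqc$' was defined so that longer tuples sit below shorter ones.
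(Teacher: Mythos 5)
Your proof is correct and follows the same route the paper intends: the paper's own ``proof'' merely defers to Lemma~F.12 of \cite{chen2019compound} (with $K$, $\KK$ replaced by $\Kt$, $\KKt$), and that argument is exactly the coordinatewise monotone coupling via Lemma~\ref{lemma:Kt-MC} and Fact~\ref{fact:Strassen-thm} that you construct, combined with the two order-statistic facts (appending entries weakly decreases lower order statistics; order statistics are componentwise monotone) to handle the dimension mismatch. Your write-up in fact supplies the details the paper outsources, and all steps check out, including the degenerate case $\uup=\varnothing$.
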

\begin{proof}
The proof follows that of Lemma F.12 in \cite{chen2019compound} by replacing $K$ in Lemma F.12 in \cite{chen2019compound} by $\Kt$ and replacing $\KK$ in Lemma F.12 in \cite{chen2019compound} with $\KKt$. 
\end{proof}
Next, we compare several decisions described below. Let 
$
\sD= (d_1^*, d_2^*,\cdots),
$
 be the proposed sequential decision, 
and 
$\delta = ( d_{1}, d_{2},\ldots)$
be an arbitrary decision procedure.
Let $\psi_t$ be an operator over the space of sequential decisions, and for each decision $\delta$, 
\begin{equation}\label{eq:circ-decision}
\psi_t\circ \delta = (d_{1}, d_{2},\ldots, d_{t-1}, d_{t}^{*}, \ldots).
\end{equation}
That is, $\psi_t$ maps $\delta$ to another sequential decision rule which makes the same decision as  $\delta$ at time $1,2,\cdots,t-1$  and switch to the proposed  $\sD$ at time $t$ and afterwards.

In what follows, we will compare $\psi_{t_0}\circ \delta$ and $\psi_{t_0+1}\circ \delta$ for a fixed $t_0$ and an arbitrary $\delta\in\mathcal{D}_{\alpha}$. For the ease of presentation, we write
\begin{equation} \label{eq:done}
\done = \psi_{t_0}\circ \delta = ( d_{1}, d_{2},\ldots, d_{t_0-1}, d_{t_0}^{*}, \ldots).
\end{equation}
and \begin{equation} \label{eq:dtwo}
\dtwo = \psi_{t_0+1}\circ \delta = ( d_{1}, d_{2},\ldots, d_{t_0-1},d_{t_{0}}, d^{*}_{t_{0}+1}, \ldots).
\end{equation}
Note that $\psi_{t_0+1}\circ \done  = \done$. Also, $\dtwo$ turn to the proposed method $\sD$ one time unit later than $\done$.

The next lemma provides the transition kernel for the posterior probability process following the decision $\delta_1$.
\begin{lemma} \label{lemma:K_tilde}
Under Assumptions~\ref{assump:non-empty}, \ref{assump:risk}, \ref{assump:utility} and \ref{assump:iid},
For any $t_{0} \geq 1$ and $s \geq 0$, 
$[W^{\done}_{S^{\done}_{t_{0}+s+1}, t_{0}+s+1}]$ is conditionally independent of $H^{\done}_{t_0+s}$ given $[W^{\done}_{S^{\done}_{t_{0}+s}, t_{0}+s}]$. In addition, the conditional density of $[W^{\done}_{S^{\done}_{t_{0}+s+1}, t_{0}+s+1}]$ at $\vv$ given $[W^{\done}_{S^{\done}_{t_{0}+s}, t_{0}+s}] = \uu$ is
$
\KKtts(\uu, \vv):= \KKts( \Hts (\uu), \vv) .
$
\end{lemma}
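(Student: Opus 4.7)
The plan is to split the one-step transition under $\done$ at time $t_0+s$ into two substeps and compose previously established results: (i) the deterministic selection substep, handled by Lemma~\ref{lemma:def-I&J}, and (ii) the stochastic data-update substep, handled by Lemma~\ref{lemma:Kt}. No new structural work is needed; the argument is essentially bookkeeping, so I anticipate no significant obstacle beyond keeping the conditioning arguments straight.

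First I would observe that because $\done=\psi_{t_0}\circ\delta$, for every $s\geq 0$ the decision made at time $t_0+s$ is the proposed rule $d^*_{t_0+s}$. Therefore $S^{\done}_{t_0+s+1}$ is obtained by feeding $(S^{\done}_{t_0+s},\{W^{\done}_{k,t_0+s}\}_{k\in S^{\done}_{t_0+s}})$ into Algorithm~\ref{alg:simp}. Applying Lemma~\ref{lemma:def-I&J} (valid under Assumptions~\ref{assump:non-empty}, \ref{assump:risk}, \ref{assump:utility}) yields
\begin{equation*}
[W^{\done}_{S^{\done}_{t_0+s+1},\,t_0+s}] \;=\; \Hts\!\bigl([W^{\done}_{S^{\done}_{t_0+s},\,t_0+s}]\bigr).
\end{equation*}
In particular, $[W^{\done}_{S^{\done}_{t_0+s+1},\,t_0+s}]$ is a measurable function of $[W^{\done}_{S^{\done}_{t_0+s},\,t_0+s}]$ alone (equivalently, a measurable function of $H^{\done}_{t_0+s}$ that factors through the current ordered posterior vector).

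Next I would invoke Lemma~\ref{lemma:Kt} applied to the decision $\delta=\done$ at time $t=t_0+s$. Under Assumption~\ref{assump:iid}, this yields that $[W^{\done}_{S^{\done}_{t_0+s+1},\,t_0+s+1}]$ is conditionally independent of $H^{\done}_{t_0+s}$ given $[W^{\done}_{S^{\done}_{t_0+s+1},\,t_0+s}]$, and that its conditional density at $\vv$ is $\KKts\!\bigl([W^{\done}_{S^{\done}_{t_0+s+1},\,t_0+s}],\,\vv\bigr)$.

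Finally I would compose the two steps. Writing $\uu=[W^{\done}_{S^{\done}_{t_0+s},\,t_0+s}]$, substituting the first-step identity into the kernel from the second step gives
\begin{equation*}
\prob\!\bigl([W^{\done}_{S^{\done}_{t_0+s+1},\,t_0+s+1}]\in d\vv \,\big|\, H^{\done}_{t_0+s}\bigr) \;=\; \KKts\!\bigl(\Hts(\uu),\,\vv\bigr)\,d\vv \;=\; \KKtts(\uu,\vv)\,d\vv,
\end{equation*}
which depends on $H^{\done}_{t_0+s}$ only through $\uu$. This simultaneously identifies the kernel as $\KKtts$ and, by the factorization criterion for conditional independence, establishes that $[W^{\done}_{S^{\done}_{t_0+s+1},\,t_0+s+1}]$ is conditionally independent of $H^{\done}_{t_0+s}$ given $[W^{\done}_{S^{\done}_{t_0+s},\,t_0+s}]$, completing the proof.
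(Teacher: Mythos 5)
Your proposal is correct and follows essentially the same route as the paper: the paper's proof (which defers to Lemma F.13 of \cite{chen2019compound}) likewise decomposes the transition into the deterministic selection step handled by Lemma~\ref{lemma:def-I&J} and the stochastic update step handled by Lemma~\ref{lemma:Kt}, then composes the two. The bookkeeping with the factorization criterion for conditional independence is handled correctly.
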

\begin{proof}
Under Assumptions~\ref{assump:non-empty}, \ref{assump:risk} and \ref{assump:utility}, Lemmas~\ref{lemma:def-I&J} and \ref{lemma:Kt} hold. 
Then, the proof follows similar arguments as that of Lemma F.13 in \cite{chen2019compound} by replacing $K$ in Lemma F.13 in \cite{chen2019compound} with $K_{t_{0}+s}$, $\KK$ in Lemma F.13 in \cite{chen2019compound} with $\KKts$, and Lemmas D.2 and D.10 in \cite{chen2019compound} with Lemmas~\ref{lemma:def-I&J} and \ref{lemma:Kt}, respectively. The rest of the proof is omitted to avoid repetitions.
\end{proof}

The next lemma compares the decisions $\delta_1$ and $\delta_2$ at time $t_0+1$ conditional on the history up to time $t_0$, where $\done$ and $\dtwo$ are given in \eqref{eq:done} and \eqref{eq:dtwo}, respectively. %

\begin{lemma} \label{lemma:one coupling}
Let $\done$ and $\dtwo$ be defined in \eqref{eq:done}  and \eqref{eq:dtwo}. Then, $H_{t_{0}}^{\done}=H_{t_{0}}^{\dtwo}$ a.s. Moreover, let $h_{t_{0}}$ be in the support of $H_{t_{0}}^{\done}$ and $ H_{t_{0}}^{\dtwo}$. Then, $\big[ W_{S_{t_{0}+1}^{\done}, t_{0}+1}^{\done}\big]$ is stochastically dominated by $\big[W_{S_{t_{0}+1}^{\dtwo}, t_{0}+1}^{\dtwo} \big]$, i.e., $\big[ W_{S_{t_{0}+1}^{\done}, t_{0}+1}^{\done}\big] \leqc_{st} \big[W_{S_{t_{0}+1}^{\dtwo}, t_{0}+1}^{\dtwo} \big]$, conditional on $H_{t_{0}}^{\done} = h_{t_{0}}$, under Assumptions~\ref{assump:non-empty}, \ref{assump:risk}-- \ref{assump:iid}.

\end{lemma}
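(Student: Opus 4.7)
The plan has two parts, matching the two assertions of the lemma. First, since $\done$ and $\dtwo$ agree with $\delta$ on $d_{1},\dots,d_{t_{0}-1}$, induction on $s=1,\dots,t_{0}-1$ shows that $S_{s}^{\done}=S_{s}^{\dtwo}=S_{s}^{\delta}$ and the observed data $\{X_{k,s}\}_{k\in S_{s}, s\leq t_{0}}$ are identical under both decisions. Hence the filtration-generating information $H_{t_{0}}^{\done}=H_{t_{0}}^{\dtwo}$ almost surely, and in particular $\{W_{k,t_{0}}^{\done}\}_{k\in S_{t_{0}}^{\done}}=\{W_{k,t_{0}}^{\dtwo}\}_{k\in S_{t_{0}}^{\dtwo}}$.

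For the second part, fix $h_{t_{0}}$ in the common support and condition on $H_{t_{0}}^{\done}=H_{t_{0}}^{\dtwo}=h_{t_{0}}$. Let $\uu=[\{W_{k,t_{0}}\}_{k\in S_{t_{0}}}]\in\So$, which is deterministic given $h_{t_{0}}$. At time $t_{0}$, the decision $\done$ uses $d^{*}_{t_{0}}$ (Algorithm~\ref{alg:simp}), so by Lemma~\ref{lemma:def-I&J}, $[W_{S_{t_{0}+1}^{\done},t_{0}}^{\done}]=\Ho(\uu)$. On the other hand, $\dtwo$ uses $d_{t_{0}}$, and since $\delta\in\mathcal{D}_{\alpha}$ (so that the resulting risk at $t_{0}$ is controlled), the set $S_{t_{0}+1}^{\dtwo}$ satisfies $\rtd([\{W_{k,t_{0}}\}_{k\in S_{t_{0}+1}^{\dtwo}}])\leq\alpha$. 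Invoking Lemma~\ref{lemma:one-step optimal} with this candidate subset gives
\begin{equation*}
[W_{S_{t_{0}+1}^{\done},t_{0}}^{\done}]=\Ho(\uu)\leqc [W_{S_{t_{0}+1}^{\dtwo},t_{0}}^{\dtwo}]\quad\text{a.s.~conditional on }H_{t_{0}}=h_{t_{0}}.
\end{equation*}

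To pass from time $t_{0}$ to time $t_{0}+1$, I will apply Lemma~\ref{lemma:Kt} separately to the decisions $\done$ and $\dtwo$ at time $t_{0}$: the conditional law of $[W_{S_{t_{0}+1}^{\done},t_{0}+1}^{\done}]$ given $H_{t_{0}}^{\done}$ equals $\KKt(\uu_{1},\cdot)$ with $\uu_{1}=[W_{S_{t_{0}+1}^{\done},t_{0}}^{\done}]$, and analogously for $\dtwo$ with $\uu_{2}=[W_{S_{t_{0}+1}^{\dtwo},t_{0}}^{\dtwo}]$. Because we showed $\uu_{1}\leqc\uu_{2}$ on the conditioning event, the stochastic monotonicity of $\KKt$ established in Lemma~\ref{lemma:increasingK} yields $\KKt(\uu_{1},\cdot)\leqc_{st}\KKt(\uu_{2},\cdot)$, which is precisely the desired $[W_{S_{t_{0}+1}^{\done},t_{0}+1}^{\done}]\leqc_{st}[W_{S_{t_{0}+1}^{\dtwo},t_{0}+1}^{\dtwo}]$ conditional on $H_{t_{0}}^{\done}=h_{t_{0}}$.

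The only delicate point is ensuring that Lemma~\ref{lemma:Kt} can be applied to both $\done$ and $\dtwo$ with the same kernel $\KKt$. This is fine because Lemma~\ref{lemma:Kt} is stated for an arbitrary decision $\delta$, with a kernel $\KKt$ that depends only on the time index $t$ and on the common per-stream kernel $K_{t}$ (which, by Assumption~\ref{assump:iid}, is shared across streams). The remaining bookkeeping---that $S_{t_{0}+1}^{\done}$ and $S_{t_{0}+1}^{\dtwo}$ are $\mathcal{F}_{t_{0}}$-measurable and hence determined by $h_{t_{0}}$, making $\uu_{1},\uu_{2}$ deterministic on the conditioning event---is routine given the setup in Section~\ref{sec:class-of-decisions}.
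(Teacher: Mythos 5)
Your proposal is correct and follows essentially the same route as the paper's own proof: the same induction on the shared history for the first claim, and then Lemmas~\ref{lemma:def-I&J}, \ref{lemma:one-step optimal}, \ref{lemma:Kt}, and \ref{lemma:increasingK} invoked in the same order to compare the ordered posteriors at time $t_0$ and push the comparison through the kernel $\KKt[t_0]$ to time $t_0+1$. The only cosmetic omission is the case $S_{t_{0}+1}^{\dtwo}=\varnothing$, where Lemma~\ref{lemma:one-step optimal} (which requires $l\geq 1$) does not apply but the conclusion $\Ho(\uu)\leqc\varnothing$ holds trivially by the definition of the partial order `$\leqc$'.
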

\begin{proof}
From the definition of $\delta_l$ and $H_{t}^{\delta_l}$, we have the iterative formula
$S_t^{\delta_l} = d_{t-1}(H^{\delta_l}_{t-1})$ and $H_{t}^{\delta_l}=\{H_{t-1}^{\delta_l}, \{X_{k,t}\}_{k\in S^{\delta_l}_{t}}, \{S^{\delta_l}_{t}\}\}$ for $t=1,\cdots, t_0$ and $l=1,2$. Also, for $t=1$, $S_{t}^{\delta_l}=\{1,\cdots,K\}$. By induction, we have $H_t^{\done}= H_t^{\dtwo}$ for $t=1,\cdots, t_0$. In particular, $H_{t_0}^{\done}= H_{t_0}^{\dtwo}$. This proves the first part of the lemma.

We proceed to prove that $[ W_{S_{t_{0}+1}^{\done}, t_{0}+1}^{\done}]$ is stochastically dominated by $[W_{S_{t_{0}+1}^{\dtwo}, t_{0}+1}^{\dtwo} ]$ conditional on $H_{t_{0}}^{\done} = h_{t_{0}}$.
By Lemma~\ref{lemma:Kt}, we can see that 
$[W^{\done}_{S^{\done}_{t_{0}+1}, t_{0}+1}]$ is independent of the history $H_{t_{0}}^{\done}$ given  $ [W^{\done}_{S^{\done}_{t_{0}+1}, t_{0}}] = \uu$. Also, given the history $H_{t_{0}}^{\done} = h_{t_{0}}$, $[W^{\done}_{S^{\done}_{t_{0}+1}, t_{0}}]$ is a deterministic function of the history $h_{t_{0}}$ and the sequential decision $\done$. 
Let $\ww_{t_{0}+1}: = [W^{\done}_{S^{\done}_{t_{0}+1}, t_{0}}] \big|H_{t_{0}}^{\done} = h_{t_{0}}$. Then $[W^{\done}_{S^{\done}_{t_{0}+1}, t_{0}+1}] | [W^{\done}_{S^{\done}_{t_{0}+1}, t_{0}}] = \ww_{t_{0}+1}$ has the conditional probability density $\mathbb{K}_{t_{0}} (\ww_{t_{0}+1}, \cdot)$.
Similarly, assume that $\ww_{t_{0}+1}^{\prime}: = [W^{\dtwo}_{S^{\dtwo}_{t_{0}+1}, t_{0}}] |H_{t_{0}}^{\dtwo} = h_{t_{0}}$. Then,  $[W^{\dtwo}_{S^{\dtwo}_{t_{0}+1}, t_{0}+1}] \big| [W^{\dtwo}_{S^{\dtwo}_{t_{0}+1}, t_{0}}] = \ww_{t_{0}+1}^{\prime}$ has the conditional probability density $\mathbb{K}_{t_{0}} (\ww_{t_{0}+1}^{\prime}, \cdot)$. 

According to the above arguments it suffices to show $\mathbb{K}_{t_{0}} (\ww_{t_{0}+1}, \cdot) \leqc_{st} \mathbb{K}_{t_{0}} (\ww_{t_{0}+1}^{\prime}, \cdot)$ to prove the lemma. According to Assumption~\ref{assump:iid} and Lemma~\ref{lemma:increasingK}, it suffices to show that $\ww_{t_{0}+1} \leqc \ww_{t_{0}+1}^{\prime}$. Next we compare $\ww_{t_{0}+1}$ and $\ww_{t_{0}+1}^{\prime}$ under two cases: $\ww_{t_{0}+1}^{\prime} = \varnothing$ and $\ww_{t_{0}+1}^{\prime} \neq \varnothing$. If $\ww_{t_{0}+1}^{\prime} = \varnothing$, then $\ww_{t_{0}+1} \leqc \varnothing = \ww_{t_{0}+1}^{\prime}$ holds due to the definition of the partial order. If $\ww_{t_{0}+1}^{\prime} \neq \varnothing$, since given the history $H_{t_{0}}^{\done} = h_{t_{0}}$, $[W^{\done}_{S^{\done}_{t_{0}}, t_{0}}]$ is a deterministic function of the history $h_{t_{0}}$, let $\ww_{t_{0}}$ denote $[W^{\delta}_{S^{\delta}_{t_{0}}, t_{0}}]|H_{t_{0}}^{\delta} = h_{t_{0}}$. According to Lemma~\ref{lemma:def-I&J}, $\ww_{t_{0}+1} = J_{t_0} (\ww_{t_{0}})$. By Lemma~\ref{lemma:one-step optimal}, we have $\ww_{t_{0}+1} \leqc \ww_{t_{0}+1}^{\prime}$.

\end{proof}

The above Lemma~\ref{lemma:one coupling} shows that the decision $\done$ can select streams with `smaller' posterior probabilities and the order statistics of these posterior probabilities remains `stochastically smaller' one time unit further. 

Specifically, given the history at $t_0$, at time $t_{0}+1$, the ordered posterior probabilities of the remaining stream are ``stochastically smaller" following  $\done$ when compared with that of $\dtwo$. 
Next, we provide results on combining comparison results on consecutive time points.
We need the following result concerning the composition of stochastic monotone transition kernels, which is a corollary of Proposition 1 in \cite{kamae1977stochastic}.

\begin{fact}[Strassen’s theorem for Markov chains over a polish pospace]\label{fact:strassen-composite}
Assume $\{X_t\}_{t \geq 0}$ and $\{Y_t\}_{t \geq 0}$ are two Markov chains over a partially ordered pospace $(\calS, \leqc)$. Denote by $\{\KK_{X,t}\}_{t \geq 0}$ and $\{\KK_{Y,t}\}_{t \geq 0}$ their transition kernels respectively. 

Assume that $\KK_{X,t} \prec_{st} \KK_{Y,t} $ for all $t\geq 0$, where `$\prec_{st}$' is defined in Definition~\ref{def:stochastically-monotone}.  and let
\begin{align*}
\KK_{X,0:n} &:= \KK_{X,0} \circ \KK_{X,1} \circ \cdots \circ \KK_{X,n}, \\
\KK_{Y,0:n} &:= \KK_{Y,0} \circ \KK_{Y,1} \circ \cdots \circ \KK_{Y,n}, 
\end{align*}
where $\circ$ denotes the composition of Markov kernels (see, e.g.,  \cite{fristedt1997construction} for the definition of composition of Markov kernels).

Then, for all $n\geq 0$, $\KK_{X,0:n}  \prec_{st} \KK_{Y,0:n}. $

\end{fact}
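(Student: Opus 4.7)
The plan is to prove Fact~\ref{fact:strassen-composite} by induction on $n$, exploiting the single-step Strassen theorem (Fact~\ref{fact:Strassen-thm}) in both directions: once to turn stochastic dominance into an almost-sure coupling, and once to translate a coupling back into stochastic dominance. The base case $n=0$ is immediate from the hypothesis $\KK_{X,0}\prec_{st}\KK_{Y,0}$.

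For the inductive step, I would assume that $\KK_{X,0:n-1}\prec_{st}\KK_{Y,0:n-1}$ and fix an arbitrary pair $\uu\leqc\vv$ in $\calS$. Applying the inductive hypothesis at $(\uu,\vv)$ and invoking Fact~\ref{fact:Strassen-thm} yields random variables $\hat X_{n-1}\sim\KK_{X,0:n-1}(\uu,\cdot)$ and $\hat Y_{n-1}\sim\KK_{Y,0:n-1}(\vv,\cdot)$ on a common probability space with $\hat X_{n-1}\leqc\hat Y_{n-1}$ almost surely. Next, for every $(a,b)$ with $a\leqc b$, the hypothesis $\KK_{X,n}\prec_{st}\KK_{Y,n}$ combined with Fact~\ref{fact:Strassen-thm} supplies a probability measure $\mu_{a,b}$ on $\calS^2$ concentrated on $\{(x,y):x\leqc y\}$ whose marginals are $\KK_{X,n}(a,\cdot)$ and $\KK_{Y,n}(b,\cdot)$. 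Sampling $(\hat X_n,\hat Y_n)\sim\mu_{\hat X_{n-1},\hat Y_{n-1}}$ conditionally on $(\hat X_{n-1},\hat Y_{n-1})$ then produces a pair whose marginals are precisely $\KK_{X,0:n}(\uu,\cdot)$ and $\KK_{Y,0:n}(\vv,\cdot)$ by the defining formula for kernel composition, and which still satisfies $\hat X_n\leqc\hat Y_n$ a.s.\ by construction. A final appeal to Fact~\ref{fact:Strassen-thm} in the reverse direction delivers $\KK_{X,0:n}(\uu,\cdot)\leqc_{st}\KK_{Y,0:n}(\vv,\cdot)$, closing the induction.

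The only nontrivial issue — and the step I expect to be the main obstacle — is that the family $(a,b)\mapsto\mu_{a,b}$ must in fact be realizable as a jointly measurable Markov kernel in $(a,b)$, since otherwise the conditional-sampling step above does not define a bona fide probability measure on $\calS^2$. This is exactly where the polish pospace hypothesis is indispensable: because $\calS$ is polish and the order $\leqc$ is closed in $\calS\times\calS$, the correspondence sending $(a,b)$ to the set of couplings of $\KK_{X,n}(a,\cdot)$ and $\KK_{Y,n}(b,\cdot)$ supported on $\{x\leqc y\}$ is a nonempty, weakly compact, measurable set-valued map, and a measurable selection is furnished by a Kuratowski--Ryll-Nardzewski-type argument. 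This is precisely the content of Proposition~1 of \cite{kamae1977stochastic}; rather than reprove it, I would cite that proposition to discharge the measurability step, which makes the inductive skeleton above rigorous.
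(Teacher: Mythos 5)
Your proof is correct. The paper does not actually prove this Fact---it simply states it as a known result and attributes it to Proposition~1 of \cite{kamae1977stochastic}---so your inductive coupling argument supplies the derivation the paper leaves implicit, and it is the standard one: Strassen forward to couple the $(n-1)$-step laws, a measurably selected monotone coupling of the one-step kernels to extend the coupling, and Strassen backward (the easy direction) to conclude. You correctly isolate the only delicate point, namely that $(a,b)\mapsto\mu_{a,b}$ must be a jointly measurable kernel, and you discharge it by the same citation the paper relies on for the entire Fact, so your argument and the paper's are in substance the same.
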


Next, we will show for any sequential decision $\delta$ and $t_0 \geq 1$, the transition kernel of the conditional distribution of $\big[W^{\psi_{t_0}\circ \delta}_{S^{\psi_{t_0}\circ \delta}_{t_{0}+s}, t_{0}+s}\big]$ given $\big[W^{\psi_{t_0}\circ \delta}_{S^{\psi_{t_0}\circ \delta}_{t_{0}}, t_{0}}\big]$ is stochastically monotone for all $s\geq 0$.

\begin{lemma} \label{prop:coupling}
Under Assumptions~\ref{assump:non-empty}, \ref{assump:risk}, \ref{assump:utility} and \ref{assump:iid}, 
For any sequential decision $\delta$, $t_0 \geq 1$, and any bounded, decreasing and measurable function $f:\So\to \mathbb{R}$, the function
\begin{equation}
\begin{split} \mathbb{E}\Big[f(\big[W^{\psi_{t_0}\circ \delta}_{S^{\psi_{t_0}\circ \delta}_{t_{0}+s}, t_{0}+s}\big])|\big[W^{\psi_{t_0}\circ \delta}_{S^{\psi_{t_0}\circ \delta}_{t_{0}}, t_{0}}\big]=\ww\Big]
\end{split}
\end{equation}
is decreasing in $\ww$ and is a measurable function.

\end{lemma}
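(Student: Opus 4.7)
The plan is to identify the process in question as a Markov chain driven by the kernels $\KKtts$ studied in Lemma~\ref{lemma:K_tilde}, establish stochastic monotonicity of these kernels, and then invoke the Strassen-type composition result in Fact~\ref{fact:strassen-composite} together with the characterization of stochastic dominance via bounded increasing test functions in Definition~\ref{def:stoc-ordering}.

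More precisely, first I would reduce to $\delta_1 := \psi_{t_0}\circ \delta$. By Lemma~\ref{lemma:K_tilde}, conditional on $[W^{\delta_1}_{S^{\delta_1}_{t_0+s}, t_0+s}]$ the vector $[W^{\delta_1}_{S^{\delta_1}_{t_0+s+1}, t_0+s+1}]$ is independent of the rest of the history $H^{\delta_1}_{t_0+s}$, and has conditional density $\KKtts(\uu,\vv) = \KKts(\Hts(\uu), \vv)$. Hence $\{[W^{\delta_1}_{S^{\delta_1}_{t_0+s}, t_0+s}]\}_{s \geq 0}$ is a Markov chain on $(\So, \leqc)$ with transition kernels $\{\KKtts\}_{s \geq 0}$.

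Next I would show that each $\KKtts$ is stochastically monotone on $(\So,\leqc)$. For $\uu \leqc \uu'$ in $\So$, Lemma~\ref{lemma:increasing rule} gives $\Hts(\uu) \leqc \Hts(\uu')$, and Lemma~\ref{lemma:increasingK} then yields $\KKts(\Hts(\uu), \cdot) \leqc_{st} \KKts(\Hts(\uu'), \cdot)$, i.e.\ $\KKtts(\uu,\cdot) \leqc_{st} \KKtts(\uu', \cdot)$. Applying Fact~\ref{fact:strassen-composite} to the composition $\KKtt_{t_0} \circ \KKtt_{t_0+1} \circ \cdots \circ \KKtt_{t_0+s-1}$ shows that the $s$-step transition kernel from $[W^{\delta_1}_{S^{\delta_1}_{t_0}, t_0}]$ to $[W^{\delta_1}_{S^{\delta_1}_{t_0+s}, t_0+s}]$ is itself stochastically monotone on $(\So,\leqc)$.

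Finally, because $f$ is bounded, decreasing and measurable, $-f$ is bounded, increasing and measurable; by Definition~\ref{def:stoc-ordering}, stochastic monotonicity of the $s$-step kernel gives that $\ww \mapsto \mathbb{E}[(-f)([W^{\delta_1}_{S^{\delta_1}_{t_0+s}, t_0+s}]) \mid [W^{\delta_1}_{S^{\delta_1}_{t_0}, t_0}]=\ww]$ is increasing, so the original conditional expectation is decreasing in $\ww$. Measurability of this conditional expectation as a function of $\ww$ is standard: it is the integral of a bounded measurable $f$ against the regular conditional kernel (which is a measurable function of $\ww$ by construction via Lemma~\ref{lemma:K_tilde} and the composition in Fact~\ref{fact:strassen-composite}), so Fubini's theorem yields joint measurability.

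The main obstacles are bookkeeping rather than a single hard step: (i) verifying that Fact~\ref{fact:strassen-composite} is applicable, which requires $(\So,\leqc)$ to be a polish pospace (given by Lemma~\ref{lemma:pospace}) and the transition kernels to be stochastically monotone in the sense of Definition~\ref{def:stochastically-monotone} (checked above via Lemmas~\ref{lemma:increasing rule} and \ref{lemma:increasingK}); and (ii) carefully handling the possibility that the dimension of $[W^{\delta_1}_{S^{\delta_1}_{t_0+s}, t_0+s}]$ decreases over time, including the extreme case $\Ho(\uu) = \varnothing$, which is already accommodated in the definitions of $\Ho$, $\KKt$, and the partial order $\leqc$.
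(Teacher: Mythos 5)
Your proposal is correct and follows essentially the same route as the paper's own proof: identify $\{[W^{\psi_{t_0}\circ\delta}_{S_{t_0+s},t_0+s}]\}_{s\geq 0}$ as a Markov chain with kernels $\KKtts$ via Lemma~\ref{lemma:K_tilde}, establish stochastic monotonicity of each kernel from Lemmas~\ref{lemma:increasing rule} and~\ref{lemma:increasingK}, compose via Fact~\ref{fact:strassen-composite}, and conclude with the test-function characterization of stochastic dominance. No substantive differences.
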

\begin{proof}
For any sequential decision $\delta$ and $t_0 \geq 1$, let $Y_{s} = \big[W^{\psi_{t_0}\circ \delta}_{S^{\psi_{t_0}\circ \delta}_{t_{0}+s}, t_{0}+s}\big]$ for $s \geq 0$. By Lemma~\ref{lemma:K_tilde}, we can see that $\{Y_s\}_{s \geq 0}$ is a Markov chain with the transition kernels $\{\KKtts\}_{s \geq 0}$. By Lemma~\ref{lemma:increasing rule}, $\Hts(\uu) \leqc \Hts(\vv)$ for any $\uu, \vv \in \So$ satisfying $\uu \leqc \vv $. This further implies $\KKts(\Hts(\uu), \cdot) \leqc_{st} \KKts(\Hts(\vv), \cdot)$ according to Lemma~\ref{lemma:increasingK}. By Lemma~\ref{lemma:K_tilde}, we arrive at $\KKtts (\uu, \cdot) \leqc_{st} \KKtts (\vv, \cdot)$ for any $\uu \leqc \vv$. That is, $\KKtts$ is stochastically monotone for all $s \geq 0$.

Note that for any $s\geq 0$, $Y_s$ conditional on $Y_0$ has the composite transition kernel $\KKK_{t_0} \circ \KKK_{t_{0}+1} \circ \cdots \circ \KKK_{t_{0}+s}$. Thus, according to Fact~\ref{fact:strassen-composite}, such composite transition kernel $\KKK_{t_0} \circ \KKK_{t_{0}+1} \circ \cdots \circ \KKK_{t_{0}+s}$ is also stochastically monotone. That is, for any $\ww, \ww^{\prime} \in \So$ satisfying $\ww \leqc \ww^{\prime}$, we have for all $s\geq 0$
$Y_s | Y_0 = \ww \leqc_{st} Y_s | Y_0 = \ww^{\prime}. 
$
By Definition~\ref{def:stoc-ordering}, we further have
$\EE(f(Y_s)|Y_0 = \ww) \geq \EE(f(Y_s)|Y_0 = \ww^{\prime}),
$ for any bounded, decreasing and measurable function $f$, which implies that $\EE(f(Y_s)|Y_0 = \ww)$ is decreasing in $\ww$.

\end{proof}

Let $h_{t}$ be in the support of $H_{t}^{\delta}$ and $h_t=\{x_{k,l}, s_l,  \text{ for } k \in s_{l}, 1 \leq l \leq t\}$.
\begin{lemma}
\label{thm:one-step optimal pro}
Assume Assumptions~\ref{assump:non-empty}, \ref{assump:risk} and \ref{assump:utility} hold.
For any $t_0 \geq 1$,
let $\delta$ be an arbitrary sequential decision in the class $\mathcal{D}_{\alpha}$.
Then,
\begin{equation} \label{eq:local-utility}
\EE(U_{t_{0}} (\delta) \mid H_{t_{0}}^{\delta}) = \tilde{u}_{t_0}([W^{\delta}_{S^{\delta}_{t_{0}+1},t_0}]),
\text{ a.s.}
\end{equation}
where $U_{t_{0}}(\delta)$ denotes the utility at time $t_{0}$ following the sequential decision $\delta$. Moreover, let $\done$ be defined in \eqref{eq:done}. Then, 
$\EE(U_{t_{0}} (\delta) \mid H_{t_{0}}^{\delta}) \leq 
\EE(U_{t_{0}} (\done) \mid H_{t_{0}}^{\delta})
\text{ a.s.}
$
\end{lemma}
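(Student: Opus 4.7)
The plan is to establish the two assertions in sequence, with Part~1 serving as a tool for Part~2.

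For Part~1, I would observe that, by Assumption~\ref{assump:utility}, $U_{t_0}(\delta) = \ud_{t_0}([W^{\delta}_{S^{\delta}_{t_0+1}, t_0}])$. This random variable is $H_{t_0}^{\delta}$-measurable: the posteriors $\{W^{\delta}_{k,t_0}\}_{k \in S^{\delta}_{t_0}}$ are deterministic functions of the observations and index sets contained in $H_{t_0}^{\delta}$ via the recursive formulas \eqref{eq:rec2} and \eqref{eq:rec}, and $S^{\delta}_{t_0+1} = d_{t_0}(H_{t_0}^{\delta})$ is by definition $H_{t_0}^{\delta}$-measurable; since $S^{\delta}_{t_0+1} \subset S^{\delta}_{t_0}$, the sub-collection $\{W^{\delta}_{k,t_0}\}_{k \in S^{\delta}_{t_0+1}}$ and its order statistic are $H_{t_0}^{\delta}$-measurable as well. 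Conditioning on $H_{t_0}^{\delta}$ therefore leaves $U_{t_0}(\delta)$ unchanged, giving \eqref{eq:local-utility}.

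For Part~2, I would use that $\done$ agrees with $\delta$ on $d_1, \ldots, d_{t_0-1}$, so by induction on $t \leq t_0$ we obtain $H_{t_0}^{\delta} = H_{t_0}^{\done}$ a.s., and in particular $S^{\delta}_{t_0} = S^{\done}_{t_0}$ and $[W^{\delta}_{S^{\delta}_{t_0}, t_0}] = [W^{\done}_{S^{\done}_{t_0}, t_0}]$. Applying Part~1 to both decisions reduces the claimed inequality to $\ud_{t_0}([W^{\done}_{S^{\done}_{t_0+1}, t_0}]) \geq \ud_{t_0}([W^{\delta}_{S^{\delta}_{t_0+1}, t_0}])$. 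At time $t_0$ the decision $\done$ invokes Algorithm~\ref{alg:simp} on $[W^{\done}_{S^{\done}_{t_0}, t_0}]$, so Lemma~\ref{lemma:def-I&J} gives $[W^{\done}_{S^{\done}_{t_0+1}, t_0}] = \Ho([W^{\done}_{S^{\done}_{t_0}, t_0}])$. Since $\delta \in \mathcal{D}_{\alpha}$, Assumption~\ref{assump:risk} yields $\rd_{t_0}([W^{\delta}_{S^{\delta}_{t_0+1}, t_0}]) = R_{t_0}(\delta) \leq \alpha$ a.s., and Lemma~\ref{lemma:one-step optimal} (with $\uu = \{W^{\delta}_{k,t_0}\}_{k \in S^{\delta}_{t_0}}$ and $\uu^{\prime}$ indexed by $S^{\delta}_{t_0+1}$) delivers $\Ho([W^{\delta}_{S^{\delta}_{t_0}, t_0}]) \leqc [W^{\delta}_{S^{\delta}_{t_0+1}, t_0}]$; the edge case $S^{\delta}_{t_0+1} = \varnothing$ is handled by the ``in addition'' clause of Lemma~\ref{lemma:one-step optimal}, which forces $\Ho([W^{\delta}_{S^{\delta}_{t_0}, t_0}]) = \varnothing$ and the partial-order inequality then holds trivially.

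Chaining these identities with the inequality and using $[W^{\delta}_{S^{\delta}_{t_0}, t_0}] = [W^{\done}_{S^{\done}_{t_0}, t_0}]$ produces $[W^{\done}_{S^{\done}_{t_0+1}, t_0}] \leqc [W^{\delta}_{S^{\delta}_{t_0+1}, t_0}]$. Lemma~\ref{lemma:decreasing}, together with Assumption~\ref{assump:utility}, shows that $\ud_{t_0}$ is decreasing with respect to $\leqc$ on $\So$, so applying it to the previous display gives the desired utility comparison. The main subtlety is really just bookkeeping: lining up histories of $\delta$ and $\done$, invoking the correct branch of Lemma~\ref{lemma:one-step optimal} for empty selected sets, and translating between $r_{t_0}, u_{t_0}$ and $\rd_{t_0}, \ud_{t_0}$ through Assumptions~\ref{assump:risk} and \ref{assump:utility}; no genuinely new analytic work is needed beyond the lemmas already established.
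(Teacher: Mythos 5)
Your proposal is correct and follows essentially the same route as the paper's own proof: establish the identity \eqref{eq:local-utility} from the $\mathcal{F}_{t_0}$-measurability of $[W^{\delta}_{S^{\delta}_{t_0+1},t_0}]$, then reduce the inequality to $[W^{\done}_{S^{\done}_{t_0+1},t_0}] \leqc [W^{\delta}_{S^{\delta}_{t_0+1},t_0}]$ via Lemmas~\ref{lemma:def-I&J}, \ref{lemma:one-step optimal} and \ref{lemma:decreasing}. One tiny imprecision: when $S^{\delta}_{t_0+1}=\varnothing$, the ``in addition'' clause of Lemma~\ref{lemma:one-step optimal} does not force $\Ho([W^{\delta}_{S^{\delta}_{t_0},t_0}])=\varnothing$ (that implication runs in the opposite direction); the partial-order inequality holds in that case simply because $\uu\leqc\varnothing$ for every $\uu\in\So$, so your conclusion is unaffected.
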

\begin{proof}
According to the definition of $U_t$ in \eqref{eq:utility} %
and Assumption~\ref{assump:utility}, we can see \eqref{eq:local-utility}, which proves the first part of the lemma.

For the rest  of the lemma, it is sufficient to show $\tilde{u}_{t_0}([W^{\delta}_{S^{\delta}_{t_{0}+1},t_0}])\leq \tilde{u}_{t_0}([W^{\done}_{S^{\done}_{t_{0}+1},t_0}])$. According to Lemma \ref{lemma:decreasing},  $\ud_t (\cdot)$ is decreasing. Thus, we only need to show  $[W^{\done}_{S^{\done}_{t_{0}+1},t_0}]\leqc [W^{\delta}_{S^{\delta}_{t_{0}+1},t_0}]$, which is our focus for the rest of the proof.

Denote by $S^{\done}_{t_0+1}$ and $S^{\delta}_{t_0+1}$ the index set at time $t_0+1$ following $\done$ and $\delta$, respectively. According to the definition of $\done$, $S^{\done}_{t_0+1}$ is obtained by Algorithm \ref{alg:simp} with the input $W^{\done}_{S^{\done}_{t_0},t_0}$ and $S^{\done}_{t_0}$. This further implies that $[W^{\done}_{S^{\done}_{t_{0}+1},t_0}] = J_{t_0} ([W^{\done}_{S^{\done}_{t_0},t_0}])$ according to Lemma \ref{lemma:def-I&J}. Note that $[W^{\done}_{S^{\done}_{t_0},t_0}]= [W^{\delta}_{S^{\delta}_{t_0},t_0}]$, so $[W^{\done}_{S^{\done}_{t_{0}+1},t_0}] = J_{t_0} ([W^{\delta}_{S^{\delta}_{t_0},t_0}])$, and it is sufficient to show 
\begin{equation}\label{eq:toshow}
J_{t_0} ([W^{\delta}_{S^{\delta}_{t_0},t_0}])\leqc [W^{\delta}_{S^{\delta}_{t_{0}+1},t_0}].
\end{equation}

On the other hand, $\delta\in\mathcal{D}_{\alpha}$, so $\tilde{r}_{t_0}([W^{\delta}_{S^{\delta}_{t_{0}+1},t_0}])=R_{t_0}(\delta)\leq \alpha$. According to \ref{lemma:one-step optimal} (with $\uu$ replaced by $[W^{\delta}_{S^{\delta}_{t_0},t_0}]$ and $\uu'$ replaced by $W^{\delta}_{S^{\delta}_{t_{0}+1},t_0}$), there are two possible cases: 1) $J_{t_0} ([W^{\delta}_{S^{\delta}_{t_0},t_0}])=\varnothing$ and and $[W^{\delta}_{S^{\delta}_{t_{0}+1},t_0}]=\varnothing$; or 2) $\dim(J_{t_0} ([W^{\delta}_{S^{\delta}_{t_0},t_0}]))\geq 1$ and $J_{t_0} ([W^{\delta}_{S^{\delta}_{t_0},t_0}])\leqc [W^{\delta}_{S^{\delta}_{t_{0}+1},t_0}]$. We can see that in both cases \eqref{eq:toshow} holds, which completes the proof.

\end{proof}

\subsection{Proof of Theorem~\ref{thm:uniform optimality}}\label{subsec:proof-of-thm-3}

\begin{proof}[Proof of Theorem~\ref{thm:uniform optimality}]
Theorem~\ref{thm:uniform optimality} is implied by the following stronger result, which will be the focus of the proof: for any $t_0 \geq 1, s \geq 0$, and any $\delta \in\mathcal{D}_{\alpha}$, %
\begin{equation} \label{eq:induction}
\EE(U_{t_{0}+s} (\delta) \mid H_{t_{0}}^{\delta}) \leq 
\EE(U_{t_{0}+s} (\psi_{t_0}\circ \delta) \mid H_{t_{0}}^{\delta})
\text{ a.s.}
\end{equation}
where $U_{t_{0}+s} (\delta)$  denotes the utility at time $t_{0}+s$ following sequential decisions $\delta$ and $\psi_{t_0}\circ \delta$ is defined in \eqref{eq:circ-decision}. Notice that $H_{t_{0}}^{\psi_{t_0}\circ \delta} = H_{t_{0}}^{\delta}$ a.s. If the above equation \eqref{eq:induction} is proved, Theorem~\ref{thm:uniform optimality} follows by setting $t_0 = 1$ in \eqref{eq:induction} and taking expectation on both sides.

In the rest of the proof, we show that \eqref{eq:induction} holds by induction on $s$.

For the base case $s = 0$, the equation \eqref{eq:induction} holds for all $t_0\geq 1$ according to Lemma~\ref{thm:one-step optimal pro}.

Now we assume the induction assumption \eqref{eq:induction} holds for $s = s_0$ and all $t_0 \geq 1$ and all decisions $\delta\in\mathcal{D}_{\alpha}$. That is,
\begin{equation} \label{eq:induction1}
\EE(U_{t_{0}+s_{0}} (\delta) | H_{t_{0}}^{\delta}) \leq 
\EE(U_{t_{0}+s_{0}} (\psi_{t_0}\circ \delta) | H_{t_{0}}^{\delta})
\text{ a.s.}
\end{equation}
for all $\delta$, and $t_0 \geq 1$. In the rest of the proof, we show that \eqref{eq:induction} also  holds for $s = s_{0}+1$ and all $t_0 \geq 1$ to complete the induction.

First, by replacing $t_0$ with $t_0+1$ in \eqref{eq:induction1}, we obtain
   $\EE(U_{t_{0}+s_{0}+1} (\delta) \mid H_{t_{0}+1}^{\delta} )
   \leq 
\EE(U_{t_{0}+s_{0}+1} (\psi_{t_0+1}\circ \delta) \mid H_{t_{0}+1}^{\delta} )
\text{ a.s.} $. Taking conditional expectation $\EE(\cdot | H_{t_{0}}^{\delta})$ on both sides, we arrive at
\begin{equation} \label{eq:induction2}
\begin{split}
\EE(U_{t_{0}+s_{0}+1} (\delta) \mid H_{t_{0}}^{\delta})  \leq 
\EE(U_{t_{0}+s_{0}+1} (\psi_{t_0+1}\circ \delta) \mid H_{t_{0}}^{\delta})
\text{ a.s.}
\end{split}
\end{equation}

Next, we consider $ U_{t_0+s_0+1}(\done)$
where we recall $\done=\psi_{t_0}\circ\delta$.
According to the definition of $U_{t_0+1+s_0}(\cdot)$, we have
\begin{equation}\label{eq:similar-start}
    U_{t_0+s_0+1}(\done) = \tilde{u}_{t_0+s_0+1}(\big[W_{S^{\done}_{t_{0}+s_0+2},t_{0}+s_0+1}^{\done}\big]),
\end{equation}
where $\tilde{u}(\cdot)$ is defined in Assumption~\ref{assump:utility}. According to Lemma~\ref{lemma:def-I&J}, $$\big[W_{S_{t_{0}+s_0+2},t_{0}+s_0+1}^{\done}\big] = J_{t_0+s_0+1}\Big(\big[W_{S^{\done}_{t_{0}+s_0+1},t_{0}+s_0+1}^{\done}\big]\Big).$$
Combining the above two equations, we arrive at
\begin{equation}
     U_{t_0+s_0+1}(\done) = \tilde{u}_{t_0+s_0+1}\Big(J_{t_0+s_0+1}\Big(\big[W^{\done}_{S^{\done}_{t_{0}+s_0+1},t_{0}+s_0+1}\big]\Big)\Big).
\end{equation}
Note that $\psi_{t_0+1}\circ \done = \done$. We further write the above equation as
\begin{equation}
\begin{split}
     &U_{t_0+s_0+1}(\psi_{t_0+1}\circ \done)\\
     = & \tilde{u}_{t_0+s_0+1}\Big(J_{t_0+s_0+1}\Big(\big[W^{\psi_{t_0+1}\circ\done}_{S^{\psi_{t_0+1}\circ\done}_{t_{0}+s_0+1},t_{0}+s_0+1}\big]\Big)\Big)\\
     =: & \varphi(\big[W^{\psi_{t_0+1}\circ\done}_{S^{\psi_{t_0+1}\circ\done}_{t_{0}+s_0+1},t_{0}+s_0+1}\big]),
\end{split}
\end{equation}
where we define the function $\varphi=\tilde{u}_{t_0+s_0+1}\circ J_{t_0+s_0+1}$ as the composition of $\tilde{u}_{t_0+s_0+1}$ and $J_{t_0+s_0+1}$. According to  Lemma \ref{lemma:decreasing}, $\tilde{u}_{t_0+s_0+1}(\cdot)$ is a decreasing function. By Lemma \ref{lemma:increasing rule}, $J_{t_0+s_0+1}(\cdot)$ is an increasing mapping. Thus, the function $\varphi(\cdot)$ is a decreasing function over $\So$.  Applying Lemma~\ref{prop:coupling} (with $f$ replaced by $\varphi$, $\ww$ replaced by $\uu$, $s$ replaced by $s_0$, $t_0$ replaced by $t_0+1$, $\delta$ replaced by $\delta_1$) to $U_{t_0+s_0+1}(\psi_{t_0+1}\circ \done)= \varphi\big(\big[W^{\psi_{t_0+1}\circ\done}_{S^{\psi_{t_0+1}\circ\done}_{t_{0}+s_0+1},t_{0}+s_0+1}\big]\big)$, we can see that
\begin{equation}
    \begin{split}
       \phi(\uu) := \EE\Big[ U_{t_0+s_0+1}(\psi_{t_0+1}\circ \done) \mid \big[W_{S^{\psi_{t_0+1}\circ\done}_{t_0+1},t_0+1}^{\psi_{t_0+1}\circ\done}\big] =\uu \Big]
    \end{split}
\end{equation}
is decreasing and bounded function in $\uu$ for $\uu\in \So$. We point out that we also used the assumption that $\tilde{u}_{t_0+s_0+1}(\cdot)$ is bounded (see Assumptions~\ref{assump:non-empty} and \ref{assump:utility}) in order to apply Lemma \ref{prop:coupling}. 
We further simplify the above equation using the fact that $\psi_{t_0+1}\circ\done=\done$ and obtain that
\begin{equation}
    \begin{split}
       \phi(\uu) = \EE\Big[ U_{t_0+s_0+1}(\done) \mid \big[W_{S^{\done}_{t_0+1},t_0+1}^{\done}\big] =\uu \Big]
    \end{split}
\end{equation}
is decreasing in $\uu$ and it is a bounded function.

Now, we consider $\EE(U_{t_{0}+s_{0}+1} (\done) \mid H_{t_{0}}^{\delta})$. By law of total expectation, we have
\begin{equation}\label{eq:similar:end}
    \begin{split}
    &\EE(U_{t_{0}+s_{0}+1} (\done) \mid H_{t_{0}}^{\delta})\\
    =& \EE\big( \EE\big(U_{t_{0}+s_{0}+1} (\done) \mid \big[W_{S^{\done}_{t_{0}+1},t_{0}+1}^{\done}\big] \big) \mid H_{t_{0}}^{\delta}\Big)\\
    =& \EE\Big( \phi\big(\big[W_{S^{\done}_{t_{0}+1},t_{0}+1}^{\done}\big]\big) \mid H_{t_{0}}^{\delta} \Big).
    \end{split}
\end{equation}

Recall that $\dtwo=\psi_{t_0+1}\circ \delta$.
Similar to the derivations in \eqref{eq:similar-start}--\eqref{eq:similar:end} with $\done$ replaced by $\delta$, we also obtain 
\begin{equation}\label{eq:similar:end-2}
    \begin{split}
    \EE(U_{t_{0}+s_{0}+1} (\dtwo) \mid H_{t_{0}}^{\delta}) 
    = \EE\Big( \phi\big(\big[W_{S^{\dtwo}_{t_{0}+1},t_{0}+1}^{\dtwo}\big]\big) \mid H_{t_{0}}^{\delta} \Big),
    \end{split}
\end{equation}
where we used the fact that $\psi_{t_0+1}\circ\dtwo=\dtwo$ in the derivations.

According to Lemma \ref{lemma:one coupling}, $\big[ W_{S_{t_{0}+1}^{\done}, t_{0}+1}^{\done}\big]$ is stochastically dominated by $\big[W_{S_{t_{0}+1}^{\dtwo}, t_{0}+1}^{\dtwo} \big]$. Combining this result with that $\phi(\cdot)$ is a bounded, decreasing and measurable function, we arrive at
\begin{equation}\label{eq:total-dtwo}
\begin{split}
    \EE\Big( \phi\big(\big[W_{S^{\dtwo}_{t_{0}+1},t_{0}+1}^{\dtwo}\big]\big) \mid H_{t_{0}}^{\delta} \Big) 
    \leq 
    \EE\Big( \phi\big(\big[W_{S^{\done}_{t_{0}+1},t_{0}+1}^{\done}\big]\big) \mid H_{t_{0}}^{\delta} \Big).
\end{split}
\end{equation}
Combining the above inequality with \eqref{eq:similar:end} and \eqref{eq:similar:end-2}, and noting that $\done=\psi_{t_0}\circ\delta$ and $\dtwo=\psi_{t_0+1}\circ\delta$, we arrive at
\begin{equation} \label{eq:induction3}
\begin{split}
&\EE(U_{t_{0}+s_{0}+1} (\psi_{t_0+1}\circ \delta) \mid H_{t_{0}}^{\delta}) \\
\leq 
&\EE(U_{t_{0}+s_{0}+1} (\psi_{t_0}\circ \delta) \mid H_{t_{0}}^{\delta})
\text{ a.s.}
\end{split}
\end{equation}
Finally, by combining equations \eqref{eq:induction2} and \eqref{eq:induction3} we have 
\begin{equation} 
\begin{split}
 \EE(U_{t_{0}+s_{0}+1} (\delta) \mid H_{t_{0}}^{\delta})  \leq \EE(U_{t_{0}+s_{0}+1} (\psi_{t_0}\circ \delta) \mid H_{t_{0}}^{\delta})
\text{ a.s.}
\end{split}
\end{equation}
holds for all $t_0$ and all decision $\delta\in\mathcal{D}_{\alpha}$. In other words, we proves that \eqref{eq:induction} holds for $s = s_{0}+1$ and all $t_0 \geq 1$ and $\delta\in\mathcal{D}_{\alpha}$, which completes the induction.
\end{proof}

\subsection{Proof of Propositions~\ref{prop:poly1}--\ref{prop:coonection-to-poor}, Remark~\ref{corollary:poly2}, Corollary~\ref{corollary:eg}--\ref{corollary:EU}, Lemma~\ref{lemma:GLFWER} and Counterexample \ref{eg:counterexample}}
The proof of proposition~\ref{prop:poly1} is based on the following lemma.
\begin{lemma}\label{lemma:maximum}
Let $f: [0,1]^p \rightarrow \RR$ be a differentiable function satisfying that $\partial_j f(u_1,\cdots,u_p)$ is continuous and does not depend on $u_j$ for all $j\in\langle p \rangle$.
Then, 
\begin{equation}\label{eq:sup}
\sup\limits_{\uu \in [0,1]^p} f(\uu) =\max\limits_{\uu \in \{0,1\}^p} f(\uu),
\end{equation}
\end{lemma}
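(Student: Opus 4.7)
The approach rests on the observation that the hypothesis ``$\partial_j f$ does not depend on $u_j$'' forces $f$ to be \emph{affine} in each coordinate separately. Once this is in hand, the supremum over $[0,1]^p$ can be driven onto a vertex of the cube by one-coordinate-at-a-time maximization.

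First I would show the affinity in each coordinate. Fix $j \in \langle p \rangle$ and fix the remaining coordinates $u_{-j} \in [0,1]^{p-1}$. By hypothesis, $\partial_j f(u_1,\ldots,u_p) = g_j(u_{-j})$ for some continuous function $g_j$ that does not involve $u_j$. Integrating in $u_j$ from $0$ to $u_j$ gives
\begin{equation*}
f(u_1,\ldots,u_p) \;=\; u_j\, g_j(u_{-j}) \;+\; h_j(u_{-j}),
\end{equation*}
where $h_j(u_{-j}) := f(u_1,\ldots,u_{j-1},0,u_{j+1},\ldots,u_p)$. Hence, with $u_{-j}$ held fixed, the map $u_j \mapsto f(\uu)$ is affine on $[0,1]$, so its maximum is attained at $u_j \in \{0,1\}$.

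Next I would iterate this one-coordinate argument. Since $f$ is continuous on the compact cube $[0,1]^p$, it attains its supremum at some $\uu^* \in [0,1]^p$. Applying the affinity step with $j=1$ and $u_{-1} = (u_2^*,\ldots,u_p^*)$, I can replace $u_1^*$ by some $v_1 \in \{0,1\}$ without decreasing $f$. Repeating with $j=2,\ldots,p$ (each time holding the previously rounded coordinates fixed and using the affinity in the current coordinate), I obtain a point $\vv \in \{0,1\}^p$ with $f(\vv) \geq f(\uu^*)$. The reverse inequality $\max_{\{0,1\}^p} f \leq \sup_{[0,1]^p} f$ is trivial, which gives \eqref{eq:sup}.

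There is no real obstacle here beyond being careful that the iterative rounding works: the affinity in each coordinate holds uniformly in the other coordinates (the argument does not require joint linearity), so each rounding step is valid regardless of whether the earlier coordinates have already been pushed to $\{0,1\}$. Continuity (which follows from differentiability) is used only to guarantee that the supremum is attained, so one can speak of a maximizer $\uu^*$ to start the rounding procedure.
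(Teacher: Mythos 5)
Your proposal is correct and follows essentially the same route as the paper: both arguments exploit that $\partial_j f$ is independent of $u_j$ (so $f$ is affine, hence endpoint-maximized, in each coordinate separately) and then round the coordinates to $\{0,1\}$ one at a time without decreasing $f$. The only cosmetic difference is that you make the affinity explicit by integrating $\partial_j f$, whereas the paper argues directly from the sign of the partial derivative.
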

\begin{proof}
Since the function $f$ on $[0,1]^p$ is a continuous function over a compact set, the supremum on the left-hand side of \eqref{eq:sup} is attainable. 
Because 
$
\sup_{\uu \in [0,1]^p} f(\uu)  \geq \max_{\uu \in \{0,1\}^p} f(\uu)
$, 
to show \eqref{eq:sup}, it suffices to show that 
$
\sup_{\uu \in [0,1]^p} f(\uu)  \leq \max_{\uu \in \{0,1\}^p} f(\uu)
$
which is equivalent to 
$
f(\uu)  \leq \max_{\uu \in \{0,1\}^p} f(\uu), \text{ for any } \uu  \in [0,1]^p.
$

To verify this, we compare function values coordinate-wise iteratively. For any $\uu =(u_1^0, \ldots, u_p^0) \in [0,1]^p$, we first look at the first cooridate $u_1^0$. Under the assumption that $\frac{\partial}{\partial u_j} f (\uu)$ is continuous and does not depend on $u_j$,
it follows that 
\begin{equation}
\begin{split}
    f(u_1^0, u_2^0, \ldots, u_p^0) 
    &\leq 
    f(1, u_2^0, \ldots, u_p^0) \\
    &\leq \max\limits_{u_1 \in \{0,1\}} f(u_1,u_2^0, \ldots, u_p^0) ,
\end{split}   
\end{equation}
if $\frac{\partial}{\partial u_1} f (\uu) |_{u_2 = u_2^0, \ldots, u_p = u_p^0} \geq 0$. Similarly,
\begin{equation}
\begin{split}
    f(u_1^0, u_2^0, \ldots, u_p^0)  
    &\leq 
    f(0, u_2^0, \ldots, u_p^0) \\
    &\leq \max\limits_{u_1 \in \{0,1\}} f(u_1,u_2^0, \ldots, u_p^0) ,
\end{split}   
\end{equation}
if $\frac{\partial}{\partial u_1} f (\uu)
|_{u_2 = u_2^0, \ldots, u_p = u_p^0} \leq 0$. 
Without loss of generality, assume $f(1,u_2^0,\cdots,u_p^0)\geq f(0,u_2^0,\cdots,u_p^0)$. Then, we look at the second coordinate $u_2^0$ for $(1, u_2^0, \ldots, u_p^0)$. With similar arguments, we have
\begin{equation} 
\begin{split}
&\max_{u_1 \in \{0,1\}} f(u_1,u_2^0, \ldots, u_p^0)\\
=& f(1, u_2^0, u_3^0, \ldots, u_p^0) \\
\leq & \max\limits_{u_1,u_2 \in \{0,1\}} f(u_1,u_2, u_3^0, \ldots, u_p^0).   
\end{split}
\end{equation}

We continue similar reasoning for the third to $p$-th coordinates. At the end, we arrive at 
       $ f(u_1^0, u_2^0, \ldots, u_p^0) 
       \leq  \max_{u_1\in\{0,1\}}f(u_1,u_2^0,\cdots,u_p^0)
       \leq  \cdots
       \leq  \max_{\uu \in \{0,1\}^p} f(\uu),
$
which completes the proof.
\end{proof}

\begin{proof}[Proof of Proposition \ref{prop:poly1}]
We start with showing that \eqref{eq:poly-entrywise-increasing} implies $\tilde{r}$ is entrywise increasing. It suffices to show that for each $p\geq 1$, if $\uu=(u_1,\cdots,u_p)\in \So$, $\vv=(v_1,\cdots,v_p)\in \So$, and $u_i\leq v_i$ for all $i\in\langle p \rangle$, then $\tilde{r}(\uu)\leq \tilde{r}(\vv)$.
We define a composite function $\eta: \Su \rightarrow \RR$ as $\eta(\uu) := \rd([\uu])$. We can see that $\eta$ extends the domain of $\tilde{r}$ to the unordered space $\Su$.
Because $\eta(\uu)=\tilde{r}(\uu)$ for $\uu\in\So$, to show $\tilde{r}$ is entrywise increasing, it suffices to show that for each $p\geq 1$, if $\uu=(u_1,\cdots,u_p)\in[0,1]^p$, $\vv=(v_1,\cdots,v_p)\in[0,1]^p$, and $u_i\leq v_i$ for all $i\in\langle p \rangle$, then $\eta(\uu)\leq \eta(\vv)$.
Note that for $\uu\in[0,1]^p$, $\eta$ is twice differentiable. Thus, it suffices to show that for all $p\geq 1$ and all $i\in\langle p \rangle$,
$\partial_i \eta(\uu)\geq 0$ for all $\uu\in[0,1]^p$. %

Note that for each $p\geq 1$, $\uu=(u_1,\cdots, u_p)$,
$\eta(\uu)= \sum_{k=1}^p C_{p,k}\sum_{i_1<\cdots<i_k}\prod_{j=1}^k u_{i_j}$ with the partial derivatives and second derivatives
\begin{equation*} \label{eq:derivative-1}
\begin{split}
\frac{\partial}{\partial_i} \eta(\uu) =& C_{p,1} +C_{p,2} \sum\limits_{l \neq i} u_{l} +C_{p,3} \sum\limits_{\substack{i_1, i_2 \neq i \\ distinct}} u_{i_1} u_{i_2} + \ldots \\
&+ C_{p,p} \sum\limits_{\substack{i_1, i_2, \cdots i_{p-1} \neq i \\ distinct}} u_{i_1} u_{i_2} \cdots u_{i_{p-1}},
\end{split}
\end{equation*}
\begin{equation*} \label{eq:derivative-2}
\begin{split}
\frac{\partial}{\partial_i\partial_j} \eta(\uu) =& C'_{p,2} +C'_{p,3} \sum\limits_{l\notin \{i,j\}} u_l+C'_{p,4} \sum\limits_{\substack{i_1, i_2 \notin \{i,j\} \\ distinct}} u_{i_1} u_{i_2} + \ldots \\
&+ C'_{p,p} \sum\limits_{\substack{i_1, i_2, \cdots i_{p-1} \notin \{i,j\} \\ distinct}} u_{i_1} u_{i_2}\cdots u_{i_{p-2}},
\end{split}
\end{equation*}
for some constants $C_{p,1}, C_{p,2}, \ldots, C_{p,p}$ and $ C'_{p,2}, \ldots, C'_{p,p}$. We can see that for each $j\in\langle p \rangle$, $\partial_{ij} \eta(\uu)$ does not depend on $u_i$. Applying Lemma~\ref{lemma:maximum} to $-\partial_{i}\eta$ for $\uu\in[0,1]^p$, we have $ \inf_{\uu\in[0,1]^p}\partial_i\eta(\uu) = \min_{\uu\in\{0,1\}^p}\partial_i\eta(\uu)$.
Note that $\partial_i\eta(\uu)=\eta(u_1,\cdots, u_{i-1},1,u_{i+1},\cdots,u_p)-\eta(u_1,\cdots, u_{i-1},0,u_{i+1},\cdots,u_p)$. Thus, \begin{equation}
\begin{split}
     &\inf_{\uu\in[0,1]^p}\partial_i\eta(\uu)\\ 
     = & \min_{\uu\in\{0,1\}^p}\{\eta(u_1,\cdots, u_{i-1},1,u_{i+1},\cdots,u_p)\\
     &\qquad\qquad-\eta(u_1,\cdots, u_{i-1},0,u_{i+1},\cdots,u_p)\}.
\end{split}
\end{equation}
Since $\eta$ is symmetric in its arguments, the right-hand side of the above equation is the same as 
$
\min_{l\in \langle p \rangle}\{\eta(\uu^{l-1,p-l+1})-\eta(\uu^{l,p-l})\},
$
where we recall $\uu^{l,p-l}$ is the vector whose first $l$-th elements are $0$'s and the $l+1$-th to $p$-th elements are $1$'s. Thus,
\begin{equation}\label{eq:similar-arg}
\begin{split}
     \inf_{\uu\in[0,1]^p}\partial_i\eta(\uu) 
     =  \min_{l\in \langle p \rangle}\{\eta(\uu^{l-1,p-l+1})-\eta(\uu^{l,p-l})\}.
\end{split}
\end{equation}
According to \eqref{eq:poly-entrywise-increasing}, the right-hand side of the above equation is non-negative. Thus, $     \inf_{\uu\in[0,1]^p}\partial_i\eta(\uu)\geq 0$, which completes the proof of the first part of the proposition.

We proceed to the proof of the `Moreover' part. %
To show $\rd$ is appending increasing, it is sufficient to show that for all $p\geq 0$
$\uu = (u_1, \ldots, u_p) \in [0,1]^{p}$ and $u_p\in[0,1]$
$$
\eta(\uu) \leq \eta (u_1, \ldots, u_p, u_{p+1}). 
$$

With similar arguments as those for \eqref{eq:similar-arg}, for each $p\geq 0$,
\begin{equation}
    \inf_{\uu\in[0,1]^p}\{\eta(\uu,0)-\eta(\uu)\}=\min_{0\leq i\leq p}\{ \eta (\uu^{i+1,p-i})-\eta (\uu^{i,p-i})\}.
\end{equation}
According to \eqref{eq:poly-appending-increasing}, the right-hand side of the above equation is non-negative. Thus, $$ \inf_{\uu\in[0,1]^p}\{\eta(\uu,0)-\eta(\uu)\}\geq 0,$$ which implies
$ \eta(\uu)\leq \eta(\uu,0)\leq \eta(\uu,u_{p+1})
$ for all $u_{p+1}\in[0,1]$.
\end{proof}

\begin{proof}[Proof of Remark \ref{corollary:poly2}]
First, a direct calculation gives $\rd (\mathbf{u}^{i,p-i}) = \sum_{k=1}^{p-i}  C_{p,k} { p-i \choose k }.$
Plugging this equation in  \eqref{eq:poly-entrywise-increasing}, for $i=1,\cdots, p$, we can see that  \eqref{eq:poly-entrywise-increasing} is equivalent to
\begin{equation*}
0 \leq C_{p,1}  \leq \ldots \leq
\sum_{k=1}^{p-1}  C_{p,k} {p-1 \choose k}
\leq \sum_{k=1}^{p}  C_{p,k} {p \choose k}.
\end{equation*}
The above inequalities are equivalent to
$\sum_{k=1}^{p-i}  C_{p,k} \Big[ {p-i \choose k} - { p-i-1 \choose k}\Big] \geq 0,\text{ for } i =0, \ldots, p-1 $.
By the Pascal's triangle, we simplify the above inequalities as $\sum_{k=1}^{p-i}  C_{p,k}   { p-i-1 \choose k-1}  \geq 0,
\text{ for } i =0, \cdots, p-1 .$
Combining the above equations and inequalities for $i=0,..., p$ and obtain $\sum_{k=1}^{p-i}  C_{p,k} {p-i \choose k}
\leqslant \sum_{k=1}^{p-i}  C_{p+1,k} {p-i \choose k}$
which is further simplified as $\sum_{k=1}^{p-i}  \big(C_{p+1,k} - C_{p,k} \big) {p-i \choose k}
\geq 0.
$
\end{proof}

\begin{proof}[Proof of Corollary \ref{corollary:eg}]
First note that Assumption~\ref{assump:non-empty} is verified in the proof of Corollary~\ref{corollary:eg:one-step optimal-monotone risk}. Thus, it suffices to show that the risk process $R_t$ satisfies Assumption  and \ref{assump:risk} and the utility process $U_t$ satisfies  and \ref{assump:utility}. 

From their definitions in Example \ref{eg:LFWER}, \ref{eg:GLFWER}, and \ref{eg:LFNR} and \ref{eg:ARL} respectively, we can see they are all symmetric in their arguments and functions of their ordered statistics $\big[W_{S_{t+1},t}\big]$. We define a function $\xi: \So \rightarrow \RR$ such that $R_t$ and $U_t$ are both in the form of $\xi \big( \big[W_{S_{t+1},t}\big] \big)$ for any choice of $S_{t+1}$.

We start with verifying that $R_t\in\{\text{LFNR}_t,\text{GLFWER}_t, \text{LFWER}_t\}$ satisfies Assumption~\ref{assump:risk}.

If $R_t = \text{LFNR}_t$, by the definition in Example \ref{eg:LFNR}, we have $
\xi(\uu) = \frac{\sum_{i =1}^{\dim(\uu)} u_i }{\dim(\uu) \vee 1}.
$
for $\uu = (u_1, \ldots, u_{\dim (\uu)}) \in \So$. Clearly, $\xi$ is entrywise increasing%
. To see $\xi$ is also appending increasing, first we can see that $\xi (\varnothing) = 0 < u_1 = \xi(u_1)$ for any $u_1 \in [0,1]$. For  $\uu = (u_1, \ldots, u_p) \in \So$ with $\dim (\uu) = p \geq 1$ and $u_{p+1}\geq u_p$, $\xi(\uu)=\frac{\sum_{i =1}^{p} u_i }{p} 
\leq 
\frac{\sum_{i =1}^{p+1} u_i }{p+1}=\xi(\uu,u_{p+1}). $
Thus, $\xi$ is appending increasing.

For $R_t = \text{GLFWER}_t$, by the definition in Example \ref{eg:GLFWER}, \begin{equation} \label{eq:gfwer}
\xi_{m} (\uu) =  1-\sum_{j = 0}^{m-1} \sum\limits_{\substack{I \subset [\dim(\uu)] \\ |I| = j}} \Big(\prod_{i \in I} u_i \Big)
   \prod_{k\in [\dim(\uu)] \setminus I}(1-u_k)  
\end{equation}
for $\uu = (u_1, \ldots, u_{\dim (\uu)}) \in \So$. 
Clearly, $\xi_m(\uu)$ is a polynomial function of $\uu$ as defined in Proposition \ref{prop:poly1}. To show $\xi_m$ is entrywise increasing and appending increasing, we only need to verify \eqref{eq:poly-entrywise-increasing} and \eqref{eq:poly-appending-increasing} by Proposition \ref{prop:poly1}. We use the following arguments to avoid tedious calculations.

We revisit the definition of $\text{GLFWER}_{m,t}=\prob(E_{m,t}|\mathcal{F}_t)$. It is the conditional probability of the event that there are at least $m$ false non-detection errors given the information filtration $\fil_t$. 
Note that
$
\prob(E_{m,t}|\fil_t) = \prob(E_{m,t}| \big[W_{S_{t+1},t}\big])
$ and $\prob(E_{m,t}|\mathcal{F}_t) = \xi_m (\big[W_{S_{t+1},t}\big])$. Thus, $ \xi_m(\uu)
   =  \prob\big(E_{m,t}|\big[W_{S_{t+1},t}\big] = \uu \big).$
Then, $\xi_m(\mathbf{u}^{i,p-i})= \prob\big(E_{m,t}|W_{1,t}=\cdots = W_{i,t}=0,W_{i+1,t}=\cdots= W_{p,t}=1, S_{t+1}=\langle p \rangle\big)$
 due to the symmetry of GLFWER. Recall $W_{k,t}=\prob(\tau_k<t|\mathcal{F}_t)=\prob(\tau_k<t|X_{k,1},\cdots, X_{k,t})$. Note that $W_{k,t}=1$ is equivalent to $\tau_k<t$ a.s. and $W_{k,t}=0$ is equivalent to $\tau_k\geq t$ a.s. Thus, we further have
$\xi_m(\mathbf{u}^{i,p-i})
       = \prob\big(E_{m,t}|\tau_1 \geq t,\cdots,\tau_i \geq t, \tau_{i+1}< t,\cdots, \tau_{p}< t, S_{t+1}=\langle p \rangle\big)$.
Note that
if $\tau_1 \geq t,\cdots,\tau_i \geq t, \tau_{i+1} < t,\cdots, \tau_{p} < t$ and $S_{t+1}=\langle p \rangle$, then $| k:\tau_k<t \text{ and }k\in S_{t+1}|= p-i$. 
On the other hand, recall $E_{m,t}=\{| k:\tau_k<t \text{ and }k\in S_{t+1}|\geq m\}$. Thus,
\begin{equation}
    \xi_m(\mathbf{u}^{i,p-i}) = \ind(p-i\geq m) \text{ a.s.}
\end{equation}
Based on the above equation, we have
$\xi_m(\mathbf{u}^{i,p-i})=\ind(p-i\geq m)\leq \ind(p-i+1\geq m)=\xi_m(\mathbf{u}^{i-1,p-i+1})$ for all $i\in\langle p \rangle$. This verifies \eqref{eq:poly-entrywise-increasing}. Moreover, $\xi_m(\mathbf{u}^{i,p-i})=\ind(p-i\geq m)=\xi_m(\mathbf{u}^{i+1,p-i})$ for all $0\leq i\leq p$. This verifies \eqref{eq:poly-appending-increasing}.

Since $\text{LFWER}_t$ is a special case of $\text{GLFWER}_t$ where $m = 1$, we also have \eqref{eq:poly-entrywise-increasing} and \eqref{eq:poly-appending-increasing} verified for $R_t=\text{LFWER}_t$.

Now, we proceed to the utility process $U_t$. For $U_t = \text{IARL}_t$, by the definition in Example \ref{eg:ARL}, 
$
\xi(\uu) = \sum_{i =1}^{\dim(\uu)} 1 - g(u_i). 
$
for $\uu = (u_1, \ldots, u_{\dim (\uu)}) \in \So$, where the function $g$ is defined in \eqref{eq: tau leq t}. Because $g$ is an increasing function bounded between $0$ and $1$,  $\xi$ is entrywise decreasing and appending increasing.
\end{proof}
\begin{proof}[Proof of Lemma \ref{lemma:GLFWER}]\label{proof-lemma:GLFWER}
Let $f_m:[0,1]^p \rightarrow \RR$ be a function such that 
$$
f_m (\uu) =  1-\sum_{j = 0}^{m-1} \sum_{\substack{I \subset \langle p \rangle \\ |I| = j}} \Big(\prod_{i \in I} u_i \Big)
\prod_{k\in \langle p \rangle \setminus I}(1-u_k)
$$
for $\uu=(u_1,\cdots,u_p)\in [0,1]^p$.
Then,  $
f_m(\uu) = \xi_m ([\uu])
$
for $\uu \in [0,1]^p$ for $\xi_m$ defined in \eqref{eq:gfwer}, and $\xi_m$ is entrywise increasing from the proof of Corollary~\ref{corollary:eg}. Also, for any $\uu=(u_1,\cdots,u_p)\in[0,1]^p$ and $\vv=(v_1,\cdots,v_p)\in[0,1]^p$ satisfying $u_k\leq v_k$ for all $k\in\langle p \rangle$, we have $[\uu]\leqc[\vv]$. Thus,
$f_m(\uu)=\xi_m([\uu])\leq \xi_m([\vv])=f_m(\vv)$
if $u_k\leq v_k$ for all $k$.
\end{proof}

\begin{proof}[Proof of Counterexample \ref{eg:counterexample}]
Denote by $\mathcal{H}$ the collection of all possible values of $H_t$ for all $t$. Denote by $\mathcal{A}$ the collection of all the subsets of $\{1,2,\ldots, K\}$. Then, the problem is formulated through a Markov Decision Process (MDP) with the state  $\HH$ and the action space $\Acal$ and the optimal solution can be obtained by backward induction.

We define a value function $V_t^{(r)}: \HH \rightarrow \RR$ and an action-value function $Q_t^{(r)}: \HH \times \Acal \rightarrow \mathbb{R}$ as follows. 
Recall that $H_t=\{X_{k,l},  k \in S_{l}, S_l, 1 \leq l \leq t\}$. Denote by $h_t$, $x_{k,l}$, and $s_l$ the realization of $H_t$, $X_{k,l}$ and $S_l$, respectively.
For any $0\leq t \leq r$, the value function $h_t \mapsto V_t^{(r)}(h_t)$ is defined as 
\begin{equation} \label{eq:def-V}
    V_t^{(r)}(h_t) = \max_{\delta\in\mathcal{D}_{\alpha}} \mathbb{E}(U_r|H_t=h_t)
\end{equation}
for all $h_t \in \HH$. Of note, $\EE(U_r)=V_0^{(r)}(\varnothing)$ is the quantity we would like to maximize.

For any $0\leq t \leq r$, we define the action-value function $Q_t^{(r)}(\cdot, \cdot)$  as 
\begin{equation}\label{eq:def-Q}
    Q_t^{(r)}(h_t,s) = \max_{\delta\in\mathcal{D}_{\alpha}}\mathbb{E}(U_r|H_t=h_t, S_{t+1}=s )
\end{equation}
for all $(h_t,s) \in (\HH, \Acal)$. 
According to the definition of $U_t$ in \eqref{eq:utility}, for $t=r$
\begin{equation} \label{eq:Q-r}
\begin{split}
Q_r^{(r)}(h_r,s) &= \mathbb{E}(U_r|H_r=h_r, S_{r+1}=s ) \\
&= u_{r}(\{w_{k,r}\}_{k\in s_{r} }, s_{r}, s ),
\end{split}
\end{equation}
where $s_r$ and $\{w_{k,r}\}_{k\in s_{r} }$ are the realization of $S_r$ and $\{W_{k,r}\}_{k\in s_r}$ given that $H_r=h_r$. Note that $s_r$ and $\{w_{k,r}\}_{k\in s_{r} }$ are determined by $h_r$.

According to Proposition 3.1 in \cite{bertsekas1996stochastic}, the following optimality equations hold for for $0\leq t \leq r-1$
\begin{equation} \label{eq:relation-V-Q}
V_t^{(r)}(h_t) = \max_{\substack{s\subset s_t \\ r_{t}(w_{s_{t},t}, s_{t}, s) \leq \alpha}} Q_t^{(r)}(h_t,s),
\end{equation}
and 
\begin{equation}\label{eq:relation-Q-V}
Q_t^{(r)}(h_t,s) = \EE\Big( V^{(r)}_{t+1}(H_{t+1}) \big| H_t = h_t, S_{t+1} = s\Big).
\end{equation}
Combining \eqref{eq:relation-V-Q} and \eqref{eq:relation-Q-V} yields a backward induction equation for $V_t^{(r)}(h_t)$:
\begin{equation}
V_t^{(r)}(h_t) = \max_{\substack{s\subset s_t \\ r_{t}(w_{s_{t},t}, s_{t}, s) \leq \alpha}} \EE\Big( V^{(r)}_{t+1}(H_{t+1}) \big| H_t = h_t, S_{t+1} = s\Big).
\end{equation}

By solving the above optimality equations, we are able to enumerate all sequential decisions that maximize $\EE(U_1)$ and $\EE(U_2)$. We omit the detailed calculation for the ease of presentation. 
In particular, for $r=1$, any sequential decision that maximizes $\EE(U_1)$ selects $S_2$ as
\begin{equation}\label{eq:req1}
S_2=
    \begin{cases}
    \{1\} \text{ or } \{2\} & \text{ if } X_{S_1,1}= (0,0,1)\\
    \{1\} \text{ or } \{3\} & \text{ if } X_{S_1,1}= (0,1,0)\\
    \{2\} \text{ or } \{3\} & \text{ if } X_{S_1,1}= (1,0,0)\\
    \{1,2,3\} & \text{ if } X_{S_1,1}= (0,0,0)\\
       \varnothing & \text{ if } X_{S_1,1}\in \begin{aligned} \{(1,1,1),(0,1,1),\\(1,0,1), (1,1,0)\}
   \end{aligned}
    \end{cases}
\end{equation}

Of note, our proposed method is one such decision.
On the other hand, for $r=2$, any the sequential decision that maximizes $\EE(U_2)$ selects $S_2$ as
\begin{equation}\label{eq:req2}
     S_2 =
     \begin{cases}
        \{1,2,3\} & \text{ if } X_{S_1, 1} \in\{ (0,0,1),(0,1,0), (1,0,0)\}\\
           \{1,2,3\} & \text{ if } X_{S_1,1}= (0,0,0)\\
       \varnothing & \text{ if } X_{S_1,1}\in \begin{aligned} \{(1,1,1),(0,1,1),\\(1,0,1), (1,1,0)\}
   \end{aligned}
    \end{cases}
\end{equation}
Note that $S_1=\{1,2,3\}$, so the $S_2$ described above is well-defined.

Because it is not possible for a sequential decision to satisfy both equations \eqref{eq:req1} and \eqref{eq:req2}, we conclude that there is no sequential decision in $\mathcal{D}_{\alpha}$ that maximizes $\mathbb{E}(U_r)$ for both $r=1$ and $r=2$. Thus, the uniformly optimal sequential decision does not exist.
\end{proof}

\begin{proof}[Proof of Proposition~\ref{prop:AR}]
	According to the definition of $\mathcal{D}_{\alpha}$, $\delta \in \mathcal{D}_{\alpha}$ implies
	$
	R_{t}(\delta) \leq \alpha \text { a.s.} 
	$ for all $t\in\mathbb{Z}_+$. Thus, 
	$\text{AR} (\delta) = \EE\big(\sum_{t=1}^{\infty}a_t R_t\big) \leq \alpha \EE\big(\sum_{t=1}^{\infty}a_t\big)=\alpha$, where the last equation is due to  $\sum_{t = 1}^{\infty} a_t = 1$.
\end{proof}

\begin{proof}[Proof of Proposition~\ref{prop:AU}]
		$
	\text{AU} (\delta^{\prime}) 
	= \EE\big(\sum_{t=1}^{\infty}b_t U_t (\delta^{\prime})\big)
	= \sum_{t=1}^{\infty}b_t \EE\big( U_t (\delta^{\prime})\big)\leq \sum_{t=1}^{\infty}b_t \EE\big( U_t (\delta)\big)=\text{AU}(\delta)
	$ for  any $\delta^{\prime} \in \mathcal{D}_{\alpha}$, 
	where the second last inequality is due to the assumption that $\delta$ is uniformly optimal, and the last equation is obtained based on the definition of aggregated risk.
\end{proof}

\begin{proof}[Proof of Corollary~\ref{corollary:HT}]
	By comparing the definition of $\text{GFWER}_m$ with that of $\text{GLFWER}_{m,t}$ defined in \eqref{eq:GLFWER}, we have
	$
	\text{GFWER}_m = \EE (R_{T}) = \EE \big(\sum_{t=1}^{\infty}\ind(T=t)R_t\big).
	$ The proof is completed by applying Proposition \ref{prop:AR} with $a_t = \ind(T = t)$. 
\end{proof}

\begin{proof}[Proof of Proposition \ref{prop:coonection-to-poor}]
Note that
\begin{equation} \label{eq:poor-afdr-reorg}
\begin{split}
    &\sum_{t = 1}^{\bar{N}-1} \EE\Big(  \frac{ K^{-1} \sum_{k = 1}^K \ind \big( N_k = t \big) }{ K^{-1} \big[ \{\sum_{s = 1}^{\bar{N}-1} \sum_{k = 1}^K \ind ( N_k = s )\} \vee 1 \big] } 
\text{FDP}_t \Big)\\ 
= & \sum_{t = 1}^{\bar{N}-1} \EE \Big(\frac{G_{K}^t}{M_K \vee 1} \text{FDP}_t \Big),
\end{split}
\end{equation}
where we define $G_{K}^t = K^{-1} \sum_{k = 1}^K \ind \big( N_k = t \big)$,
$M_K = \sum_{t = 1}^{\bar{N}-1} G_{K}^t = K^{-1} \sum_{t = 1}^{\bar{N}-1} \sum_{k = 1}^K \ind \big( N_k = t \big)$, and $M = \sum_{t = 1}^{\bar{N}-1} C_t$. 
For each 
$t$, because $\frac{G_{K}^t}{M_K \vee 1} \text{FDP}_t \rightarrow \frac{C_{t}}{M \vee 1} A_t$ in probability as $K$ goes to infinity and $\frac{G_{K}^t}{M_K \vee 1} \text{FDP}_t \in [0,1]$, %
we have 
\begin{equation} \label{eq:BCT-poor-afdr-1}
\lim_{K \rightarrow \infty} \EE \big(\frac{G_{K}^t}{M_K \vee 1} \text{FDP}_t \big) = \EE\big(\frac{C_{t}}{M \vee 1} A_t\big)=\frac{C_{t}}{M \vee 1} \EE\big( A_t \big)
\end{equation}
according to the dominated convergence theorem.
For $\EE(A_t)$, by dominated convergence theorem again, we have
\begin{equation} \label{eq:BCT-poor-afdr-2}
\lim_{K\rightarrow\infty} \EE(\text{FDP}_t) = \EE(A_t).
\end{equation}
Combining the above two equations, we obtain
\begin{equation} \label{eq:BCT-poor-afdr-combined}
\lim_{K \rightarrow \infty} \EE \big(\frac{G_{K}^t}{M_K \vee 1} \text{FDP}_t \big) =\frac{C_{t}}{M \vee 1}\lim_{K\rightarrow\infty} \EE(\text{FDP}_t).
\end{equation}
On the other hand, by following a sequential decision in $\mathcal{D}_{\alpha}$, we have $\EE(\text{FDP}_t | \mathcal{F}_t) \leq \alpha$ a.s., which implies $\EE(\text{FDP}_t) \leq \alpha$. This, combined with \eqref{eq:BCT-poor-afdr-combined}, implies
\begin{equation} \label{eq:inequality-poor-afdr}
\EE\big(\frac{C_{t}}{M \vee 1} A_t\big) \leq \alpha \frac{C_{t}}{M \vee 1}.
\end{equation}
Combining \eqref{eq:poor-afdr-reorg}, \eqref{eq:BCT-poor-afdr-1} and \eqref{eq:inequality-poor-afdr}, we further have 
\begin{equation}\label{eq:fdp-lim}
\begin{split}
\lim_{K \rightarrow \infty} \sum_{t = 1}^{\bar{N}-1} \EE \big(\frac{G_{K}^t}{M_K \vee 1} \text{FDP}_t \big) 
\leq \sum_{t = 1}^{\bar{N}-1} \alpha \frac{C_{t}}{M \vee 1} .
\end{split}
\end{equation}
Since $\sum_{t = 1}^{\bar{N}-1} \frac{G^t_{K}}{M_{K} \vee 1} = \ind(M_K \neq 0) \leq 1$ for all $K$,  we have $\sum_{t = 1}^{\bar{N}-1} \frac{C_{t}}{M \vee 1} \leq 1$. Combine this with \eqref{eq:poor-afdr-reorg} and  \eqref{eq:fdp-lim}, we arrive at 
$\lim_{K\to\infty}\text{AFDR}(\delta)=\lim_{K \rightarrow \infty} \sum_{t = 1}^{\bar{N}-1} \EE \big(\frac{G_{K}^t}{M_K \vee 1} \text{FDP}_t \big)
\leq \alpha,$
which completes the proof.
\end{proof}

\begin{proof}[Proof of Corollary~\ref{corollary:EU}]
	According to the definition of $\text{IRL}_t$ in Example~\ref{eg:ARL}, $\text{TARL} = \sum_{s = 0}^{\infty} \text{IRL}_s $. This implies $\EE(\text{TARL}) 
	= \EE(\sum_{s = 0}^{\infty} \EE(\text{IRL}_s | \mathcal{F}_{s})) = \EE(\sum_{s = 0}^{\infty} \text{IARL}_s)$. 
	According to Corollary~\ref{corollary:eg},
	$\sD$ is uniformly optimal when $U_t=\text{IARL}_t$. We complete the proof by letting $b_t=1$ for all $t$, $U_t=\text{IARL}_t$ and $\text{AU}(\delta)=\text{TARL}(\delta)$ in Proposition~\ref{prop:AU}.
\end{proof}

\section{Proofs of equations \eqref{eq:rec2} and \eqref{eq:rec}} \label{sec:proof-eqs}

\begin{proof}[Proof of Equation \eqref{eq:rec2}]
If $t=0$, then $Q_{k, 1} = \bar{\pi}_{1}^{-1}  \pi_{0}  L_{k,1}$.
Next, we consider the case where $t\geq1$.
Let $L_{k,(s+1): t}:=\prod_{r=s+1}^{t} \frac{q_{k,r}\big(X_{k,r} \big)}{p_{k,r}\big(X_{k,r}\big)}$.
Then,
\begin{align*} 
	Q_{k, t+1} 
	&= \frac{q_{k,t+1}\big(X_{k, t+1}\big)}{p_{k,t+1}\big(X_{k, t+1}\big)}\big[\sum_{s=0}^{t-1} \frac{\pi_{s}}{\bpi_{t+1}} L_{k,s+1: t}+\frac{\pi_{t}}{\bpi_{t+1} }\big] \\ 
	&= L_{k,t+1}\big[\sum_{s=0}^{t-1} 
	\frac{\bpi_{t+}}{\bpi_{t+1}} \frac{\pi_{s}}{\bpi_{t+}} L_{k,(s+1): t}+\frac{\pi_{t}}{\bpi_{t+1}}\big] 
\end{align*}
We complete the proof by combining the above equation with the the definition of $Q_{k, t}$. 
\end{proof}

\begin{proof}[Proof of Equation \eqref{eq:rec}]
By definition and Bayes formula, $$       W_{k,t}
        =\{\sum_{s=0}^{t-1} \pi_s \prod_{r=1}^{s} p_{k,r} (X_{k, r}) \prod_{r=s+1}^{t} q_{k,r}(X_{k, r})\}\cdot\{\sum_{s=0}^{t-1}\pi_s\prod_{r=1}^{s} p_{k,r} (X_{k, r}) \prod_{r=s+1}^{t} q_{k,r} (X_{k, r})+ \bar{\pi}_t \prod_{r=1}^{t} p_{k,r} (X_{k, r})\}^{-1} .$$
This is further simplified as
$W_{k,t} 
=\frac{\sum_{s=0}^{t-1} \pi_{s} L_{k,(s+1): t}}
{\sum_{s=0}^{t-1} \pi_{s} L_{k,(s+1): t}+\bpi_{t} } 
=\frac{\sum_{s=0}^{t-1} \frac{\pi_{s}}{\bpi_{t}} L_{k,(s+1): t}}
{\sum_{s=0}^{t-1} \frac{\pi_{s}}{\bpi_{t}} L_{k,(s+1): t}+ 1} 
=\frac{Q_{k, t}}{Q_{k, t}+ 1}.$

\end{proof}

\begin{proof}[Proof of Equation \eqref{eq: tau leq t}]
\begin{equation*}
    \begin{split}
        g(W_{k,t})
        =&\big\{\sum_{s=0}^{t-1} \prob\big(\tau_{k}=s\big) \prod_{r=1}^{s} p_{k,r} \big(X_{k, r}\big) \prod_{r=s+1}^{t} q_{k,r} \big(X_{k, r}\big)\\
        & + \prob \big(\tau_{k} = t\big) \prod_{r=1}^{t} p_{k,r} \big(X_{k, r}\big)
        \big\}\\
        & \cdot \big\{
        \sum_{s=0}^{t-1} \prob \big(\tau_{k}=s\big) \prod_{r=1}^{s} p_{k,r} \big(X_{k, r}\big) \prod_{r=s+1}^{t} q_{k,r} \big(X_{k, r}\big) \\
        & + \prob \big(\tau_{k} \geq t\big) \prod_{r=1}^{t} p_{k,r} \big(X_{k, r}\big)
         \big\}^{-1}
    \end{split}
\end{equation*}
By the definition of $L_{1,(s+1): t}$, we simplify the above result as 
$
g(W_{k,t})
=\frac{\sum_{s=0}^{t-1} \pi_{s} L_{k,(s+1): t} +\pi_{t} }
{\sum_{s=0}^{t-1} \pi_{s} L_{k,(s+1): t}+\bpi_{t} }.
$
Then in light of the relationship between $Q_{k, t}$ and $W_{k, t}$, we further simplify it to 
$
 g(W_{k,t})
=\frac{Q_{k, t}  + \frac{\pi_{t}}{\bpi_{t}}  }{Q_{k, t}+ 1} 
= \frac{\frac{W_{k, t}}{1-W_{k, t}}+\frac{\pi_{t}}{\bpi_{t}}}{\frac{W_{k, t}}{1-W_{k, t}}+1} 
=  W_{k, t}+\frac{\pi_{t}}{\bpi_{t}}\big(1-W_{k, t}\big) 
= \frac{\pi_{t}}{\bpi_{t}} + \big( 1- \frac{\pi_{t}}{\bpi_{t}} \big) W_{k, t} .$
\end{proof}
\bibliographystyle{plainnat}
\bibliography{Ref}

\end{document}